\documentclass[hidelinks,12pt]{article}   
\usepackage{amsthm,amssymb,amsmath}
\usepackage{graphicx}
\usepackage{booktabs}
\usepackage{mathtools}
\usepackage{setspace}
\usepackage{tabularx}
\usepackage{color}
\usepackage{epstopdf}
\usepackage[titletoc]{appendix}
\usepackage{cite}
\usepackage{natbib} 
\usepackage{float} 

\usepackage{footmisc} 
\usepackage{multirow}
\usepackage{afterpage}
\usepackage{bm}
\usepackage{rotating}
\usepackage{centernot} 

\usepackage{enumerate} 
\usepackage[colorlinks=true]{hyperref}   
\hypersetup{
    linkcolor=blue,     
    citecolor=blue,      
    urlcolor=blue    
}

\usepackage{tikz}
\usetikzlibrary{calc}
\usetikzlibrary{decorations.pathreplacing}
\usetikzlibrary{matrix}

\newtheorem{assumption}{Assumption}[section]

\newtheorem{lemma}{Lemma}[section]
\newtheorem{theorem}[lemma]{Theorem}
\newtheorem{corollary}[lemma]{Corollary}
\newtheorem{proposition}[lemma]{Proposition}

\theoremstyle{definition}
\newtheorem{remark}{Remark}[section]

\DeclareMathOperator*{\med}{med}
\DeclareMathOperator*{\sgn}{sgn}

\begin{document}

\title{Using SVM to Estimate and Predict \\ Binary Choice Models\thanks{Preliminary versions of this paper have been circulated since 2022. We thank Ye Lu and seminar and conference participants of the 2023 SoFie Conference in Seoul, Korea, the 2023 MEG Conference at Cleveland Fed, the AiE Conference and Festschrift in Honor of Joon Y. Park, the 2023 Melbourne Econometrics Workshop, Monash University, University of Essex, and 2025 Yale Econometrics Conference in Honor of Don Andrews for their helpful comments. We thank the 2024 Eminent Research Scholarship, University of Melbourne, for funding Joon Y. Park and Yoosoon Chang to visit the University of Melbourne.  
}\medskip}
\date{\today}

\singlespacing
\author{Yoosoon Chang\thanks{Department of Economics,  Indiana University, \href{mailto:yoosoon@iu.edu}{yoosoon@iu.edu}}
\and Joon Y. Park\thanks{Department of Economics,  Indiana University, \href{mailto:joon@iu.edu}{joon@iu.edu}}
\and Guo Yan\thanks{Department of Economics, University of Melbourne, \href{mailto:yan.g@unimelb.edu.au}{yan.g@unimelb.edu.au}} } 
\maketitle

\begin{abstract}\vspace{0.04in}
The support vector machine (SVM) has an asymptotic behavior that parallels that of the quasi-maximum likelihood estimator (QMLE) for binary outcomes generated by a binary choice model (BCM), although it is not a QMLE. We show that, under the linear conditional mean condition for covariates given the systematic component used in the QMLE slope consistency literature, the slope of the separating hyperplane given by the SVM consistently estimates the BCM slope parameter, as long as the class weight is used as required when binary outcomes are severely imbalanced. The SVM slope estimator is asymptotically equivalent to that of logistic regression in this sense. The finite-sample performance of the two estimators can be quite distinct depending on the distributions of covariates and errors, but neither dominates the other. The intercept parameter of the BCM can be consistently estimated once a consistent estimator of its slope parameter is obtained. 
\end{abstract}
\vfill

\singlespacing
\noindent
JEL Classification: C13, C22 \smallskip\\
Keywords and phrases: machine learning, binary choice model, support vector machine, maximum likelihood estimation, estimation and prediction, asymptotics

\newpage
\renewcommand{\thefootnote}{\arabic{footnote}}
\setcounter{footnote}{0}
\onehalfspacing

\section{Introduction}

Machine learning has become increasingly popular in economics. See, e.g., \cite{mullainathan-spiess-17} and \cite{athey-imbens-19}, among many others. In machine learning, a variety of algorithms have been developed to analyze data within a flexible framework without an underlying statistical model. Although they were initially proposed as purely computational algorithms, it has recently been shown that some of them can be used to estimate a well-defined class of statistical and econometric models. For instance, it is now widely recognized that random forests and neural networks can be used to consistently estimate nonparametric regression functions under suitable regularity conditions. For the related asymptotic theories of random forests and neural networks, the reader is referred to, e.g., \cite{chi-vossler-fan-lv-22} and \cite{farrell-liang-misra-19}, respectively, and the references therein.

In this paper, we show that the support vector machine (SVM), another widely used algorithm for classification in machine learning, can be used under appropriate conditions to consistently estimate the binary choice model (BCM). The SVM was originally developed by \cite{cortes-vapnik-95} as an algorithm to classify binary outcomes observed with other related features. It is based on a simple idea of using a hyperplane to partition the feature space into two subsets associated with binary outcomes and has thus been regarded as an algorithm entirely independent of any statistical or econometric model. However, it is natural to consider the SVM as a method for analyzing the BCM, since the way the SVM classifies binary observations is essentially the same as that of a BCM with a zero median error term. Indeed, the SVM and the traditional approaches to BCMs use the same type of classifier, the one given by a separating hyperplane, to predict binary outcomes from covariates. For observations generated by the BCM, the SVM classifier will therefore not be asymptotically optimal unless it consistently estimates the underlying BCM.

The main finding of this paper is that the slope coefficient of the hyperplane provided by the SVM is a consistent estimator for the slope parameter in the BCM under the same conditions as those required for consistency of the quasi-maximum likelihood estimator (QMLE). The estimator provided by the SVM, which is referred to as the SVM estimator in the paper, is thus asymptotically equivalent to the QMLE in this sense.\footnote{The SVM estimator and the QMLE generally do not have the same limit distribution, so they are not fully equivalent even asymptotically. Also, our asymptotic setup does not allow for the number $m$ of features or covariates going to infinity.} This is rather surprising, since the SVM has been considered as an algorithm which does not rely on any statistical or econometric model. Given that the SVM is used widely to analyze binary outcomes in machine learning, this asymptotic equivalence of the SVM estimator to the QMLE adds an interesting and important interpretation of the SVM. We establish consistency of the SVM estimator for the slope parameter without assuming that a pseudo-true value exists. This is essential because, for the SVM estimator, a pseudo-true value may not exist if the two classes defined by a binary outcome are severely imbalanced, in which case the SVM fails and simply predicts a majority class.\footnote{This is well known. As shown in the paper, the problem disappears if we use \emph{class weight}, an option for the SVM adopted widely to deal with imbalanced data.}

The QMLE for the parameter in the BCM is generally inconsistent. It is, however, expected to yield a consistent estimator for the slope parameter in the BCM if the conditional expectation of covariates on its systematic component is given as a linear function. This condition is not as stringent as it may sound. It holds if covariates jointly have an elliptical distribution, and we may give appropriate weights to the observations on covariates so that they may be regarded as being drawn from such a distribution. See, e.g., \cite{ruud-86}, \cite{newey-ruud-94}. Consistent estimation of the slope parameter is of primary interest for the BCM, especially when its dimension is large with many covariates. This is because we may easily estimate the remaining one-dimensional intercept parameter once a consistent estimator for the multi-dimensional slope parameter is obtained. Under the required linearity in conditional expectation, consistency of the slope parameter in the BCM may be easily established if we assume the existence of a pseudo-true value as in \cite{ruud-83,ruud-86}.\footnote{The existence of a pseudo-true value is not guaranteed. For a rigorous proof of his result, see \cite{chang-park-yan-25}.}

The SVM estimator is comparable in many respects to the logit estimator, which is most commonly used QMLE especially for the large dimensional BCM, except that it breaks down in the extreme case where severe class imbalance exists. By simulation, we compare the SVM estimator with the logit estimator in terms of their finite sample performance. In finite samples, the performances of the two estimators could be quite distinct depending upon the distributions of covariates and errors. It seems clear that no one dominates the other. Though their performances are largely comparable, the SVM estimator performs significantly better than the logit estimator in some cases and vice versa in others.

Once a consistent estimator for the slope parameter is obtained, we may consistently estimate the intercept parameter in various ways. We may estimate the intercept parameter using, for example, the maximum score estimator of \cite{manski-85} or its smoothed version given by \cite{horowitz-92}, by replacing the slope parameter with its consistent estimate. In fact, we show in the paper that replacing the true slope parameter by its SVM estimator does not affect the limit distribution of the maximum score estimator for the intercept parameter.

The rest of the paper is organized as follows. In Section 2, we introduce the SVM and BCM, which is followed by a framework to analyze the classification of the BCM outcomes by the SVM. Section 3 provides the asymptotic properties of the SVM estimator, and discusses the conditions under which it is consistent for the slope parameter in the BCM. 
It also explains how to obtain a consistent estimator for the intercept parameter, once a consistent estimator for the slope parameter is available, in the BCM.
In Section 4, we contrast the SVM with the QMLE. We show that the SVM cannot be defined as a QMLE, even though the SVM and QMLE are largely comparable. Moreover, the effect of severe imbalance in binary outcomes is fatal to the SVM unless it is applied with class weights, while its effect is not so detrimental for the QMLE. Section 5 concludes the paper. Mathematical proofs are provided in the Appendix.

A word on notation. 
We let $\mathcal L(Z)$ signify the distribution of $Z$ for a random vector $Z$, and $\mathcal L(Z\mid W=w)$ signify the distribution of $Z$ conditional on another random vector $W$ taking the value $w$. Other standard notations include $1\{\cdot\}$ for the indicator function, $\|\cdot\|$ for the Euclidean norm, $\to_p$ for convergence in probability, and $o_p(1)$ as a generic notation for a term converging in probability to zero, among others, which will also be used. 
We also use the notation $f(\infty) := \lim_{t\to \infty} f(t)$ or $f(-\infty):=\lim_{t\to -\infty} f(t)$ when the limit is well-defined. 
These notations are used throughout the paper without any further reference.

\section{Preliminaries}\label{section-preliminaries}

Let $Y$ be a binary random variable taking values $1$ and $-1$, and $X$ be an $m$-dimensional random covariate which has the distribution $\mathcal L(X)$ over the support $\mathcal{X}$. We define $(Y_i,X_i)$, $i=1,\ldots,n$, to be an i.i.d.\ random sample from $(Y,X)$.

\subsection{Support Vector Machine}

The \emph{support vector machine} (SVM) is used widely in machine learning for classification problems.\footnote{We only consider the prototypical SVM without kernel trick throughout this paper; this restriction is mainly intended to relate} the SVM to the binary choice model.
The methodology does not assume any underlying model, and it can be used to analyze any binary-valued $(Y_i)$, $Y_i=1$ or $-1$, associated with an $m$-dimensional set of features $(X_i)$ for $i=1,\ldots,n$. More precisely, the SVM finds a hyperplane
\begin{equation}\label{svm-hyperplane}
\mathcal P = \left\{x\in\mathbb R^m\big|\alpha + x'\beta = 0\right\}
\end{equation}
defined by $\alpha\in\mathbb R$ and $\beta\in\mathbb R^m$, which most effectively separates the two subsets of $(X_i)$, one associated with $Y_i = 1$ and the other with $Y_i = -1$, depending upon which side of the hyperplane each $X_i$ falls on.

The SVM was first introduced in the case where $(X_i)$ coupled with $Y_i=1$ and $(X_i)$ coupled with $Y_i=-1$ can be separated in the space $\mathbb R^m$ by a hyperplane. In this case, the SVM method, which will be referred to as the \emph{hard-margin SVM}, chooses a hyperplane defined by $\alpha\in\mathbb R$ and $\beta\in\mathbb R^m$ given by the maximization problem
\begin{equation}\label{svm-hard-margin}
\max_{\alpha,\beta}\ M \quad \textrm{subject to} \quad
\frac{Y_i(\alpha + X_i'\beta)}{\|\beta\|} \geq M
\end{equation}
for all $i=1,\ldots,n$. Provided that $Y_i = \sgn(\alpha + X_i'\beta)$, $Y_i(\alpha + X_i'\beta)/\|\beta\| = |\alpha + X_i'\beta|/\|\beta\|$ is the distance of $X_i$ to the hyperplane. We call $M$ in \eqref{svm-hard-margin} the \emph{margin} between the hyperplane and $(X_i)$. Clearly, we may positively scale $\alpha$ and $\beta$ arbitrarily in \eqref{svm-hard-margin}, and thus, we may set $\|\beta\| = 1/M$ so that \eqref{svm-hard-margin} reduces to
\begin{align}\label{svm-hard-margin-aux}
\min_{\alpha,\beta}\ \frac{1}{2}\|\beta\|^2 \quad \textrm{subject to} \quad  Y_i(\alpha + X_i'\beta) \geq 1
\end{align}
for all $i=1,\ldots,n$. The hard-margin SVM is usually defined by \eqref{svm-hard-margin-aux}.

The two subsets of $(X_i)$, which are associated with $Y_i = 1$ and $Y_i=-1$, respectively, are rarely separable in practical applications. In this case, we use the \emph{soft-margin SVM}, which introduces slack variables $(\tau_i)$ and solves
\begin{equation}\label{svm-soft-margin}
\max_{\alpha,\beta,(\tau_i)}\ M \quad \textrm{subject to} \quad \frac{Y_i(\alpha + X_i'\beta)}{\|\beta\|} \geq M(1-\tau_i),\ \tau_i\ge 0
\end{equation}
for all $i=1,\ldots,n$ in the presence of an additional constraint $\sum_{i=1}^n\tau_i < \lambda_n$ with some constant $\lambda_n>0$ replacing \eqref{svm-hard-margin}. As before, we may rewrite \eqref{svm-soft-margin} as
\begin{equation}\label{svm-soft-margin-alt}
\min_{\alpha,\beta,(\tau_i)}\ \frac 12\|\beta\|^2 + \lambda_n\sum_{i=1}^n\tau_i
\quad\mbox{subject to}\quad
Y_i(\alpha + X_i'\beta)\ge 1-\tau_i,\ \tau_i\ge 0,
\end{equation}
for $i=1,\ldots,n$, by taking the additional constraint into consideration and redefining $\lambda_n$ appropriately. Once again, we may redefine $\lambda_n$ appropriately and reformulate the minimization in \eqref{svm-soft-margin-alt} as
\begin{equation}\label{svm-soft-margin-aux}
\min_{\alpha,\beta}\ \sum_{i=1}^n \big[1- Y_i(\alpha + X_i'\beta) \big]_+ +\lambda_n\|\beta\|^2,
\end{equation}
where $[\,\cdot\,]_+ = \max\,\{\cdot,0\}$. Note that the two constraints appearing in \eqref{svm-soft-margin-alt} may be combined and rewritten as
\[
\tau_i \ge \big[1-Y_i(\alpha + X_i'\beta)\big]_+
\]
and we set
\[
 \tau_i = \big[1-Y_i(\alpha + X_i'\beta)\big]_+
\]
to derive \eqref{svm-soft-margin-aux} from \eqref{svm-soft-margin-alt}. For the analysis of the SVM in this paper, we will focus on the soft-margin case, for which the optimization problem is given by \eqref{svm-soft-margin-aux}. 

\subsection{Binary Choice Model}  

A natural setting to generate a binary outcome is to use the binary choice model (BCM), which is given by 
\begin{equation}\label{bcm-linear}
 Y = \sgn\,(Y^\ast)
\quad\mbox{and}\quad
 Y^\ast = \alpha_0 + X'\beta_0 - U,
\end{equation}
where $\sgn$ is the sign function defined as $\sgn\,(z) = \pm 1$ for $z\ge 0$ and $z<0$, respectively, $X$ is an $m$-dimensional vector of covariates, $\theta_0 = (\alpha_0,\beta_0')'$ is the true value of the $(1+m)$-dimensional parameter vector $\theta = (\alpha,\beta')'$, and $U$ is the error term.  
Throughout the paper, we let $\theta\in\Theta$, where $\Theta$ is a convex subset of $\mathbb R^{1+m}$ that is arbitrarily large.\footnote{This is necessary because we identify $\theta_0$ only up to scalar multiplication in the derivation of our main result. Our definition of $\Theta$ is essentially equivalent to setting $\Theta = \mathbb R^{1+m}$. In this paper, however, we define $\Theta$ separately from $\mathbb R^{1+m}$ to make our subsequent assumptions more directly comparable to those existing in the literature.}

Since we focus on the consistency of the slope coefficient up to a positive scalar—referred to as slope consistency for simplicity—we impose conditions to ensure that $\theta_0$ is identified up to a positive scalar.

\begin{assumption}\label{aspn-med0-error} 
$\med\,(U|X) = 0$ almost surely in $\mathcal L(X)$. 
\end{assumption}

\begin{assumption}\label{aspn-pos-density-Xm}
(a) $X_m$ has a nonzero coefficient, and the distribution of $X_m$ conditional on $X_{-m}$ has everywhere positive Lebesgue density almost surely in $\mathcal L(X_{-m})$, 
where $X = (X_1,\ldots,X_m)'$ and $X_{-m} = (X_1,\ldots,X_{m-1})'$,
(b) $0 < \mathbb{P}\{Y=1|X \} <1$ almost surely in $\mathcal{L}(X)$, and (c) the support $\mathcal X$ of $\mathcal{L}(X)$ is not contained in any proper linear subspace of $\mathbb{R}^m$.
\end{assumption}

\noindent
Assumptions \ref{aspn-med0-error} and \ref{aspn-pos-density-Xm} are imposed for the BCM in \cite{horowitz-92}, modified from \cite{manski-75,manski-85}, which ensure that $\theta_0 = (\alpha_0,\beta_0')'$ is identified up to multiplication by a positive scalar. Both of them are assumed to hold throughout what follows.

\subsection{Classification of BCM Outcomes by SVM}

To predict a binary outcome $Y$, the SVM uses a classifier
\begin{equation}\label{classifier-svm-bcm}
 C(x) = \sgn\big(\alpha + x'\beta\big)
\end{equation}
with parameter $\theta = (\alpha,\beta')'$, which is formally defined as a function on $\mathcal X$ associating any realization of $X$ with a value of $Y$ taking values $\pm 1$. To evaluate a classifier $C$, we define its \emph{conditional risk} as 
\[
 R(x,C) = \mathbb P\big\{Y\ne C(X)\big|X=x\big\}
\]
for $x\in \mathcal X$, and its \emph{risk} as $R(C) = \int R(x,C)\,P(dx)$, where $P$ is the distribution of $X$. An \emph{optimal classifier} $C_0$ is defined as a classifier that has the smallest risk.

We can readily see that an optimal classifier is given by
\[
 C_0(x) = \sgn\big[\Pi(x)-1/2\big]
\]
for $x\in\mathcal X$, where $\Pi$ is the conditional choice probability (CCP) defined as $\Pi(x) = \mathbb P\big\{Y=1\big|X=x\big\}$ for $x\in\mathcal X$. This is because
\[
 R(x,C) = \Pi(x)1\big\{C(x) = -1\big\} + \big(1-\Pi(x)\big)1\big\{C(x) = 1\big\}
\]
is minimized if we set $C(x) = \pm 1$ for $x\in\mathcal X$ such that $\Pi(x)\gtrless 1/2$. 

For the BCM, the CCP is given by 
\[
 \Pi(x) = \mathbb P\big\{U<\alpha_0 + X'\beta_0\big|X=x\big\}
 = F_x\big(\alpha_0 + x'\beta_0\big),
\]
where $F_x$ is the distribution function of $U$ conditional on $X=x$ and $\theta_0 = (\alpha_0,\beta_0')'$ is the true value of the parameter $(\alpha,\beta')'$. However, under Assumption \ref{aspn-med0-error}, we have $F_x(u) \geq 1/2$ if and only if $u \geq 0$ and $F_x(u) < 1/2$ if and only if $u < 0$, which yields
\[
 \Pi(x) \begin{array}{c}\geq\\ <\end{array} 1/2 
\quad\mbox{if and only if}\quad 
 \alpha_0 + x'\beta_0 \begin{array}{c}\geq\\ <\end{array} 0 
\]
with $u = \alpha_0 + x'\beta_0$. The optimal classifier $C_0$ for the BCM is therefore given by
\begin{equation}\label{classifier-bcm}
 C_0(x) = \sgn\big(\alpha_0 + x'\beta_0\big),
\end{equation}
which belongs to the same class of classifiers as the SVM introduced in \eqref{classifier-svm-bcm}.

Suppose that observations are drawn from the BCM so that the optimal classifier is given by \eqref{classifier-bcm}, and that the SVM applied to those observations yields the classifier in \eqref{classifier-svm-bcm} with
\begin{equation}\label{theta-svm}
 \hat\theta = \big(\hat\alpha,\hat\beta'\big)',
\end{equation}
where $\hat\alpha$ and $\hat\beta$ define the separating hyperplane in the SVM. For the SVM classifier to be asymptotically optimal, $\hat\theta$ should converge in probability to $\theta_0$ up to a positive scale factor so that the separating hyperplane given by the SVM approaches the true hyperplane defined by $\theta_0$.\footnote{Here, asymptotic optimality of $\hat C$ means $\hat C(x) \to_p C_0(x)$ for $P$-almost every $x$.} Accordingly, we may use $\hat\theta$ as an estimator for $\theta_0$, recognizing that $\theta_0$ is identified only up to multiplication by a positive scalar.

For the rest of the paper, we assume that $(Y,X)$ is generated by the BCM unless stated otherwise explicitly, and consider $\hat\theta$ introduced in \eqref{theta-svm} as an estimator, called the \emph{SVM estimator}, of the parameter $\theta_0$ in the underlying BCM. As will be shown, $\hat\theta$ is consistent for $\theta_0$ under suitable regularity conditions. The required conditions for consistency of the SVM estimator $\hat\theta$, including both $\hat\alpha$ and $\hat\beta$ for the intercept and slope parameters $\alpha_0$ and $\beta_0$, respectively, are quite stringent. Consistency of the slope estimator $\hat\beta$ for the slope parameter $\beta_0$ is, however, much milder and only requires conditions that we typically impose for consistency of the quasi-maximum likelihood estimator (QMLE).


\section{Asymptotics for SVM Estimator}\label{section-svm-asymptotics}

In this section, we develop the asymptotic properties of the SVM estimator $\hat\theta = (\hat\alpha,\hat\beta')'$ for the underlying BCM model.

\subsection{Asymptotic Framework}

Before establishing the consistency of the SVM estimator $\hat\theta = (\hat\alpha,\hat\beta')'$ for the BCM parameter $\theta_0 = (\alpha_0,\beta_0')'$ under suitable conditions, we start by defining the probability limit $\theta_\ast = (\alpha_\ast,\beta_\ast')'$ of the SVM estimator without assuming that $(Y,X)$ is generated by the BCM. 

The SVM yields the solution $\hat\theta = (\hat\alpha,\hat\beta')'$ defined as a minimizer of
\begin{equation}\label{eq-Qn}
Q_n(\theta) = \frac{1}{n} \sum_{i=1}^n \big[1- Y_i(\alpha + X_i'\beta)\big]_+ +\frac{\lambda_n}{n} \|\beta\|^2,
\end{equation}
which is assumed to exist and be unique. 
For our asymptotics in the paper, we follow \cite{koo-lee-kim-park-08}, \cite{zhang-wu-wang-li-16} and \cite{wang-yang-chen-liu-19}, among others, and assume that $\lambda_n = o(n)$.\footnote{By setting $\lambda_n$ to be large, we make it less costly to increase the margin and allow for the observations in the margin and even on the other side. In practice, it is common to set $\lambda_n$ to a fixed value. For instance, when using existing packages in Python (\texttt{sklearn.svm.LinearSVC}) or MATLAB (\texttt{fitcsvm}), the default choice is $\lambda_n = 1/2$, under which conditions $\lambda_n = o(n)$ and $\lambda_n = o(\sqrt{n})$ used in this paper are satisfied. 
\label{footnote-lambda-n}
} Under this and other suitable regularity conditions, we may then expect that $Q_n(\theta)\to_p Q(\theta)$, where
\begin{equation}\label{eq-Q}
Q(\theta) =\mathbb{E}\big[1-Y(\alpha+X'\beta)\big]_+,
\end{equation}
uniformly in $\theta$ over some parameter set $\Theta$, from which it follows immediately that $\hat\theta \to_p \theta_\ast$ if $Q(\theta)$ has a unique minimum $\theta_\ast$.

\begin{assumption}\label{aspn-regularity} 
(a) $Q$ has a unique minimum $\theta_\ast$ which is an interior point of $\Theta$, 
(b) $\mathbb E \|X\| <\infty$. 
\end{assumption}

\begin{lemma}\label{lemma-plim-theta}
Let Assumption \ref{aspn-regularity} hold.  
Then $\hat\theta \to_p \theta_\ast$. 
\end{lemma}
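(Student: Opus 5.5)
The plan is to use the standard convexity argument for M-estimators, as in the convexity lemma of Pollard (1991) or Hjort and Pollard (1993). This avoids any need for compactness of $\Theta$ or uniform convergence over an unbounded set.

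First, the objects involved are convex. For each $i$, the map $\theta\mapsto[1-Y_i(\alpha+X_i'\beta)]_+$ is convex, being the positive part of an affine function. The penalty $(\lambda_n/n)\|\beta\|^2$ is also convex. Hence $Q_n$ is a convex random function on the convex set $\Theta$, and $Q$ is convex as well. Moreover $Q(\theta)$ is finite everywhere, since
\[
[1-Y(\alpha+X'\beta)]_+\le 1+|\alpha|+\|X\|\|\beta\|
\]
and $\mathbb E\|X\|<\infty$ by Assumption \ref{aspn-regularity}(b).

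Second, I would establish pointwise convergence. Fix $\theta\in\Theta$. The weak law of large numbers applies to the i.i.d.\ integrable summands and gives
\[
n^{-1}\sum_{i=1}^n[1-Y_i(\alpha+X_i'\beta)]_+\to_p Q(\theta).
\]
The penalty satisfies $(\lambda_n/n)\|\beta\|^2\to0$ because $\lambda_n=o(n)$. Therefore $Q_n(\theta)\to_p Q(\theta)$ for each $\theta$.

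Third, pointwise convergence is upgraded to uniform convergence on compacts. By the convexity lemma, pointwise convergence in probability of convex functions to a finite convex limit on an open convex set implies
\[
\sup_{\theta\in K}|Q_n(\theta)-Q(\theta)|\to_p0
\]
for every compact $K$ in that set. I would take $K$ to be a closed ball $\bar B(\theta_\ast,\delta)$ lying in the interior of $\Theta$; such a ball exists because $\theta_\ast$ is an interior point by Assumption \ref{aspn-regularity}(a). If I wanted to avoid citing the lemma, the same conclusion follows by checking convergence on a finite grid and using the Lipschitz bound that convex functions enjoy on slightly smaller sets.

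Fourth, I would localize the minimizer. Since $Q$ is convex with a unique minimum $\theta_\ast$, it is continuous, and
\[
\epsilon:=\min_{\|\theta-\theta_\ast\|=\delta}Q(\theta)-Q(\theta_\ast)>0,
\]
because the sphere is compact and the minimizer is unique. On the event $\sup_{K}|Q_n-Q|<\epsilon/2$, which has probability tending to one, every $\theta$ on the sphere satisfies $Q_n(\theta)>Q_n(\theta_\ast)$. Now take any $\theta$ with $\|\theta-\theta_\ast\|>\delta$. The segment from $\theta_\ast$ to $\theta$ crosses the sphere at some point $\bar\theta=t\theta_\ast+(1-t)\theta$ with $t\in(0,1)$. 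Convexity of $Q_n$ gives
\[
Q_n(\bar\theta)\le tQ_n(\theta_\ast)+(1-t)Q_n(\theta),
\]
and combining this with $Q_n(\bar\theta)>Q_n(\theta_\ast)$ yields $Q_n(\theta)>Q_n(\theta_\ast)$. So no point outside the ball can minimize $Q_n$, which means $\|\hat\theta-\theta_\ast\|\le\delta$ with probability tending to one. Since $\delta$ can be taken arbitrarily small, $\hat\theta\to_p\theta_\ast$.

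The main obstacle is the third step, the passage from pointwise to uniform convergence on compacts. This is exactly where convexity substitutes for a compact parameter space and an envelope condition, and it is the only step that is not routine. I would handle it by citing the convexity lemma directly, noting that its requirement of a finite convex limit on an open set is met by the interior-point condition and $\mathbb E\|X\|<\infty$.
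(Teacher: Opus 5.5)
Your proof is correct and is essentially the paper's argument: the paper simply invokes Theorem 2.7 of Newey and McFadden (1994), which requires a unique minimum at an interior point of a convex $\Theta$, convex $Q_n$, and pointwise convergence in probability. Your four steps verify those hypotheses (WLLN using $\mathbb E\|X\|<\infty$, plus $\lambda_n=o(n)$) and then write out that theorem's own proof, namely the convexity lemma for uniform convergence on compacts followed by the segment/localization argument.
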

\noindent 
Lemma \ref{lemma-plim-theta} follows immediately from Theorem 2.7 in \cite{newey-mcfadden-94}, since all conditions required in their theorem are trivially satisfied under our Assumption \ref{aspn-regularity}. In particular, note that $Q_n(\theta)$ and $Q(\theta)$ are convex in $\theta\in \Theta$.

Lemma \ref{lemma-plim-theta} assumes that $\theta_\ast$ is the unique minimizer of $Q(\theta)$ over $\Theta$. 
This assumption will not be imposed but verified under the conditions for the slope consistency introduced below. In particular, provided the slope consistency, i.e., $\beta_\ast = c_\ast \beta_0$ for some $c_\ast > 0$, Lemma \ref{lemma-gradient-hessian-svm} below shows that the Hessian matrix of $Q$ at $\theta_\ast$ is positive definite. It then follows that $\theta_\ast$ is the unique minimizer of $Q$.

\paragraph{Gradient and Hessian for SVM when well-defined } 
For the later analysis of SVM slope consistency, we will make use of the gradient of $Q$. Define 
\begin{align}\label{gradient-svm}
\dot{Q}(\theta) = - \mathbb{E}\,1\big\{1-Y(\alpha+X'\beta) >0\big\}\,
Y\!\begin{pmatrix} 1 \\ X \end{pmatrix},
\end{align}
and 
\begin{align}\label{hessian-svm}
\ddot{Q}(\theta) =  \mathbb{E}\,\delta\big(1-Y(\alpha+X'\beta)\big)  \begin{pmatrix}
1 & X' \\ X & XX'
\end{pmatrix},
\end{align}
where $\delta(\cdot)$ denotes the Dirac delta function with $\delta(z) = (d/dz) 1\{z>0\}$ in the distributional sense. For any $\theta = (\alpha,\beta')'\in \mathbb R^{1+m}$ such that $\beta_m\neq 0$, $\alpha+X'\beta$ has a Lebesgue density $f_{\alpha + X'\beta}$ by Assumption \ref{aspn-pos-density-Xm} (a), and we may thus define $\ddot Q(\theta)$ more explicitly as
\begin{align*}
\ddot{Q}(\theta) & = f_{\alpha+X'\beta}(1) \mathbb{E}\, \left[ 1\{Y = 1 \} \begin{pmatrix} 1 & X' \\ X & XX' \end{pmatrix}
\bigg| \alpha + X'\beta = 1 \right] \\
& \quad + f_{\alpha+X'\beta}(-1) \mathbb{E}\, \left[ 1\{Y = -1 \} \begin{pmatrix} 1 & X' \\ X & XX' \end{pmatrix}
\bigg| \alpha + X'\beta = -1 \right]. 
\end{align*}

Under some regularity conditions, which will be provided below, $\dot{Q}(\theta)$ and $\ddot{Q}(\theta)$ actually become the gradient vector and Hessian matrix of $Q(\theta)$, respectively, for $\theta\in\mathbb R^{1+m}$ such that $\beta_m \neq 0$.  

\begin{assumption}\label{aspn-svm-hessian}
We assume that (a) $\mathbb E\|X\|^2<\infty$, (b) the Lebesgue density $p(x_m|x_{-m})$ of $\mathcal{L}(X_m|X_{-m}=x_{-m})$ is uniformly bounded, i.e., $p(x_m|x_{-m}) < K$ for some constant $K$ independent of $(x_m,x_{-m})$, and $p(x_m|x_{-m})$ is continuous in $x_m$, and (c) $\mathbb P\big\{Y=1\big|X=x\big\}$ is continuous in $x_m$.
\end{assumption}

\noindent
If $(Y,X)$ is generated by the BCM, Assumption \ref{aspn-svm-hessian} (c) requires the continuity of the CCP $\Pi(x) = F_{x}(\alpha_0+x'\beta_0)$ in $x_m$, where $F_{x}$ is the distribution function of $U$ conditional on $X=x$ and $x=(x_{-m}',x_m)'$.

\begin{lemma}\label{lemma-gradient-hessian-svm} 
Let $\theta = (\alpha,\beta')' \in \mathbb R^{1+m}$ be arbitrary with $\beta_m \neq 0$. 
Under Assumption \ref{aspn-pos-density-Xm} (a),\footnote{Here, we only need the existence of a Lebesgue density, without requiring it to be positive almost everywhere. 
}
the gradient vector of $Q$ at $\theta$ is given by $\dot{Q}(\theta)$. Under Assumptions \ref{aspn-pos-density-Xm} and \ref{aspn-svm-hessian}, the Hessian matrix of $Q$ at $\theta$ is given by $\ddot{Q}(\theta)$, and $\ddot{Q}(\theta)$ is continuous and positive definite.
\end{lemma}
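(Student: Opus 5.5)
The plan is to treat the hinge loss $h(z)=[1-z]_+$ pathwise and pass derivatives through the expectation by dominated convergence. Write $Z=(1,X')'$, so that $Q(\theta)=\mathbb E\,h(Y Z'\theta)$.

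\emph{Gradient.} For each realization, $h(YZ'\theta)$ is convex and Lipschitz in $\theta$ with constant $\|Z\|$. Its directional difference quotients are therefore dominated by $\|Z\|$, which is integrable by Assumption \ref{aspn-regularity}(b). (If only Assumption \ref{aspn-pos-density-Xm}(a) is to be cited, I would note that $\mathbb E\|X\|<\infty$ is needed anyway for $Q$ to be finite.) The pathwise derivative equals $-1\{1-YZ'\theta>0\}YZ$ except on the event $\{YZ'\theta=1\}\subset\{\alpha+X'\beta=\pm1\}$. Because $\beta_m\neq 0$ and $X_m\mid X_{-m}$ has a Lebesgue density, $\alpha+X'\beta$ is absolutely continuous, so this event has probability zero. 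Dominated convergence then shows that every directional derivative of $Q$ at $\theta$ is linear in the direction and given by $\dot Q(\theta)'d$. Since $Q$ is convex, this gives differentiability with gradient $\dot Q(\theta)$.

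\emph{Hessian.} Next I will differentiate $\dot Q$ on the open set $\{\beta_m\neq0\}$. Condition on $X_{-m}$ and change variables from $x_m$ to $s=\alpha+x_{-m}'\beta_{-m}+\beta_m x_m$. This gives
\[
\dot Q(\theta)=-\mathbb E\Big[\int \big(\Pi(x)1\{s<1\}-(1-\Pi(x))1\{s>-1\}\big)\, Z\, p(x_m|X_{-m})\,|\beta_m|^{-1}\,ds\Big],
\]
with $x_m=x_m(s)$. The boundaries of the integration regions are now fixed at $s=\pm1$, and all dependence on $\theta$ sits in the smooth map $s\mapsto x_m(s;\theta)$. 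Differentiating under the integral, via Leibniz' rule, produces boundary terms at $s=1$ and $s=-1$. Their $\theta$-derivative picks up exactly $Z'$ times the density of $\alpha+X'\beta$ at $\pm1$, weighted by $\Pi$ and $1-\Pi$ respectively. These are the two conditional-expectation terms in $\ddot Q(\theta)$.

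It is cleaner to do this directly: write $\dot Q(\theta+d)-\dot Q(\theta)$ as the expectation of $YZ$ times the indicator of the thin slab between the hyperplanes $YZ'\theta=1$ and $YZ'(\theta+d)=1$. Dividing by $\|d\|$ and using the bounded, continuous conditional density together with the continuity of $\Pi$ in $x_m$ (Assumption \ref{aspn-svm-hessian}(b),(c)) gives the limit $\ddot Q(\theta)d$. The domination comes from $K\,\mathbb E\|Z\|^2/|\beta_m|$, which is finite by Assumption \ref{aspn-svm-hessian}(a). The same representation, in which $\ddot Q(\theta)$ is an integral over $x_{-m}$ of $p(x_m^\pm|x_{-m})|\beta_m|^{-1}$ times $\Pi$ or $1-\Pi$ times $ZZ'$, with $x_m^\pm$ determined by $s=\pm1$, shows that $\ddot Q$ is continuous in $\theta$. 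This again uses continuity in $x_m$ plus dominated convergence.

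\emph{Positive definiteness.} For $v\neq0$,
\[
v'\ddot Q(\theta)v=|\beta_m|^{-1}\,\mathbb E_{X_{-m}}\Big[\Pi\,p\,(v'Z)^2\big|_{s=1}+(1-\Pi)\,p\,(v'Z)^2\big|_{s=-1}\Big].
\]
Both $p>0$ (Assumption \ref{aspn-pos-density-Xm}(a)) and $0<\Pi<1$ (Assumption \ref{aspn-pos-density-Xm}(b)) hold, so this quantity vanishes only if $v'Z=0$ for almost every $x_{-m}$ on both hyperplanes $\alpha+x'\beta=\pm1$. Take two points with the same $x_{-m}$, one on each hyperplane; they differ only in $x_m$. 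Then $v'Z=0$ at both forces $v_m=0$. With $v_m=0$, $v'Z$ is an affine function of $x_{-m}$ vanishing $\mathcal L(X_{-m})$-a.s., and Assumption \ref{aspn-pos-density-Xm}(c) then forces $v=0$.

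The main obstacle is the Hessian step: rigorously justifying differentiation through moving indicator boundaries. I expect the continuity assumptions on $p$ and $\Pi$ in $x_m$ to be exactly what is needed there. A secondary subtlety is that the "almost every" statements in the positive-definiteness step need the continuity of $p$ and $\Pi$ to hold pointwise rather than only a.e.
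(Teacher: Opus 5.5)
Your proposal is correct. The gradient and Hessian steps follow the paper's route, and the positive-definiteness step differs.

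\textbf{Gradient and Hessian.} The paper also conditions on $X_{-m}$. It writes the increment of $\dot Q$ as an integral of $p\,\Pi$ (resp.\ $p(1-\Pi)$) over an $O(t)$-interval in $x_m$ adjacent to the point where $\alpha+x'\beta=\pm1$. It then takes the limit using continuity in $x_m$, the bound $K$, and dominated convergence. The paper works coordinate by coordinate, with separate cases $k\neq m+1$ and $k=m+1$. For the gradient it computes only partial derivatives, so your convexity remark (linear directional derivatives of a finite convex function give differentiability) supplies a step the paper leaves implicit.

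Drop the change-of-variables/Leibniz sketch, however. Once you substitute $s$, the limits $\pm1$ are fixed, so no boundary terms appear. The derivative would instead fall on $p$ and $\Pi$ through $x_m(s;\theta)$, and these are only assumed continuous. The slab argument is the one that works. Write its dominating function in terms of $x_{-m}$, since on the slab $x_m$ is determined by $x_{-m}$; for example $C(\theta,d)\,K(1+\|X_{-m}\|)^2$, which is integrable because $\mathbb E\|X\|^2<\infty$.

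\textbf{Positive definiteness.} Here you take a genuinely different route. The paper writes $\theta_\bullet'\ddot Q(\theta)\theta_\bullet$ as two terms weighted by $f_{\alpha+X'\beta}(\pm1)$ and splits on whether $\beta_\bullet\in\mathrm{span}(\beta)$. In the span case it uses the two index values $+1$ and $-1$ to force $c_1=c_2=0$. Otherwise it applies Assumption 2.2(a),(c) on the single hyperplane $\alpha+X'\beta=1$. You instead pair the two boundary points sharing the same $x_{-m}$, which differ by $2/\beta_m$ in $x_m$. This kills $v_m$ immediately and avoids the case split. It uses both quadratic terms rather than one, which costs nothing since both are nonnegative.

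Two caveats apply equally to the paper's argument.
\begin{enumerate}[(i)]
\item The final step needs $X$ not to be supported on a proper \emph{affine} subspace. For $m\ge2$ and $X_1\equiv1$, condition (c) as literally stated holds, yet $v=(1,-1,0,\dots,0)'$ gives $v'Z\equiv0$.
\item Positivity of the weights requires $0<\Pi<1$ at the boundary points themselves. Continuity in $x_m$ does not extract this from an almost-everywhere statement, so Assumption 2.2(b) has to be read pointwise.
\end{enumerate}
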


\noindent
Lemma \ref{lemma-gradient-hessian-svm} was established earlier by \cite{koo-lee-kim-park-08} and has been used in later work on SVMs, such as \cite{zhang-wu-wang-li-16}. Their results, however, require that $\mathcal{L}(X|Y=1)$ and $\mathcal{L}(X|Y=-1)$ have Lebesgue densities, which is rather restrictive since it does not allow for the presence of discrete variables in $X$. In contrast, our assumptions do not impose such restrictions on $(Y,X)$, requiring only that the scalar index admit a Lebesgue density.

\subsection{Consistency of SVM Estimator}\label{section-consistency}

Subsequently, we assume that $(Y,X)$ is generated by the BCM with the true parameter value $\theta_0$. The SVM slope $\hat\beta$ is consistent for $\beta_0$ if and only if the probability limit $\beta_\ast$ of $\hat\beta$ is a positive multiple of $\beta_0$, and $\hat\theta$ is consistent for $\theta_0$ if and only if $\theta_\ast$ is a positive multiple of $\theta_0$. In this section, we consider the conditions for slope consistency and consistency of the SVM estimator, with a main focus on the slope consistency.  

Let 
\[
V = \alpha_0 + X'\beta_0
\]
and 
\[
\Pi(v) = \mathbb P\{Y=1 |V=v\}  = \mathbb P\{U\leq V|V=v\}. 
\]
We introduce two important assumptions that are crucial for the slope consistency of the SVM, which are also imposed for the slope consistency of the QMLE as in \cite{chang-park-yan-25}. 

\begin{assumption}\label{id-dep} $\mathcal L(U|X) = \mathcal L(U|V)$.
\end{assumption}

\noindent
Assumption \ref{id-dep} requires that the error distribution depends on $X$ only through the index $V$. Such an \emph{index dependence} condition for the error distribution is often used in the literature. It is used in, e.g., \cite{klein-spady-93} for the BCM, and \cite{ichimura-93} for a class of single-index models, among many others. This condition is satisfied trivially if $U$ is independent of $X$, as is often assumed for the BCM.

\begin{assumption}\label{ld-mean} 
$\mathbb E(X|V) = aV + b$ for some $a,b \in \mathbb R^m$.
\end{assumption}

\noindent
Assumption \ref{ld-mean} requires that the conditional mean of $X$ given $V$ is a linear function of $V$, which will be referred to simply as the \emph{linearity in expectation} condition. This condition is certainly restrictive, but it holds, for instance, if $X$ has an elliptical distribution. 
We may also weight observations of $X$ appropriately so that the distribution of the weighted observations can be regarded as satisfying this condition; 
see, e.g., \cite{ruud-86} and \cite{newey-ruud-94}.\footnote{
Specifically, in place of the original observations $\{x_i\}_{i=1}^n$ of $X$, they use the observations weighted by $w_i = \sigma(x_i)/\hat\tau(x_i)$ for $i=1,\ldots,n$, where $\sigma$ is the standard multivariate normal density which satisfies Assumption~\ref{ld-mean}, and $\hat\tau$ is a local kernel density estimator of $X$. The weighted observations may be approximately regarded drawn from a distribution with density $\sigma$.  
} 

To analyze the SVM slope consistency, we consider the SVM over a restricted set of parameters $\theta = (\alpha,\beta')'$ specified as 
\begin{equation}\label{res}
 \begin{pmatrix} \alpha \\ \beta \end{pmatrix} 
 = c \begin{pmatrix} \alpha_0 \\ \beta_0 \end{pmatrix} 
 + \begin{pmatrix} r \\ 0 \end{pmatrix} 
\end{equation}
with $c,r\in \mathbb R$. The resulting estimator will be referred to as the \emph{restricted} SVM, whose population objective function is 
\begin{align*}
Q_\bullet (c,r) = \mathbb E \big[1- Y(cV + r)\big]_+. 
\end{align*}
Note that by Lemma~\ref{lemma-gradient-hessian-svm}, for any $c\neq 0, r\in \mathbb R$, the gradient of $Q_\bullet(c,r)$ is 
\begin{align*}
\dot Q_\bullet(c,r) & = - \mathbb{E}\Big(  1\{U\leq V,\,c  V+r < 1\}  - 1\{U>V,\,c V+r > -1\} \Big)\begin{pmatrix} 1 \\ V \end{pmatrix} \\
& = - \mathbb{E}\Big( \Pi(V) 1\{c V+r < 1\}  - (1-\Pi(V) ) 1\{cV+r > -1\} \Big)\begin{pmatrix} 1 \\ V \end{pmatrix} .
\end{align*} 
Let 
\begin{align*}
(c_\ast,r_\ast) = \arg\min_{c,r} Q_\bullet (c,r). 
\end{align*} 
Provided that $c_\ast > 0$, $(c_\ast,r_\ast)$ solves the FOCs $\dot Q_\bullet(c,r)=0$, and thus, 
\begin{align}\label{dotQ-dotQbullet}
& \dot Q\left( c_\ast \begin{pmatrix} \alpha_0 \\ \beta_0 \end{pmatrix} 
 + \begin{pmatrix} r_\ast \\ 0 \end{pmatrix} \right)  \notag  \\
& = - \mathbb{E}\Big(  1\{U\leq V,\,c_\ast V+r_\ast  < 1\}  - 1\{U>V,\,c_\ast  V+r_\ast  > -1\} \Big)\begin{pmatrix} 1 \\ X \end{pmatrix}  \notag \\
&  = - \mathbb{E}\Big(  \Pi(V) 1\{c_\ast V+r_\ast < 1\}  - (1-\Pi(V) ) 1\{c_\ast V+r_\ast  > -1\} \Big)\begin{pmatrix} 1 \\ \mathbb E(X|V) \end{pmatrix}  \notag  \\
& = - \mathbb{E}\Big(  \Pi(V) 1\{c_\ast V+r_\ast < 1\}  - (1-\Pi(V) ) 1\{c_\ast V+r_\ast > -1\} \Big)\begin{pmatrix} 1 \\ aV +b \end{pmatrix}  \notag \\
& = 0 
\end{align}
where the second equality follows from the law of iterated expectation and Assumption~\ref{id-dep}, the third equality is by Assumption \ref{ld-mean}, and the last line is by $\dot Q_\bullet(c,r) =0$. 
Since $Q(\theta)$ is convex and has a positive definite Hessian matrix at any $\theta$ with $\beta_m \neq 0$ by Lemma \ref{lemma-gradient-hessian-svm}, it holds that 
\begin{align*}
 \begin{pmatrix} \alpha_\ast \\ \beta_\ast \end{pmatrix} 
 = c_\ast \begin{pmatrix} \alpha_0 \\ \beta_0 \end{pmatrix} 
 + \begin{pmatrix} r_\ast \\ 0 \end{pmatrix} ,
\end{align*}
and thus, the SVM slope $\hat\beta$ is consistent for $\beta_0$.

It is clear that the key to establishing the slope consistency of the SVM is to ensure $c_\ast>0$. In contrast to the slope consistency of the QMLE analyzed in \cite{chang-park-yan-25}, we need an additional condition to ensure $c_\ast >0$. 

In what follows, we first introduce some notations, after which Lemma \ref{lemma-cast-positive} provides the necessary and sufficient conditions for $c_\ast>0$. 
Let $\mathbb P\{Y=1\} \ge 1/2$. This causes no loss of generality, since we may redefine $Y$ by $-Y$ if necessary. Define 
\[
 p(v) = \mathbb P\big\{U \leq V<v\big\}
\quad\mbox{and}\quad
 q(v) = \mathbb P\big\{U>V>v\big\},\vspace{-0.084in}
\]
\[
 \tau(v) = \mathbb E\,V1\big\{U \leq V<v\big\}
\quad\mbox{and}\quad
 \sigma(v) = \mathbb E\,V1\big\{U>V>v\big\}
\]
for $v\in\mathbb R$.  
Under Assumption \ref{aspn-pos-density-Xm} (a)-(b), $p$ is strictly increasing with $p(-\infty) = 0$ and $p(\infty) = \mathbb P\{Y = 1\} \ge 1/2$, whereas $q$ is strictly decreasing with $q(-\infty) = \mathbb P\{Y=-1\} \le 1/2$. Thus, $(p^{-1}\circ q)(-\infty)$  is well-defined in $\mathbb R \cup\{\infty\}$.

\begin{assumption}\label{aspn-suff-cond-consistency-domclass}
We assume that $\tau(\overline v) > \sigma(-\infty)$, where $\overline v = (p^{-1}\circ q)(-\infty)$.
\end{assumption}

The following lemma shows that Assumption \ref{aspn-suff-cond-consistency-domclass} is indeed the necessary and sufficient condition for $c_\ast>0$, under some regularity conditions. 
We discuss the interpretation of Assumption \ref{aspn-suff-cond-consistency-domclass} in Section \ref{section-balance-illustration}.

\begin{lemma}\label{lemma-cast-positive}  
Let Assumptions \ref{aspn-med0-error}, \ref{aspn-pos-density-Xm}  
hold. 
Then minimizing $Q_\bullet (c,r)$ over $c\in \mathbb R, r \in \mathbb R$ has a unique solution at some $c_\ast>0, r_\ast \in \mathbb R$ 
if and only if Assumption \ref{aspn-suff-cond-consistency-domclass} is satisfied. 
\end{lemma}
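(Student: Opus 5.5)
The plan is to reduce the two-dimensional convex problem to a one-dimensional monotonicity argument along the curve of $r$-minimizers. Write $v_1=(1-r)/c$ and $v_2=(-1-r)/c$ for $c>0$, so that $cV+r<1 \iff V<v_1$ and $cV+r>-1\iff V>v_2$, with $v_2<v_1$ and $c=2/(v_1-v_2)$. Under Assumptions \ref{aspn-med0-error} and \ref{aspn-pos-density-Xm}, $V$ has a Lebesgue density. The gradient formula for $\dot Q_\bullet$ stated before the lemma (a consequence of Lemma \ref{lemma-gradient-hessian-svm}) then gives, for $c>0$,
\[
\partial_r Q_\bullet(c,r) = -\big(p(v_1)-q(v_2)\big),\qquad \partial_c Q_\bullet(c,r) = -\big(\tau(v_1)-\sigma(v_2)\big).
\]

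\textbf{Step 1 (profiling out $r$).} The $r$-FOC reads $p(v_1)=q(v_2)$. Since $p$ is strictly increasing and $q$ strictly decreasing, I solve it as $v_1=g(s):=(p^{-1}\circ q)(s)$ with $v_2=s$. Here $g$ is strictly decreasing with $g(-\infty)=\overline v$. Let $s_0$ be the unique point with $p(s_0)=q(s_0)$. The admissible range $g(s)>s$ is exactly $s<s_0$, and $c=2/(g(s)-s)$ runs from $0^+$ (as $s\to-\infty$) to $+\infty$ (as $s\uparrow s_0$).

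\textbf{Step 2 (monotone profile derivative).} Set $h(s)=\tau(g(s))-\sigma(s)$, which is minus the $c$-derivative along the profile. Using $\tau'(v)=v\,p'(v)$, $\sigma'(v)=v\,q'(v)$ and $g'(s)=q'(s)/p'(g(s))$, I get
\[
h'(s)=\big(g(s)-s\big)q'(s)<0 \quad\text{on } (-\infty,s_0),
\]
so $h$ is strictly decreasing and $h(-\infty)=\tau(\overline v)-\sigma(-\infty)$. I also need $h$ to become negative before $s_0$. For this I argue that $Q_\bullet(c,r)\to\infty$ as $c\to\infty$ uniformly in $r$: by Assumption \ref{aspn-pos-density-Xm}(b) both classes have positive mass on both sides of any threshold, so the hinge loss of misclassified points grows linearly in $c$. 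By convexity, the profiled function $c\mapsto\min_r Q_\bullet(c,r)$ is convex and coercive, and therefore eventually increasing, which forces $h<0$ near $s_0$. This coercivity step, together with handling the profile cleanly (existence of the $r$-minimizer for every $c$, and the case $\mathbb P\{Y=1\}=1/2$ where $\overline v$ may be infinite), is where I expect the main technical care to be needed.

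\textbf{Step 3 (sufficiency).} If $\tau(\overline v)>\sigma(-\infty)$, then $h$ has a unique zero $s_\ast<s_0$. This gives a stationary point $(c_\ast,r_\ast)$ with $c_\ast>0$. By convexity of $Q_\bullet$ it is a global minimizer. It is the unique minimizer because the Hessian of $Q_\bullet$ at any $c\neq0$ is positive definite, which follows from Lemma \ref{lemma-gradient-hessian-svm} applied with the scalar index $V$ in place of $X$ (note $V$ has continuous bounded density where needed).

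\textbf{Step 4 (necessity).} If $\tau(\overline v)\le\sigma(-\infty)$, then $h<0$ on all of $(-\infty,s_0)$. Hence the profiled objective is strictly increasing in $c>0$, and no point with $c>0$ can be a minimizer: it would satisfy both FOCs, contradicting $h\ne0$. Any minimizer therefore has $c\le0$, so the conclusion of the lemma fails. This establishes the equivalence.
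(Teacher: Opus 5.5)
Your proposal is essentially the paper's proof. You reduce to the two first-order conditions for $c>0$, solve the $r$-condition $p(v_u)=q(v_l)$ along a monotone curve (you parametrize by $v_l=s$ where the paper uses $v_u$), and show the remaining $c$-residual is strictly monotone on the admissible branch $v_u>v_l$ with boundary value $\tau(\overline v)-\sigma(-\infty)$. Your formula $h'(s)=(g(s)-s)q'(s)$ is the infinitesimal version of the paper's finite-difference comparison of conditional means of $V$; the paper's version is preferable because it needs no pointwise differentiability of $p,q,\tau,\sigma$.

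The coercivity/envelope detour in Step 2 is valid but unnecessary. Since $p(s_0)=q(s_0)$, you get directly $h(s_0^-)=\tau(s_0)-\sigma(s_0)=\mathbb E(V-s_0)1\{U\le V<s_0\}+\mathbb E(s_0-V)1\{U>V>s_0\}<0$, which is exactly the paper's $h(v_0)<0$.
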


\begin{theorem}\label{theorem-consistency-svm}
Let $(Y,X)$ be generated by the BCM satisfying Assumptions \ref{aspn-med0-error}, \ref{aspn-pos-density-Xm}, \ref{aspn-svm-hessian}, \ref{id-dep}, \ref{ld-mean}, and \ref{aspn-suff-cond-consistency-domclass}. Then the SVM slope estimator $\hat\beta$ is consistent for $\beta_0$ (up to multiplication by a positive scalar).
\end{theorem}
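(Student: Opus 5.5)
The plan is to assemble the argument sketched just before Lemma \ref{lemma-cast-positive} into a rigorous proof. There are three steps: identify a candidate population minimizer via the restricted problem, show it is the unique global minimizer of $Q$ over $\Theta$, and then invoke Lemma \ref{lemma-plim-theta}.

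\textbf{Step 1 (candidate from the restricted problem).} By Lemma \ref{lemma-cast-positive}, Assumptions \ref{aspn-med0-error}, \ref{aspn-pos-density-Xm} and \ref{aspn-suff-cond-consistency-domclass} give a unique minimizer $(c_\ast,r_\ast)$ of $Q_\bullet$ with $c_\ast>0$. Set $\theta_\ast = c_\ast\theta_0 + (r_\ast,0')'$. Its last slope coordinate $c_\ast\beta_{0,m}$ is nonzero by Assumption \ref{aspn-pos-density-Xm}(a). Hence $Q_\bullet$ is differentiable at $(c_\ast,r_\ast)$ by Lemma \ref{lemma-gradient-hessian-svm}, applied to the scalar index $cV+r$, and the first-order conditions $\dot Q_\bullet(c_\ast,r_\ast)=0$ hold. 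Since $\Theta$ is convex and arbitrarily large, we may take $\theta_\ast$ to be an interior point of $\Theta$.

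\textbf{Step 2 (full gradient vanishes, hence global minimum).} Apply Lemma \ref{lemma-gradient-hessian-svm} at $\theta_\ast$ to write $\dot Q(\theta_\ast)$ as in \eqref{gradient-svm}. Then repeat the computation in \eqref{dotQ-dotQbullet}:
\begin{itemize}
\item condition on $X$ and use Assumption \ref{id-dep} to replace the conditional probability of $U\le V$ given $X$ by $\Pi(V)$;
\item condition on $V$ and use Assumption \ref{ld-mean} to replace $X$ by $aV+b$;
\item conclude that $\dot Q(\theta_\ast)$ equals a fixed linear map applied to $\dot Q_\bullet(c_\ast,r_\ast)=0$.
\end{itemize}
The interchange of expectation and the conditioning steps is justified by the moment condition $\mathbb E\|X\|^2<\infty$ in Assumption \ref{aspn-svm-hessian}(a). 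Since $Q$ is convex and $\theta_\ast$ is a stationary point, $\theta_\ast$ is a global minimizer of $Q$.

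\textbf{Step 3 (uniqueness and conclusion).} Under Assumptions \ref{aspn-pos-density-Xm} and \ref{aspn-svm-hessian}, Lemma \ref{lemma-gradient-hessian-svm} gives that $\ddot Q(\theta_\ast)$ exists, is positive definite, and is continuous near $\theta_\ast$. So $Q$ is strictly convex in a neighborhood of $\theta_\ast$.

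Suppose $\tilde\theta\ne\theta_\ast$ were another minimizer. Convexity of $Q$ makes the whole segment between $\theta_\ast$ and $\tilde\theta$ consist of minimizers, so $Q$ is constant on it. This contradicts strict convexity near $\theta_\ast$. Hence $\theta_\ast$ is the unique minimizer, so Assumption \ref{aspn-regularity}(a) holds, and (b) follows from Assumption \ref{aspn-svm-hessian}(a). Lemma \ref{lemma-plim-theta} then yields $\hat\theta\to_p\theta_\ast$, in particular $\hat\beta\to_p c_\ast\beta_0$ with $c_\ast>0$, which is the claim.

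\textbf{Main obstacle.} I expect the hardest part to be the justification in Step 2: differentiating $Q_\bullet$ at $(c_\ast,r_\ast)$ and passing from the restricted first-order conditions to $\dot Q(\theta_\ast)=0$. This needs $cV+r$ to have a density so that ties at $\pm1$ have probability zero, which Assumption \ref{aspn-pos-density-Xm}(a) supplies. It also needs the conditional-expectation replacements of Assumptions \ref{id-dep} and \ref{ld-mean} to apply inside the indicator-weighted expectation; this works because the indicators depend on $(U,V)$ only. The remaining steps are routine given the earlier lemmas.
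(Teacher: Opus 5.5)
Your proposal is correct and follows the paper's own argument. You obtain $(c_\ast,r_\ast)$ with $c_\ast>0$ from Lemma~\ref{lemma-cast-positive}, lift the restricted first-order conditions to $\dot Q(\theta_\ast)=0$ via Assumptions~\ref{id-dep} and~\ref{ld-mean} as in \eqref{dotQ-dotQbullet}, get uniqueness from convexity plus the positive definite Hessian of Lemma~\ref{lemma-gradient-hessian-svm}, and conclude with Lemma~\ref{lemma-plim-theta}; your Step 3 just spells out the uniqueness step that the paper states in one sentence.
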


\noindent
Theorem~\ref{theorem-consistency-svm} establishes the consistency of the SVM slope estimator $\hat\beta$ for $\beta_0$ in the BCM. Corollary~\ref{corollary-consistency-svm} further shows that, under an additional symmetry condition, the intercept estimator $\hat\alpha$ is also consistent, together with $\hat\beta$. 

\begin{corollary}\label{corollary-consistency-svm}
Suppose that the assumptions in Theorem \ref{theorem-consistency-svm} hold. If $\mathcal L(U,V) = \mathcal L(-U,-V)$, then the SVM intercept estimator $\hat\alpha$ and the SVM slope estimator $\hat\beta$ are both consistent for $\alpha_0$ and $\beta_0$ (up to multiplication by a positive scalar), respectively.
\end{corollary}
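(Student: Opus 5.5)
By Theorem \ref{theorem-consistency-svm}, $\beta_\ast = c_\ast\beta_0$ with $c_\ast>0$, and $\alpha_\ast = c_\ast\alpha_0 + r_\ast$, where $(c_\ast,r_\ast)$ is the unique minimizer of $Q_\bullet$ given by Lemma \ref{lemma-cast-positive}. Consistency of $\hat\alpha$ (jointly with $\hat\beta$, with the same scale $c_\ast$) is therefore equivalent to $r_\ast=0$. My plan is to deduce $r_\ast=0$ from the symmetry $\mathcal L(U,V)=\mathcal L(-U,-V)$ together with the uniqueness of the minimizer of $Q_\bullet$.

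\textbf{Step 1: a reflection identity for $Q_\bullet$.} Write $Y=\sgn(V-U)$ and
\[
Q_\bullet(c,r)=\mathbb E\big[1-\sgn(V-U)(cV+r)\big]_+ .
\]
Replacing $(U,V)$ by $(-U,-V)$, which leaves the distribution unchanged, gives
\[
Q_\bullet(c,r)=\mathbb E\big[1-\sgn(U-V)(-cV+r)\big]_+ .
\]
Off the event $\{U=V\}$ we have $\sgn(U-V)=-\sgn(V-U)$, so the integrand becomes $[1-\sgn(V-U)(cV-r)]_+$. The event $\{U=V\}$ is null: under Assumptions \ref{aspn-pos-density-Xm}(a) and \ref{id-dep}, conditional on $U$ the variable $V$ has a Lebesgue density, so $\mathbb P\{U=V\}=0$. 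This null-set issue, caused by the asymmetric convention $\sgn(0)=1$, is the only point needing care. Hence
\[
Q_\bullet(c,r)=Q_\bullet(c,-r)\quad\text{for all } (c,r).
\]

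\textbf{Step 2: conclude $r_\ast=0$.} By Step 1, $(c_\ast,-r_\ast)$ is also a minimizer of $Q_\bullet$. Lemma \ref{lemma-cast-positive} says the minimizer is unique, so $r_\ast=-r_\ast$, i.e. $r_\ast=0$. Thus $\theta_\ast=c_\ast\theta_0$ with $c_\ast>0$.

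\textbf{Step 3: transfer to the estimator.} In the argument preceding Lemma \ref{lemma-cast-positive}, convexity of $Q$ and the positive definite Hessian from Lemma \ref{lemma-gradient-hessian-svm} show that $\theta_\ast$ is the unique minimizer of $Q$, and that it is interior once $\Theta$ is taken large enough. Assumption \ref{aspn-regularity}(a) therefore holds. Assumption \ref{aspn-regularity}(b) follows from Assumption \ref{aspn-svm-hessian}(a). Lemma \ref{lemma-plim-theta} then gives $\hat\theta\to_p c_\ast\theta_0$, so $\hat\alpha$ and $\hat\beta$ are consistent for $\alpha_0$ and $\beta_0$ up to the common positive scalar $c_\ast$.
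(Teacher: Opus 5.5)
\textbf{Verdict: correct route, but one step has a genuine gap.} The route is sound: the reflection identity $Q_\bullet(c,r)=Q_\bullet(c,-r)$ plus uniqueness from Lemma~\ref{lemma-cast-positive}, then Lemma~\ref{lemma-plim-theta}. The paper records no separate proof of this corollary, and your Steps~2 and~3 are fine.

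\textbf{The gap: $\mathbb P\{U=V\}=0$ does not follow from the cited assumptions.} Assumption~\ref{aspn-pos-density-Xm}(a) gives $V$ a Lebesgue density. Assumption~\ref{id-dep}, however, only says that $U$ depends on $X$ through $V$. It leaves $\mathcal L(U\mid V=v)$ unrestricted beyond the median condition, so nothing forces $\mathcal L(V\mid U)$ to be absolutely continuous. If $\mathcal L(U\mid V=v)$ has an atom of mass $\phi(v)$ at the point $v$, then $\mathbb P\{U=V\}=\mathbb E\,\phi(V)>0$. Your claim is correct when $U$ is independent of $X$, but not under Assumption~\ref{id-dep} alone.

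\textbf{Why it matters: with $\mathbb P\{U=V\}>0$ the conclusion itself fails.} Since $\{U=V\}\subset\{Y=1\}$, the symmetry gives $\mathbb P\{U\le V,\,cV<1\}=\mathbb P\{U\ge V,\,cV>-1\}$. Hence, for $c>0$,
\[
\frac{\partial Q_\bullet}{\partial r}(c,0)=-\mathbb P\{U\le V,\,cV<1\}+\mathbb P\{U>V,\,cV>-1\}=-\mathbb P\{U=V,\,cV>-1\}.
\]
This is strictly negative whenever the atoms carry mass on $\{V>-1/c\}$. So $(c_\ast,0)$ cannot be the (interior, differentiable) minimizer, and $r_\ast\neq 0$.

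\textbf{Such a configuration meets every hypothesis of the corollary.} Take $\alpha_0=0$ and $X\sim N(0,I_m)$. Let $\mathcal L(U\mid V=v)$ put mass $\epsilon\min\{|v|,1\}$ at $v$, and the rest on a continuous law arranged so that $\med(U\mid V=v)=0$ and $\mathcal L(U,V)=\mathcal L(-U,-V)$. Because the atom vanishes as $v\to 0$, the median stays at zero and $\Pi$ stays continuous. Taking $\epsilon$ small keeps the classes nearly balanced, so Assumption~\ref{aspn-suff-cond-consistency-domclass} holds. Yet $\alpha_\ast=r_\ast\neq 0=c_\ast\alpha_0$.

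\textbf{How to repair it.} State $\mathbb P\{U=V\}=0$ as an additional condition rather than deriving it. The paper treats it as implicit; for example, it writes $\mathbb P\{U<V<\overline v\}$ for $p(\overline v)=\mathbb P\{U\le V<\overline v\}$ in the proof of Lemma~\ref{lemma-v-bar}. A clean sufficient condition is that $\mathcal L(U\mid V=v)$ has no atom at $v$ for $\mathcal L(V)$-a.e.\ $v$. This holds, for instance, if $U$ is independent of $X$ or has a conditional density. With that condition stated, your argument is complete.
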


\subsection{Estimation of Intercept Given Consistent SVM Slope Estimator} 

Provided that the SVM slope estimator $\hat \beta$ is consistent up to a scalar $c_\ast >0$, we consider the estimation of the $\alpha_0$ up to the same scalar in this subsection. 

Define a second-step estimator $\hat \alpha_{ms}$ by
\begin{equation}\label{alpha-ms}
\hat \alpha_{ms} = \arg \max_{\alpha\in \mathcal A} \hat Q_{ms}(\alpha,\hat \beta) , 
\end{equation}
where 
\begin{equation}\label{Q-hat-ms-beta}
\hat Q_{ms}(\alpha,\beta) = \frac{1}{n} \sum_{i=1}^n Y_i 1\{ \alpha+ \beta'X_i \geq 0 \}  
\end{equation}
and $\mathcal A$ is a compact and convex set on $\mathbb R$.  
Note that $\hat\alpha_{ms}$ is a maximum score estimator with a univariate generated regressor $\hat \beta'X_i$ using the SVM slope estimator $\hat\beta$.  
In particular, \cite{manski-75,manski-85} show that the maximum of
\[
Q_{ms}(\alpha,\beta) = \mathbb E Y 1\{ \alpha+ \beta'X \geq 0 \}  
\]
is given by $c\alpha_0, c\beta_0$ for any constant $c>0$. Therefore, if we fix $\beta$ at $\beta_\ast = c_\ast\beta_0$, then $Q_{ms}(\alpha,\beta_\ast)$ has a unique maximum $\alpha_\ast =c_\ast \alpha_0$ for $\alpha$.

Since $\hat \beta$ is consistent for $\beta_0$ up to a scalar $c_\ast >0$, it is expected that $\hat \alpha_{ms}$ has probability limit $c_\ast \alpha_0$. Consequently, $(\hat \alpha_{ms}, \hat \beta')'$ is consistent for $c_\ast \theta_0$.  The following proposition establishes the consistency of $\hat \alpha_{ms}$.

\begin{proposition}\label{proposition-alpha-ms-consistency}
Let the conditions in Theorem \ref{theorem-consistency-svm} hold so that $\hat \beta \to_p c_\ast \beta_0$ for some $c_\ast >0$. 
Assume that $\mathcal A$ is a compact and convex set on $\mathbb R$ and $c_\ast\alpha_0$ is an interior point of  $\mathcal A$. 
Then 
\[
\hat \alpha_{ms} \to_p c_\ast \alpha_0.
\]
\end{proposition}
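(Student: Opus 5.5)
We prove consistency of $\hat\alpha_{ms}$ by the standard argument for extremum estimators with a nuisance parameter that has been plugged in, adapted to the discontinuous maximum score criterion. The target is the uniform convergence
\[
\sup_{\alpha\in\mathcal A}\big|\hat Q_{ms}(\alpha,\hat\beta)-Q_{ms}(\alpha,\beta_\ast)\big|\to_p 0 .
\]
Together with the identification property recalled before the proposition, that $Q_{ms}(\cdot,\beta_\ast)$ has the unique maximizer $\alpha_\ast=c_\ast\alpha_0$ on $\mathcal A$, and continuity of $Q_{ms}(\cdot,\beta_\ast)$ on the compact set $\mathcal A$, this yields $\hat\alpha_{ms}\to_p\alpha_\ast$. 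The step from uniform convergence to consistency is the argmax argument of Theorem 2.1 in \cite{newey-mcfadden-94}.

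We decompose the difference as
\[
\big[\hat Q_{ms}(\alpha,\hat\beta)-Q_{ms}(\alpha,\hat\beta)\big]+\big[Q_{ms}(\alpha,\hat\beta)-Q_{ms}(\alpha,\beta_\ast)\big].
\]
\textbf{First term.} It is bounded by $\sup_{\alpha\in\mathcal A,\beta\in B}|\hat Q_{ms}(\alpha,\beta)-Q_{ms}(\alpha,\beta)|$, where $B$ is a compact neighbourhood of $\beta_\ast$ that contains $\hat\beta$ with probability approaching one. The class of functions $(y,x)\mapsto y1\{\alpha+\beta'x\ge0\}$ indexed by half-spaces is VC (Vapnik–Chervonenkis) with envelope $1$. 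It is therefore Glivenko–Cantelli, and this supremum is $o_p(1)$. No moment conditions are needed here, since the functions are bounded.

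\textbf{Second term.} We need continuity of $Q_{ms}(\alpha,\beta)$ in $\beta$ at $\beta_\ast$, uniformly over $\alpha\in\mathcal A$. We use
\[
\big|Q_{ms}(\alpha,\beta)-Q_{ms}(\alpha,\beta_\ast)\big|\le \mathbb P\big\{\sgn(\alpha+\beta'X)\ne\sgn(\alpha+\beta_\ast'X)\big\}.
\]
This disagreement event is contained in $\{|\alpha+\beta_\ast'X|\le \|\beta-\beta_\ast\|\,\|X\|\}$, which in turn is contained in
\[
\{|\alpha+\beta_\ast'X|\le\epsilon\}\cup\{\|X\|>\epsilon/\|\beta-\beta_\ast\|\}.
\]
The index $\beta_\ast'X=c_\ast\beta_0'X$ has a Lebesgue density, because $c_\ast\beta_{0,m}\ne0$ and Assumption \ref{aspn-pos-density-Xm}(a) holds. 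Its density is bounded, since the conditional density of $X_m$ is bounded by Assumption \ref{aspn-svm-hessian}(b). Hence the first probability is at most a constant times $\epsilon$, uniformly in $\alpha$. The second probability tends to zero as $\beta\to\beta_\ast$, by tightness of $\|X\|$. Letting $\|\hat\beta-\beta_\ast\|\to_p0$ first and then $\epsilon\to0$ makes this term $o_p(1)$ uniformly in $\alpha\in\mathcal A$. The same density argument shows that $Q_{ms}(\cdot,\beta_\ast)$ is continuous in $\alpha$.

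\textbf{Identification.} This must hold strictly: for every $\eta>0$, $\sup_{|\alpha-\alpha_\ast|\ge\eta}Q_{ms}(\alpha,\beta_\ast)<Q_{ms}(\alpha_\ast,\beta_\ast)$. Write $Q_{ms}(\alpha,\beta_\ast)=\mathbb E(2\Pi(X)-1)1\{\alpha+\beta_\ast'X\ge0\}$. Moving $\alpha$ away from $c_\ast\alpha_0$ adds or removes a slab of positive probability, by the positive density in Assumption \ref{aspn-pos-density-Xm}(a). On that slab $2\Pi-1$ has the wrong sign, and it is nonzero almost surely there under Assumption \ref{aspn-pos-density-Xm}(b) combined with median zero. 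This gives a strict loss, and continuity plus compactness of $\mathcal A$ make the separation uniform.

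\textbf{Main obstacle.} I expect the main difficulty to be the second term, controlling the effect of the generated regressor $\hat\beta'X_i$ in a discontinuous criterion. It is handled by the bounded-density bound above, which is where Assumption \ref{aspn-svm-hessian}(b) enters. The Glivenko–Cantelli step is routine.
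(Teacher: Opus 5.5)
Your proof is correct. It shares the paper's key inequality, but applies it differently. Both arguments rest on the fact that the disagreement region satisfies
$\{1\{\alpha+\beta'X\ge0\}\ne 1\{\alpha+\beta_\ast'X\ge0\}\}\subseteq\{|\alpha+\beta_\ast'X|\le\|X\|\,\|\beta-\beta_\ast\|\}$.

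The paper splits the error as $[\hat Q_{ms}(\alpha,\hat\beta)-\hat Q_{ms}(\alpha,\beta_\ast)]+[\hat Q_{ms}(\alpha,\beta_\ast)-Q_{ms}(\alpha,\beta_\ast)]$. It uses the inequality at the sample level, bounding the first bracket by the empirical frequency $n^{-1}\sum_i 1\{|\alpha+\beta_\ast'X_i|\le\|X_i\|\eta\}$. This forces it to verify that $\{x\mapsto 1\{|\alpha+\beta_\ast'x|\le\|x\|\eta\}:\alpha\in\mathcal A\}$ is a VC class through a chain of permanence lemmas, and then to apply Glivenko--Cantelli. The second bracket is Manski's uniform law of large numbers with $\beta_\ast$ held fixed.

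You split instead as $[\hat Q_{ms}(\alpha,\hat\beta)-Q_{ms}(\alpha,\hat\beta)]+[Q_{ms}(\alpha,\hat\beta)-Q_{ms}(\alpha,\beta_\ast)]$. The randomness of $\hat\beta$ is absorbed by a Glivenko--Cantelli statement that holds jointly in $(\alpha,\beta)$, because half-spaces form a VC class. The inequality is then used only in the population, as continuity of $Q_{ms}(\alpha,\cdot)$ at $\beta_\ast$ uniformly in $\alpha$.

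Your route is shorter and uses only textbook facts: no ad hoc VC class is needed. The restriction to a neighbourhood $B$ is also unnecessary, since the half-space class is Glivenko--Cantelli over all of $\mathbb R^{1+m}$. The paper's route keeps the empirical process indexed by the scalar $\alpha$ alone, at the price of the extra VC verification.

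Two minor remarks.
\begin{itemize}
\item The bounded density from Assumption~\ref{aspn-svm-hessian}(b) is not actually needed. The distribution function of $\beta_\ast'X$ is continuous, hence uniformly continuous on $\mathbb R$, so $\sup_{\alpha}\mathbb P\{|\alpha+\beta_\ast'X|\le\epsilon\}\to0$ already holds. The paper uses only the existence of a Lebesgue density together with compactness of $\mathcal A$.
\item In your identification paragraph, consider the strict loss for $\alpha<\alpha_\ast$, where the removed slab has $V\ge0$. This requires $\Pi>1/2$ on a positive-probability part of that slab. That comes from a unique-median reading of Assumption~\ref{aspn-med0-error}, not from Assumption~\ref{aspn-pos-density-Xm}(b), which only gives $0<\Pi<1$. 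The paper sidesteps this point by citing Lemma 3 of \cite{manski-85}.
\end{itemize}
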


To establish the convergence rate and limit distribution of $\hat\alpha_{ms}$, we introduce some assumptions. 
Assume $\beta_{m,0}>0$ without loss of generality, since we can use $-X_m$ in place of $X_m$. 

\begin{assumption}\label{aspn-alpha-ms-rate} 
(a) $\mathcal A$ is a compact and convex set on $\mathbb R$, and $c_\ast \alpha_0$ is an interior point of $\mathcal A$; 
(b) $\tilde Q_{ms}(\alpha,\beta_{-m}) :=  \mathbb E Y 1\{ \alpha+ \beta_{-m}'X_{-m} + X_{m}  \geq 0 \}$  is twice continuously differentiable with negative definite Hessian matrix at $\big(\frac{\alpha_0}{\beta_{m,0}},\frac{\beta_{-m,0}'}{\beta_{m,0}} \big)'$; 
(c) The support of $X_{-m}$ is bounded;  
(d) There exists a constant $K>0$ such that, the Lebesgue density $p(x_m|x_{-m})$ of $\mathcal{L}(X_m|X_{-m}=x_{-m})$ and its derivatives up to second order are uniformly bounded, i.e., $p(x_m|x_{-m}), \nabla_{x_m} p(x_m|x_{-m}),  \nabla^2_{x_m} p(x_m|x_{-m})< K$ for a constant $K$ independent of $(x_m,x_{-m})$. In addition, the first and second derivatives of the distribution function $F_{x}(\cdot)$ of $\mathcal{L}(U|X=x)$ are uniformly bounded by $K$; 
(e) $\mathbb E \|X\|^3 <\infty$. 
\end{assumption}

Conditions in Assumption \ref{aspn-alpha-ms-rate} are also assumed in \cite{chen-lee-sung-14}, which studies maximum score (MS henceforth) estimation when some regressors are conditional expectations and need to be estimated nonparametrically in the first stage. \cite{chen-lee-sung-14} show that the asymptotic distribution of the second stage MS estimator is the same as if the conditional expectations were used, provided that the conditional expectations are estimated with a convergence rate faster than $n^{1/3}$ in the first stage, along with some other regularity conditions. 

If $\beta_\ast$ were known, $\alpha$ could be estimated by 
\[
\hat \alpha_\ast = \arg \min_{\alpha\in \mathcal A} \hat Q_{ms}(\alpha,\beta_\ast), 
\]
whose cube root convergence rate and limit distribution have been established in \cite{kim-pollard-90}. Now we give a result which states that the second stage MS estimator $\hat \alpha_{ms}$ using the SVM's slope has the same limit distribution as the infeasible estimator $\hat \alpha_\ast.$

\begin{theorem}\label{theorem-alpha-ms-equiv-limit-distribution}
Let the conditions in Theorem \ref{theorem-consistency-svm} hold so that $\hat \beta \to_p c_\ast \beta_0$ for some $c_\ast >0$. Let Assumption \ref{aspn-alpha-ms-rate} hold. Provided that $\|\hat \beta-\beta_\ast\| = O_p(n^{-1/2})$, then $n^{1/3}(\hat\alpha_{ms} - \alpha_\ast)$ is asymptotically equivalent in distribution to $n^{1/3}(\hat\alpha_{\ast} - \alpha_\ast)$. 
\end{theorem}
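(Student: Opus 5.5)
We treat the statement as a two-step maximum score problem with a $\sqrt n$-consistent generated regressor, and follow the strategy of \cite{chen-lee-sung-14} in the cube-root framework of \cite{kim-pollard-90}. Normalize by $\beta_{m}$. Since $\beta_{m,0}>0$ and $\hat\beta\to_p c_\ast\beta_0$, we have $\hat\beta_m>0$ with probability approaching one. Then $\hat Q_{ms}(\alpha,\hat\beta)$ is a positive multiple of $\frac1n\sum_i Y_i1\{\alpha/\hat\beta_m+\tilde\beta_{-m}'X_{-i,-m}+X_{i,m}\ge0\}$ with $\tilde\beta_{-m}=\hat\beta_{-m}/\hat\beta_m$. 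It therefore suffices to work with the sample analogue $\tilde Q_{n}(a,b)$ of $\tilde Q_{ms}(a,b)$ in Assumption \ref{aspn-alpha-ms-rate}(b). Here $a=\alpha/\hat\beta_m$ and $b=\tilde\beta_{-m}$, with $\hat b-b_0=O_p(n^{-1/2})$ by the delta method and $b_0=\beta_{-m,0}/\beta_{m,0}$. The map $\alpha\mapsto\alpha/\hat\beta_m$ is a random rescaling with $\hat\beta_m-c_\ast\beta_{m,0}=O_p(n^{-1/2})=o_p(n^{-1/3})$, so it does not affect the $n^{1/3}$ limit. Consistency is Proposition \ref{proposition-alpha-ms-consistency}.

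\textbf{Step 1: rate.} Let $\hat a=\arg\max_a\tilde Q_n(a,\hat b)$ and $a_0=\alpha_0/\beta_{m,0}$. Write $\tilde Q_n(a,\hat b)-\tilde Q_n(a_0,\hat b)$ as the sum of a population part and an empirical process part. For the population part, Assumption \ref{aspn-alpha-ms-rate}(b) and a second-order Taylor expansion of $\tilde Q_{ms}$ around $(a_0,b_0)$ give $-\kappa(a-a_0)^2+O(|a-a_0|\,\|\hat b-b_0\|)$ with $\kappa>0$. For the empirical process part, the class $\{y1\{a+b'x_{-m}+x_m\ge0\}-y1\{a_0+b'x_{-m}+x_m\ge0\}\}$ is VC. The bounded density (d) and bounded support (c) give an envelope with $L_2$ norm $O(|a-a_0|^{1/2})$, uniformly in $b$ near $b_0$. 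The maximal inequality then yields $O_p(n^{-1/2}\delta^{1/2})$ over $|a-a_0|\le\delta$. Theorem 3.2.5 of van der Vaart–Wellner (rate theorem) gives $|\hat a-a_0|=O_p(n^{-1/3}+\|\hat b-b_0\|)=O_p(n^{-1/3})$.

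\textbf{Step 2: local equivalence.} Set $a=a_0+tn^{-1/3}$ and consider the localized process
\[
Z_n(t,s)=n^{2/3}\big[\tilde Q_n(a_0+tn^{-1/3},b_0+sn^{-1/2})-\tilde Q_n(a_0,b_0+sn^{-1/2})\big]
\]
for $|t|\le T$ and $\|s\|\le S$. The goal is $\sup_{t,s}|Z_n(t,s)-Z_n(t,0)|=o_p(1)$. The deterministic part of the difference is a cross derivative term of order $n^{2/3}\cdot n^{-1/3}n^{-1/2}=o(1)$, using the $C^2$ property in (b) together with (d). For the stochastic part, the centered difference of indicators has variance bounded by the probability that $X_m$ falls in a strip of width $O(n^{-1/2})$. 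By (c) and (d), this probability is $O(n^{-1/2})$. With the VC maximal inequality, the supremum is $n^{2/3}\cdot O_p(n^{-1/2}\cdot n^{-1/4})=O_p(n^{-1/12})=o_p(1)$.

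\textbf{Step 3: conclusion.} Since $\hat s=n^{1/2}(\hat b-b_0)$ is tight, $Z_n(\cdot,\hat s)$ and $Z_n(\cdot,0)$ have the same limit process in $\ell^\infty([-T,T])$. By Kim–Pollard, this limit is a Gaussian process with quadratic drift, whose argmax is unique. The argmax continuous mapping theorem, with tightness of the argmaxes supplied by Step 1, gives $n^{1/3}(\hat a-a_0)$ and $n^{1/3}(\hat a_\ast-a_0)$ the same limit. Undoing the normalization by $\hat\beta_m$ versus $c_\ast\beta_{m,0}$ changes things only by $o_p(1)$, which gives the claim for $\hat\alpha_{ms}$ and $\hat\alpha_\ast$ (the latter being the maximizer of the score).

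The main obstacle is Step 2. The uniform bound must hold even though $\hat s$ is random and dependent on the sample. Taking the supremum over $\|s\|\le S$ with a fixed class handles this, and Assumption \ref{aspn-alpha-ms-rate}(e) is needed for the envelope and moment conditions there.
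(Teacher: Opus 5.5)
Your proposal is correct in substance, but it takes a different route from the paper.

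\textbf{Comparison of the two routes.} The paper performs the same normalization by $\beta_m$ and treats $\hat G(x_{-m})=x_{-m}'\hat\beta_{-m}/\hat\beta_m$ as a generated regressor. It then applies Theorem 4.3 of \cite{chen-lee-sung-14} as a black box. The only work is checking three things: $G$ is smooth; $\sup_{x_{-m}}|\hat G-G|=O_p(n^{-1/2})=o_p(n^{-1/3})$, using the bounded support of $X_{-m}$; and the index distribution is uniformly Lipschitz, from the bounded density in Assumption \ref{aspn-alpha-ms-rate}(d). The normalization is then undone.

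You instead reprove the needed special case of that theorem in three steps:
\begin{enumerate}[(i)]
\item a rate step;
\item a stochastic-equicontinuity step showing that an $O(n^{-1/2})$ perturbation of $b$ is negligible in the $n^{-1/3}$-localized criterion;
\item the argmax continuous mapping theorem with the \cite{kim-pollard-90} limit.
\end{enumerate}

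The paper's route is much shorter and inherits all the empirical-process bookkeeping. The price is carrying conditions designed for nonparametric first stages (derivatives of $G$ up to order $m-1$, Assumption \ref{aspn-alpha-ms-rate}(e)) that are idle here. Your route is self-contained and uses essentially only Assumption \ref{aspn-alpha-ms-rate}(b)--(d). It also makes transparent why any first-stage rate $o_p(n^{-1/3})$, up to logs, would suffice: the bias cross-term is of order $n^{2/3}\cdot n^{-1/3}\cdot\|\hat b-b_0\|$.

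Two further points in your favour. Your de-normalization correctly invokes $\hat\beta_m-\beta_{m,\ast}=O_p(n^{-1/2})$; the paper's last display needs exactly this rate, although its text cites only consistency. You also correctly read $\hat\alpha_\ast$ as a maximizer; the paper's $\arg\min$ there is a typo.

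\textbf{Two details need tightening.}

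\emph{Step 2.} The envelope of $\{g_{t,s}:|t|\le T,\ \|s\|\le S\}$ is a strip of width of order $n^{-1/3}$, because varying $t$ moves one of the two $n^{-1/2}$-strips. An envelope-based VC bound therefore gives only $n^{1/6}\cdot O(n^{-1/6})=O_p(1)$. You need a maximal inequality driven by $\sigma^2=\sup Pg_{t,s}^2\lesssim n^{-1/2}$, e.g.\ Lemma 3.4.2 of \cite{vandervaart-wellner-96}. That yields $O(n^{-1/4}\sqrt{\log n})$, hence $O_p(n^{-1/12}\sqrt{\log n})=o_p(1)$. Your conclusion survives, with a log factor.

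\emph{Step 1.} The envelope bound $O(\delta^{1/2})$ holds only if $b$ ranges over an $O(n^{-1/2})$-ball. So work on the event $\{\|\hat b-b_0\|\le Kn^{-1/2}\}$ and absorb the drift $C|a-a_0|Kn^{-1/2}$ into $-\kappa(a-a_0)^2/2$ on the peeling shells $|a-a_0|\ge Mn^{-1/3}$. Over a fixed neighbourhood of $b_0$ you would pick up a log factor and lose tightness of $n^{1/3}(\hat a-a_0)$. Also, Theorem 3.2.5 of \cite{vandervaart-wellner-96} is stated for a nonrandom limit criterion. So this short peeling argument, or a version allowing random centering, is what actually delivers the rate, rather than a direct citation.
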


Theorem \ref{theorem-alpha-ms-equiv-limit-distribution} uses $\|\hat \beta-\beta_\ast\| = O_p(n^{-1/2})$, which will be derived in the next section under suitable regularity conditions with other asymptotic properties of the SVM estimator.

\subsection{Asymptotic Distributions}\label{section-limit-distn}

In this section, we present the convergence rate and the asymptotic distribution of the SVM estimator. We first provide a technical lemma. 

\begin{lemma}\label{lemma-svm-cgce-rate-aux}
Let Assumptions \ref{aspn-pos-density-Xm},  \ref{aspn-regularity}, \ref{aspn-svm-hessian} hold, and $\lambda_n =o(\sqrt{n})$.  
When $\beta_{m,\ast}\neq 0$, it holds that
\begin{align*}
\mathbb{E} \sup_{ \|\theta-\theta_\ast\| <\delta }  \Big| (Q_n - Q )(\theta) - (Q_n - Q )(\theta_\ast) \Big|   = O\big( n^{-1/2} \delta \big)
\end{align*}
as $\delta \to 0$ and $n\to \infty$. 
\end{lemma}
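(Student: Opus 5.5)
We split $Q_n-Q$ into the empirical process of the hinge loss and the deterministic penalty, and bound each separately. Write
\[
(Q_n-Q)(\theta) = (\mathbb P_n-\mathbb P) g_\theta + \frac{\lambda_n}{n}\|\beta\|^2,\qquad g_\theta(y,x)=\big[1-y(\alpha+x'\beta)\big]_+ ,
\]
where $\mathbb P_n$ is the empirical measure. The increment $(Q_n-Q)(\theta)-(Q_n-Q)(\theta_\ast)$ then equals $(\mathbb P_n-\mathbb P)(g_\theta-g_{\theta_\ast})$ plus $(\lambda_n/n)(\|\beta\|^2-\|\beta_\ast\|^2)$.

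\textbf{Penalty term.} It is deterministic. For $\|\theta-\theta_\ast\|<\delta\le 1$, we have $\big|\|\beta\|^2-\|\beta_\ast\|^2\big|\le \delta(2\|\beta_\ast\|+1)$, so this term is $O(\lambda_n\delta/n)$. Since $\lambda_n=o(\sqrt n)$, this is $o(n^{-1/2}\delta)$.

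\textbf{Empirical process term.} We use a maximal inequality for a class indexed by a finite-dimensional parameter and Lipschitz in it. The hinge function $z\mapsto[1-z]_+$ is $1$-Lipschitz and $|y|=1$, so
\[
|g_\theta(y,x)-g_{\theta'}(y,x)| \le |\alpha-\alpha'|+\|x\|\,\|\beta-\beta'\| \le (1+\|x\|)\,\|\theta-\theta'\|.
\]
Hence the class $\mathcal G_\delta=\{g_\theta-g_{\theta_\ast}:\|\theta-\theta_\ast\|<\delta\}$ is Lipschitz in the parameter, with envelope $G_\delta(y,x)=\delta(1+\|x\|)$. Its $L^2$ norm is $\|G_\delta\|_{2}=\delta(\mathbb E(1+\|X\|)^2)^{1/2}=O(\delta)$, which is finite by Assumption \ref{aspn-svm-hessian}(a).

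By the standard bracketing result for parametric Lipschitz classes (e.g.\ van der Vaart, Example 19.7), the bracketing numbers satisfy
\[
N_{[\,]}(\varepsilon\|G_\delta\|_2,\mathcal G_\delta,L^2)\le C(1/\varepsilon)^{1+m},
\]
with $C$ depending only on $m$, not on $\delta$. The bracketing integral $J_{[\,]}(1,\mathcal G_\delta)$ is therefore bounded by a constant independent of $\delta$. The maximal inequality (van der Vaart, Corollary 19.35) then gives
\[
\mathbb E\sup_{\mathcal G_\delta}\big|\sqrt n(\mathbb P_n-\mathbb P)g\big|\lesssim J_{[\,]}(1,\mathcal G_\delta)\,\|G_\delta\|_2=O(\delta).
\]
Dividing by $\sqrt n$ yields the $O(n^{-1/2}\delta)$ rate, uniformly over small $\delta$ and all $n$.

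\textbf{Conclusion and main obstacle.} Adding the two bounds gives the claim. The main point needing care is that the constant in the maximal inequality is uniform in $\delta$. This is the reason for scaling the brackets relative to the envelope: the scale-invariance of the Lipschitz bracketing bound makes the entropy integral free of $\delta$.

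As an alternative, one can symmetrize and use the contraction inequality for Rademacher averages with the $1$-Lipschitz hinge. This bounds the supremum by $2\,\mathbb E\sup|n^{-1}\sum\epsilon_i Y_i((\alpha-\alpha_\ast)+X_i'(\beta-\beta_\ast))|\le 2\delta\, n^{-1/2}(\mathbb E(1+\|X\|^2))^{1/2}$, by Cauchy--Schwarz. This route is more elementary and also delivers the rate, and it is the one I would write out.

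The conditions $\beta_{m,\ast}\ne0$, Assumption \ref{aspn-pos-density-Xm}, and Assumption \ref{aspn-regularity} only serve to make $\theta_\ast$ well defined and interior, so that neighborhoods of $\theta_\ast$ make sense; the bound itself uses only the second moment of $X$.
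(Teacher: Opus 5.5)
Your proposal is correct. Your first route is the paper's own. It uses the same split into the centered hinge-loss process plus the deterministic penalty, and the same bound $\frac{\lambda_n}{n}(2\|\beta_\ast\|+\delta)\delta=o(n^{-1/2}\delta)$ for the penalty. For the process term, the paper proves $|\ell_{\theta_1}-\ell_{\theta_2}|\le\sqrt{1+\|x\|^2}\,\|\theta_1-\theta_2\|$ and appeals to the proof of Corollary 5.53 in van der Vaart. That proof is precisely Example 19.7 with brackets scaled to the envelope, followed by the maximal inequality of Corollary 19.35.

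The route you say you would actually write out is genuinely different: symmetrization, then the contraction principle for the $1$-Lipschitz map $z\mapsto[1-z]_+$. Instead of entropy bounds, it exploits the linearity of the index, so the supremum collapses to $\delta\,\mathbb E\|n^{-1}\sum_{i}\epsilon_iY_i(1,X_i')'\|\le\delta n^{-1/2}(\mathbb E(1+\|X\|^2))^{1/2}$.

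Each route buys something different:
\begin{itemize}
\item The contraction route gives a self-contained argument with an explicit constant. It also makes transparent that only $\mathbb E\|X\|^2<\infty$ is used, so, as you note, the paper's opening appeal to the Taylor expansion of $P\ell_\theta$ and its use of $\beta_\ast\neq0$ are inessential for this particular bound.
\item The bracketing route is more general: it applies to any class Lipschitz in a finite-dimensional parameter, not only to Lipschitz functions of a linear index.
\end{itemize}

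One harmless slip: symmetrization costs a factor $2$, and the Ledoux--Talagrand contraction inequality with absolute values costs another factor $2$. The constant is therefore $4$ rather than $2$, which does not affect the $O(n^{-1/2}\delta)$ conclusion.
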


Lemma \ref{lemma-svm-cgce-rate-aux} shows that the expected modulus of continuity of the process $Q_n-Q$ around $\theta_\ast$, over a neighborhood of radius $\delta$, is of order $n^{-1/2} \delta$. This result relies on a Lipschitz condition of the mapping $\theta \mapsto \ell_{\theta}(y,x) := (1-y(\alpha+x'\beta))_{+}$, in the sense that $\big| \ell_{\theta_1}(y,x) - \ell_{\theta_2}(y,x) \big| \leq \sqrt{1+\|x\|^2} \|\theta_1 -\theta_2\|$ for any $(y,x)\in \{-1,1\}\times \mathbb R^{1+m}$ and any $\theta_1,\theta_2 \in \mathbb R^{1+m}$.\footnote{
For a class of functions Lipschitz in the index parameter in Euclidean space, its bracketing number can be upper bounded by the diameter of the index parameter set to the power of the dimension.  See, e.g., Example 19.7 in \cite{vandervaart-00}, or Theorem 2.7.11 in \cite{vandervaart-wellner-96} for a more general statement.
}
Assumptions \ref{aspn-pos-density-Xm} and \ref{aspn-svm-hessian}  guarantee that $Q(\theta)$ is twice continuously differentiable at $\theta_\ast$ with a positive definite Hessian matrix, as shown in Lemma \ref{lemma-gradient-hessian-svm}. 
These assumptions are also used to ensure that $\hat\theta \to_p \theta_\ast$ as in Lemma~\ref{lemma-plim-theta}. The only difference is that $\lambda_n = o(\sqrt{n})$ is assumed here, which is stronger than $\lambda_n = o(n)$ in Lemma~\ref{lemma-plim-theta}. Note that $\lambda_n = o(\sqrt{n})$ holds when using default choices in existing packages; see Footnote~\ref{footnote-lambda-n} for details.

Now we are ready to establish the $\sqrt{n}$ convergence rate of the SVM estimator $\hat \theta$ to its probability limit $\theta_\ast$. 

\begin{lemma}\label{lemma-svm-cgce-rate}
Let Assumptions \ref{aspn-pos-density-Xm}, \ref{aspn-regularity}, \ref{aspn-svm-hessian} hold, and $\lambda_n =o(\sqrt{n})$.  
When $\beta_{m,\ast}\neq 0$, it holds that
\[
\sqrt{n}(\hat\theta - \theta_\ast) = O_p(1)
\]
as $n\to \infty$. 
\end{lemma}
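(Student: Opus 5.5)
This is a standard M-estimation rate argument in the style of Theorem 3.2.5 of van der Vaart and Wellner (1996), or equivalently Theorem 5.52 of van der Vaart (2000), applied to the convex criterion $Q_n$. We need three ingredients: consistency, quadratic growth of $Q$ near $\theta_\ast$, and the modulus-of-continuity bound from Lemma \ref{lemma-svm-cgce-rate-aux}. The penalty $\lambda_n\|\beta\|^2/n$ is handled as a perturbation of size $o(n^{-1/2})$ times $\|\theta-\theta_\ast\|$.

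\emph{Step 1: consistency and local quadratic growth.} By Lemma \ref{lemma-plim-theta}, $\hat\theta\to_p\theta_\ast$; this uses Assumption \ref{aspn-regularity} and $\lambda_n=o(\sqrt n)\subset o(n)$. Since $\beta_{m,\ast}\neq0$, Lemma \ref{lemma-gradient-hessian-svm} gives the following at $\theta_\ast$:
\begin{itemize}
\item $Q$ is differentiable with gradient $\dot Q(\theta_\ast)=0$, because $\theta_\ast$ is an interior minimizer;
\item $Q$ has a continuous, positive definite Hessian $\ddot Q(\theta_\ast)$.
\end{itemize}
Continuity of the Hessian also holds in a neighbourhood, since $\beta_m\neq0$ persists there. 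A second-order Taylor expansion then yields constants $\eta,\kappa>0$ such that
\[
Q(\theta)-Q(\theta_\ast)\ge \kappa\|\theta-\theta_\ast\|^2 \quad\text{for } \|\theta-\theta_\ast\|<\eta .
\]

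\emph{Step 2: handling the penalty.} Write $Q_n=\tilde Q_n+(\lambda_n/n)\|\beta\|^2$, where $\tilde Q_n$ is the empirical hinge risk. For $\|\theta-\theta_\ast\|<\eta$,
\[
\big|\|\beta\|^2-\|\beta_\ast\|^2\big|\le (2\|\beta_\ast\|+\eta)\|\theta-\theta_\ast\|,
\]
so the penalty difference is at most $C(\lambda_n/n)\|\theta-\theta_\ast\|=o(n^{-1/2})\|\theta-\theta_\ast\|$. This is deterministic. Lemma \ref{lemma-svm-cgce-rate-aux} already bounds the modulus of $Q_n-Q$ as stated, so either way the relevant modulus is $\phi_n(\delta)=\delta$, scaled by $n^{-1/2}$.

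\emph{Step 3: peeling.} Fix $M$ and define shells $S_j=\{2^{j-1}<\sqrt n\|\theta-\theta_\ast\|\le 2^j\}$. On the event $\{\sqrt n\|\hat\theta-\theta_\ast\|>2^M,\ \|\hat\theta-\theta_\ast\|<\eta\}$, $\hat\theta$ lies in some $S_j$ with $j>M$ and $2^j\le 2\eta\sqrt n$. There $Q_n(\hat\theta)\le Q_n(\theta_\ast)$, so
\[
\sup_{S_j}\big[(Q_n-Q)(\theta_\ast)-(Q_n-Q)(\theta)\big]\ge \inf_{S_j}\big[Q(\theta)-Q(\theta_\ast)\big]\ge \kappa\, 2^{2j-2}/n .
\]
By Markov's inequality and Lemma \ref{lemma-svm-cgce-rate-aux} with $\delta=2^j/\sqrt n$, the probability of this shell event is at most
\[
\frac{C n^{-1/2}\,2^j n^{-1/2}}{\kappa 2^{2j-2}n^{-1}}\lesssim 2^{-j}.
\]
Summing over $j>M$ gives a bound of order $2^{-M}$. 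Adding $\mathbb P\{\|\hat\theta-\theta_\ast\|\ge\eta\}\to0$ shows that $\sqrt n(\hat\theta-\theta_\ast)=O_p(1)$.

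\emph{Main obstacle.} The delicate point is that Lemma \ref{lemma-svm-cgce-rate-aux} is stated "as $\delta\to0$". Peeling needs the bound uniformly over all $\delta\le 2\eta$, with a constant independent of $n$. I would justify this by the Lipschitz envelope $\sqrt{1+\|X\|^2}\,\delta$, which is square integrable by Assumption \ref{aspn-svm-hessian}(a), together with the bracketing maximal inequality. Those two facts give $\mathbb E\sup\le Cn^{-1/2}\delta$ for every $\delta\le 2\eta$, after possibly shrinking $\eta$. I would also check that the Taylor remainder is controlled uniformly on the $\eta$-ball via continuity of $\ddot Q$, so that $\kappa$ can be taken as half the smallest eigenvalue of $\ddot Q(\theta_\ast)$.
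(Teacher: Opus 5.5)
Your proposal is correct and takes the same route as the paper's proof. Both combine quadratic growth of $Q$ near $\theta_\ast$ (from the positive definite, continuous Hessian in Lemma~\ref{lemma-gradient-hessian-svm}), the modulus bound of Lemma~\ref{lemma-svm-cgce-rate-aux}, and consistency from Lemma~\ref{lemma-plim-theta} through Theorem 3.2.5 of \cite{vandervaart-wellner-96}, whose peeling argument you write out explicitly. Your concern about uniformity in $\delta$ is already handled in the paper's proof of Lemma~\ref{lemma-svm-cgce-rate-aux}, which uses the Lipschitz envelope, as you propose, to get the bound for every $n$ and every sufficiently small $\delta$.
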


\begin{theorem}\label{theorem-svm-limit-distn}
Let Assumptions \ref{aspn-pos-density-Xm}, \ref{aspn-regularity}, \ref{aspn-svm-hessian} hold, and $\lambda_n =o(\sqrt{n})$.  
When $\beta_{m,\ast}\neq 0$, it holds that
\[
\sqrt{n}(\hat\theta-\theta_\ast) 
\to_d  N(0,\ddot{Q}(\theta_\ast)^{-1}J \ddot{Q}(\theta_\ast)^{-1}),
\]
where 
\[
J  =  \mathbb{E} 1\{1-  Y (\alpha_\ast + X'\beta_\ast) > 0\} \begin{pmatrix}
1 & X' \\ X & XX'
\end{pmatrix}
\]

\end{theorem}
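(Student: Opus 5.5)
The plan is the standard argument for M-estimators with a nonsmooth but convex criterion: a stochastic quadratic expansion of $Q_n$ around $\theta_\ast$, using the rate from Lemma \ref{lemma-svm-cgce-rate}. Write $\ell_\theta(y,x)=[1-y(\alpha+x'\beta)]_+$ and $z=(1,x')'$. Let
\[
g(y,x) = -1\{1-y(\alpha_\ast+x'\beta_\ast)>0\}\,y z
\]
be the pointwise subgradient of $\ell_\theta$ at $\theta_\ast$. Since $\theta_\ast$ is an interior minimizer and $\beta_{m,\ast}\neq0$, Lemma \ref{lemma-gradient-hessian-svm} gives $\dot Q(\theta_\ast)=\mathbb E g=0$. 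Because $y^2=1$, we also have $\mathbb E gg' = J$, which is finite since $\mathbb E\|X\|^2<\infty$. By the Lindeberg--L\'evy CLT,
\[
W_n := n^{-1/2}\sum_{i=1}^n g(Y_i,X_i)\to_d N(0,J).
\]

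Next I set $\theta=\theta_\ast+h/\sqrt n$ and study the localized process
\[
n\big[Q_n(\theta_\ast+h/\sqrt n)-Q_n(\theta_\ast)\big] = h'W_n + \tfrac12 h'\ddot Q(\theta_\ast)h + R_n(h).
\]
The remainder has three parts.

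The first is the population part, $n[Q(\theta_\ast+h/\sqrt n)-Q(\theta_\ast)]-\tfrac12 h'\ddot Q(\theta_\ast) h = o(1)$ uniformly on compacts. This follows from the second-order Taylor expansion of $Q$, which is available because $\ddot Q$ exists and is continuous near $\theta_\ast$ by Lemma \ref{lemma-gradient-hessian-svm}, together with $\dot Q(\theta_\ast)=0$.

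The second is the penalty part, $\lambda_n(\|\beta_\ast+h_\beta/\sqrt n\|^2-\|\beta_\ast\|^2) = O(\lambda_n/\sqrt n)\|h\| + o(1) = o(1)$ uniformly on compacts, using $\lambda_n=o(\sqrt n)$.

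The third is the empirical-process part, $\sqrt n\,\mathbb G_n[\ell_{\theta_\ast+h/\sqrt n}-\ell_{\theta_\ast}-g'h/\sqrt n]$, which I expect to be the main obstacle. For fixed $h$ its variance is at most
\[
n\,\mathbb E\big(\ell_{\theta_\ast+h/\sqrt n}-\ell_{\theta_\ast}-g'h/\sqrt n\big)^2 \le \mathbb E\big[\|z\|^2\|h\|^2\,1\{|1-Y(\alpha_\ast+X'\beta_\ast)|\le \|z\|\|h\|/\sqrt n\}\big].
\]
The indicator arises because the hinge function is linear except on that kink set. This bound tends to $0$ by dominated convergence, since $1-Y(\alpha_\ast+X'\beta_\ast)$ has no atom at $0$: $\alpha_\ast+X'\beta_\ast$ has a Lebesgue density because $\beta_{m,\ast}\ne0$ and Assumption \ref{aspn-pos-density-Xm}(a) holds. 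So the third part is $o_p(1)$ pointwise in $h$.

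Pointwise convergence is enough by a convexity argument. The map $h\mapsto n[Q_n(\theta_\ast+h/\sqrt n)-Q_n(\theta_\ast)]-h'W_n$ is convex and converges pointwise in probability to $\tfrac12 h'\ddot Q(\theta_\ast)h$. The convexity lemma of Pollard (1991), as used by Hjort and Pollard, then upgrades this to uniform convergence on compacts. As an alternative, one could use the Lipschitz bracketing bound behind Lemma \ref{lemma-svm-cgce-rate-aux} together with a chaining argument.

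Finally, $\ddot Q(\theta_\ast)$ is positive definite by Lemma \ref{lemma-gradient-hessian-svm}, and $\hat h_n=\sqrt n(\hat\theta-\theta_\ast)$ minimizes the convex localized process. It is also $O_p(1)$ by Lemma \ref{lemma-svm-cgce-rate}, although the Hjort--Pollard argument gives this anyway. The basic corollary of the convexity lemma therefore yields
\[
\hat h_n = -\ddot Q(\theta_\ast)^{-1}W_n + o_p(1).
\]
The CLT for $W_n$ and Slutsky's theorem then give the stated normal limit with sandwich covariance $\ddot Q(\theta_\ast)^{-1}J\ddot Q(\theta_\ast)^{-1}$.
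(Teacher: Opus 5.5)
Your argument is correct, but it differs from the paper's at the two delicate steps.

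\textbf{The paper's route.} It follows the template of Theorem 5.23 in van der Vaart (2000). It uses the same three-way split of the localized criterion: empirical remainder $A_n$, population remainder $B_n$, and penalty $C_n$. It then controls $A_n$ uniformly along arbitrary random sequences $\tilde h_n=O_p(1)$ via Lemma 19.31 of van der Vaart (2000). That lemma rests on the Lipschitz bound $|\ell_{\theta_1}-\ell_{\theta_2}|\le\sqrt{1+\|x\|^2}\,\|\theta_1-\theta_2\|$ with a square-integrable constant. Next, the paper imports $\sqrt n(\hat\theta-\theta_\ast)=O_p(1)$ from Lemma \ref{lemma-svm-cgce-rate}, which itself rests on consistency, the modulus bound of Lemma \ref{lemma-svm-cgce-rate-aux}, and Theorem 3.2.5 of van der Vaart and Wellner. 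Finally, it compares the expansion at $\hat h_n$ with the expansion at the competitor $-\ddot Q(\theta_\ast)^{-1}W_n$; your $W_n$ is the paper's $G_n\dot\ell_{\theta_\ast}$.

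\textbf{Your route.} You control the empirical remainder only pointwise in $h$. You do this with an elementary second-moment bound that uses the piecewise linearity of the hinge and the fact that $\alpha_\ast+X'\beta_\ast$ has no atoms at $\pm1$. Convexity of $Q_n$ (hinge loss plus ridge penalty) then does the rest. The convexity lemma turns pointwise convergence into convergence uniform on compacts. The Hjort--Pollard basic corollary then gives tightness and the representation $\hat h_n=-\ddot Q(\theta_\ast)^{-1}W_n+o_p(1)$ at once.

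\textbf{What each buys.} Your route is more self-contained. It needs no bracketing or chaining, and you do not need Lemma \ref{lemma-svm-cgce-rate}, nor even the consistency result of Lemma \ref{lemma-plim-theta}; only the fact that $\theta_\ast$ is an interior minimizer, so that $\dot Q(\theta_\ast)=0$. Its price is that it depends essentially on convexity of the criterion. The paper's asymptotic-normality step uses only the Lipschitz-in-parameter property together with the separately established rate. It would therefore carry over to nonconvex losses, provided consistency and the rate are secured by other means.
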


The limit distribution is the same as that in \cite{koo-lee-kim-park-08}. However, one key assumption in \cite{koo-lee-kim-park-08} is that $\mathcal{L}(X|Y=\pm 1) $ have Lebesgue density functions which are continuous, whereas our results allow for discrete components of $X$.

\section{SVM and QMLE}

In the same BCM context, \cite{chang-park-yan-25} show that the QMLE, which specifies that $U$ is independent of $X$ and has distribution function $G$ such that $\log G, \log (1-G)$ are strictly concave, is slope consistent under Assumptions \ref{id-dep}, \ref{ld-mean}. 
However, the slope consistency of the QMLE does not require Assumption \ref{aspn-suff-cond-consistency-domclass}. 

This section contains four parts. First, we show that the SVM estimator is not a QMLE. Second, we interpret Assumption \ref{aspn-suff-cond-consistency-domclass}, which ensures the slope consistency of the SVM estimator, as a condition ruling out severe class imbalance. 
We provide an illustration to show that Assumption \ref{aspn-suff-cond-consistency-domclass} may fail under some specifications when, roughly speaking, two classes are imbalanced in the sense that $Y$ takes on one value more frequently than another. 
Third, we show that the failure of the SVM under severe class imbalance can be effectively addressed by using class-weighted SVM. Indeed, the class-weighted SVM is shown to be slope consistent under Assumptions \ref{id-dep} and \ref{ld-mean}, without requiring the non-severe imbalance condition in Assumption \ref{aspn-suff-cond-consistency-domclass}.
Lastly, we provide simulation studies to show that, when both SVM and QMLE are slope consistent, their finite sample performance may differ across specifications, with one potentially outperforming the other.

\subsection{Noninterpretability of SVM as QMLE}
In this section, we show that SVM's optimization problem is not equivalent to that of the quasi-maximum likelihood estimation (QMLE) for any specified distribution function. In this sense, the SVM cannot be viewed as a QMLE, even though both estimators are slope consistent under appropriate conditions.

The QMLE for BCM \eqref{bcm-linear}, which assumes that $U$ is independent of $X$ and has distribution function $G$, maximizes  
\[
Q_{qml}(\theta) = \mathbb{E}\big[ 1\{Y=1\} \log G(\alpha+X'\beta) + 1\{Y=-1\} \log(1-G(\alpha+X'\beta)) \big]
\]
in the limit, 
whereas the SVM minimizes
\begin{align*}
Q(\theta) & =\mathbb{E}\big[1-Y(\alpha+X'\beta)\big]_+ \\
& = \mathbb E \big[ 1\{Y=1\} (1-(\alpha+X'\beta) )_{+} + 1\{Y=-1\} (1+(\alpha+X'\beta) )_{+}  \big] .
\end{align*} 
For the SVM to be interpreted as the QMLE with a certain $G$, it is required that there exists some constant $\gamma >0$ such that
\begin{align*}
- \gamma (1-z)_{+} & = \log G(z)  \\
- \gamma (1+z)_{+} & = \log (1-G(z))
\end{align*} 
for all $z$. 
The first equality yields that $G(z) = e^{ - \gamma (1-z)_{+} }$ and the second yields $1-G(z) = e^{ - \gamma (1+z)_{+} }$. Unfortunately, there does not exist a constant $\gamma $ such that $e^{ - \gamma (1-z)_{+} } +  e^{ - \gamma (1+z)_{+} } = 1$ for all $z$, which gives a contradiction. 

\subsection{Differing Effects of Imbalance}\label{section-balance-illustration}

The condition in Assumption \ref{aspn-suff-cond-consistency-domclass} plays a very important role in the consistency of the SVM estimator. In this section, we discuss this condition and show that it can be understood as a \emph{non-severe imbalance condition}. 

We introduce a notion. 
For two probability measures $P$ and $Q$ on $\mathbb R$, $Q$ is said to dominate $P$ for $\Pi$, written as $P\prec_\Pi Q$, if $\int \Pi(v)P(dv) < \int \Pi(v)Q(dv)$.\footnote{If $Q$ stochastically dominates $P$ in the first order and $\Pi(v) = \mathbb P\{U\leq V|V=v\}$ is strictly increasing in $v$, then $Q$ always dominates $P$ for $\Pi$ in our sense.} We also say that $P$ decreases for $\Pi$ if $\int \Pi(v)P(dv)$ decreases as $P$ changes.  

\begin{lemma}\label{lemma-v-bar}
Let Assumption \ref{aspn-med0-error}, Assumption \ref{aspn-pos-density-Xm} (a)-(b) hold. 
We have (a) $\overline v >0$, (b) $\overline v$ increases up to $\infty$ as the classes get more balanced in the sense that $\mathcal L(V)$ decreases for $\Pi$ down to $1/2$, and (c) $\tau(\infty) > \sigma(-\infty)$.
\end{lemma}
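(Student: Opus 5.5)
The plan is to rewrite each of $p,q,\tau,\sigma$ as an expectation over $V$ alone, then compare integrands pointwise using the median-zero property. The identity I will use throughout is
\[
 \mathbb P\{U\le V,\ V\in A\} = \mathbb E\,\Pi(V)1\{V\in A\}.
\]
The ingredients are as follows.
\begin{itemize}
\item Assumption \ref{aspn-med0-error} together with the argument in Section \ref{section-preliminaries} gives $\Pi(v)<1/2$ for $v<0$ and $\Pi(v)\ge 1/2$ for $v\ge 0$.
\item Assumption \ref{aspn-pos-density-Xm} (a) gives $V$ a Lebesgue density that is everywhere positive. Hence every nondegenerate interval has positive $\mathcal L(V)$-probability, and $V$ has unbounded support on both sides.
\item Assumption \ref{aspn-pos-density-Xm} (b) gives $0<\Pi(V)<1$ almost surely.
\end{itemize}

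For part (a), recall that $p$ is strictly increasing and $\overline v = p^{-1}(q(-\infty))$ with $q(-\infty)=\mathbb P\{Y=-1\}$. It therefore suffices to show $p(0)<q(-\infty)$. I will write
\[
 p(0)=\mathbb E\,\Pi(V)1\{V<0\}.
\]
On $\{V<0\}$ we have $\Pi(V)<1-\Pi(V)$, and $\mathbb P\{V<0\}>0$. This gives
\[
 p(0)<\mathbb E\,(1-\Pi(V))1\{V<0\}\le \mathbb E\,(1-\Pi(V))=\mathbb P\{Y=-1\}=q(-\infty).
\]
The convention $p^{-1}(x)=\infty$ for $x\ge p(\infty)$ covers the balanced case $\mathbb P\{Y=1\}=1/2$.

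For part (b), I characterize $\overline v$ through the defining equation $p(\overline v)=1-p(\infty)$ when $p(\infty)>1/2$. This is equivalent to
\[
 \mathbb E\,\Pi(V)1\{V\ge \overline v\} = 2\int\Pi(v)\,\mathcal L(V)(dv)-1 .
\]
The left side is strictly decreasing and continuous in $\overline v$. It is also positive for every finite $\overline v$, because $\Pi>0$ a.s.\ and $V$ has unbounded support above. So as $\mathcal L(V)$ decreases for $\Pi$, the right side decreases and $\overline v$ must increase. As $\int\Pi\,d\mathcal L(V)\downarrow 1/2$, the right side tends to $0$, which forces $\overline v\to\infty$; at the limit $\overline v=\infty$.

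This is the step I expect to be the main obstacle, because ``$\mathcal L(V)$ decreases for $\Pi$'' also changes $\mathcal L(V)$ itself, and hence the left side. I will interpret the monotonicity statement for a family in which the tail functional on the left is held comparable, e.g.\ tight upper tails. If needed, I will state explicitly that the tail mass $\mathbb E\,\Pi(V)1\{V\ge v\}$ stays bounded away from $0$ for each fixed $v$ along the family. Under that condition the divergence $\overline v\to\infty$ follows directly.

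For part (c), I will write
\[
 \tau(\infty)-\sigma(-\infty)=\mathbb E\,V\big(1\{U\le V\}-1\{U>V\}\big)=\mathbb E\,V\big(2\Pi(V)-1\big),
\]
using the law of iterated expectations. The finiteness of $\mathbb E|V|$ comes from $\mathbb E\|X\|<\infty$, and I will note this as an implicit requirement. By the median-zero property, $V(2\Pi(V)-1)\ge 0$ almost surely, and it is strictly positive on $\{V<0\}$, which has positive probability. Hence the expectation is strictly positive.
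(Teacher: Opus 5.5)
Your proposal is correct and follows essentially the same route as the paper's proof: (a) from $\Pi(v)<1-\Pi(v)$ on $\{V<0\}$, giving $p(0)<\mathbb P\{Y=-1\}$ (the paper phrases this as $h(0)<0$ for a strictly increasing $h$); (b) from the defining equation of $\overline v$, whose right-hand side $\int\Pi\,d\mathcal L(V)-\int(1-\Pi)\,d\mathcal L(V)$ decreases to zero; and (c) from the pointwise inequality $V\big(2\Pi(V)-1\big)\ge 0$, strict on $\{V<0\}$. The caveats you raise are legitimate and are left implicit in the paper: its argument for (b) also treats the equation as if only the right-hand side moved while $\mathcal L(V)$ changes, (c) tacitly needs $\mathbb E|V|<\infty$, and the paper cites Assumption \ref{id-dep} (not among the lemma's hypotheses) where your conditioning on $V$ alone already suffices.
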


\noindent
In this lemma, we show that $\overline v$ decreases on $\mathbb R_+$ as the classes get more imbalanced, from which it follows immediately that Assumption \ref{aspn-suff-cond-consistency-domclass} becomes harder to be satisfied since $\tau$ is monotone increasing and $\tau(\overline v)$ decreases monotonically on $\mathbb R_+$. 
In fact, as we will demonstrate later, Assumption \ref{aspn-suff-cond-consistency-domclass} is violated if the class imbalance becomes severe. If the classes are balanced, we have $\overline v=\infty$, and Assumption \ref{aspn-suff-cond-consistency-domclass} always holds.

\begin{remark}[Differing effects of class imbalance on the slope consistency of the SVM and the QMLE]
As established in Theorem \ref{theorem-consistency-svm}, Assumptions \ref{id-dep}, \ref{ld-mean}, \ref{aspn-suff-cond-consistency-domclass} are the most essential conditions for the slope consistency of the SVM estimator. Among the three conditions, Assumption \ref{aspn-suff-cond-consistency-domclass} is the only one that is not required for the slope consistency of QMLE for BCM in \cite{chang-park-yan-25}. While this condition is crucial for the slope consistency of SVM, which may fail when two classes are severely imbalanced, QMLE's slope consistency is not affected by class imbalance. 
\end{remark}

In what follows, we provide an example in which Assumptions \ref{id-dep}, \ref{ld-mean} are satisfied, so that any QMLE—which specifies that $U$ is independent of $X$ and has distribution function $G$ such that $\log G, \log (1-G)$ are strictly concave—is slope consistent. 
However, depending on parameter values, Assumption \ref{aspn-suff-cond-consistency-domclass} may fail under some specifications when, roughly speaking, two classes are severely imbalanced in the sense that $Y$ takes on one value much more frequently than another.  
Furthermore, we show by Monte Carlo simulations that the slope consistency of the SVM is significantly influenced by the validity of Assumption \ref{aspn-suff-cond-consistency-domclass}.

\begin{figure}[!h]   
\caption{Existence of $c_\ast, r_\ast$ under $V=_d \mathbb N(\mu,1), U=_d \mathbb N(0,1)$}
\label{fig-existence-c-r}
\begin{center}
     
\includegraphics[width=0.7\textwidth]{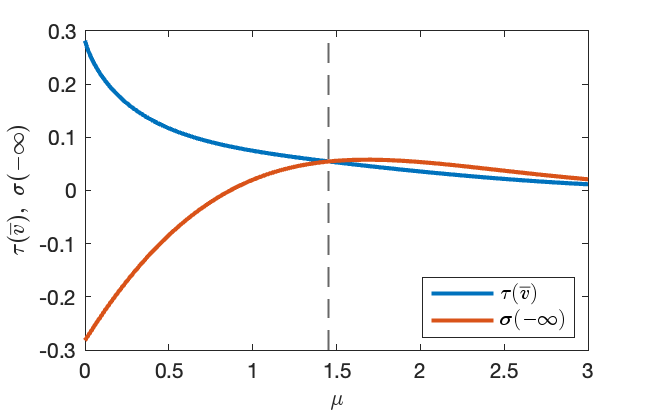}
\includegraphics[width=0.7\textwidth]{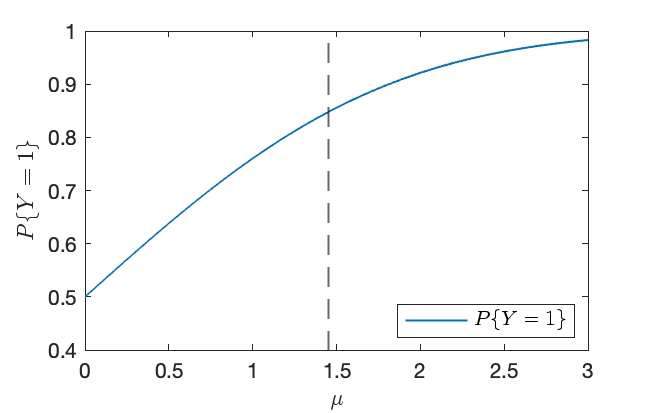}

\end{center}

\footnotesize
Notes: Under the specification $V=_d \mathbb N(\mu,1), U=_d \mathbb N(0,1)$ with varying $\mu$, the upper panel plots $\tau(\overline v), \sigma(-\infty)$ for each $\mu$.  
The broken line is the $\mu$ value at which $\tau(\overline v) =\sigma(-\infty)$.  
The lower graph plots $\mathbb P\{Y=1\}$ for each $\mu$ to indicate how imbalanced the two classes are. 
\end{figure}

\paragraph{Illustration: Imbalance and the SVM Slope Consistency} Let 
\[
V =_d \mathbb N(\mu,1), \quad \mbox{and} \quad U=_d \mathbb N(0,1)
\]
The underlying BCM can be that
\begin{equation}\label{dgp-illu-1}
Y =\sgn( \alpha+ \beta_1 X_1 + \beta_2 X_2 - U )
\end{equation}
where  $X_1,X_2,U =_d \mathbb N(0,1)$ are independent, and $\alpha=\mu, \beta_1 = \beta_2 = 1/\sqrt{2}$. 
Regardless of the value of $\mu$, any QMLE—which specifies that $U$ is independent of $X$ and has distribution function $G$ such that $\log G, \log (1-G)$ are strictly concave—is slope consistent, since Assumptions \ref{id-dep}, \ref{ld-mean} are satisfied; see Theorem 3.2 in \cite{chang-park-yan-25}.  
However, in what follows, we show that the SVM is consistent only when $\mu>0$ is not too large so that Assumption \ref{aspn-suff-cond-consistency-domclass} is satisfied. 

We vary $\mu$ and compute numerically $\tau(\overline{v}), \sigma(-\infty)$ 
for each $\mu$. The upper graph in Figure \ref{fig-existence-c-r} plots $\tau(\overline{v}), \sigma(-\infty)$ for each $\mu\in [0,3]$.  
When $\mu$ is small, $\tau(\overline{v}) >  \sigma(-\infty)$, i.e., Assumption \ref{aspn-suff-cond-consistency-domclass} is satisfied. However, when $\mu$ is greater than a certain value, Assumption \ref{aspn-suff-cond-consistency-domclass} fails, i.e., $\tau(\overline{v}) \leq  \sigma(-\infty)$. The broken line plots this threshold value, which is around 1.453. 
The lower graph in Figure \ref{fig-existence-c-r} plots $\mathbb P\{Y=1\}$ to indicate how imbalanced the two classes are.

We further show by Monte Carlo simulations that, the slope consistency of the SVM depends on whether Assumption \ref{aspn-suff-cond-consistency-domclass} holds or not.  
More specifically, for each $\alpha\in \{0,1,1.5,2,3\}$, we generate $n\in\{1000,2000,5000\}$ observations according to the DGP given by \eqref{dgp-illu-1} with $Nsim=2000$ replications. 
Note that Assumption \ref{aspn-suff-cond-consistency-domclass} holds for $\alpha\in\{0,1\}$, whereas it fails when $\alpha\in\{1.5,2,3\}$. 

\begin{table}[h!]
\caption{SVM slope (in-)consistency and degree of imbalance by simulation}
\label{table-svm-imbalance-wtsvm} 

\begin{center}
\centering
\begin{tabular}{c c c c c c c c c }
        \hline \hline					 
																					
		&		&	&	$\alpha=0$	&	$\alpha=1$	&	$\alpha=1.5$	&	$\alpha=2$	&	$\alpha=3$	\\ [0.5ex] \hline
		&		&	&		&		&		&		&		\\
		&	$\mathbb P\{Y=1\}$	&	&	0.5000	&	0.7602	&	0.8556	&	0.9214	&	0.9831	\\  
		&		&	&		&		&		&		&		\\ 
		\multicolumn{8}{c}{Panel A: SVM } 													\\ [1ex]
\multirow{2}{*}{1000 obs.}		&	Mean bias	&	&	0.0818	&	0.0951	&	0.5284	&	2.3083	&	6.0928	\\
		&	RMSE	&	&	0.1023	&	0.1215	&	3.0976	&	9.2884	&	58.4520	\\
		&		&	&		&		&		&		&		\\
\multirow{2}{*}{2000 obs.}		&	Mean bias	&	&	0.0585	&	0.0653	&	0.3412	&	4.5894	&	4.3255	\\
		&	RMSE	&	&	0.0729	&	0.0819	&	0.5149	&	66.3496	&	17.7460	\\
		&		&	&		&		&		&		&		\\
\multirow{2}{*}{5000 obs.}		&	Mean bias	&	&	0.0368	&	0.0403	&	0.2943	&	1.9163	&	4.8331	\\
		&	RMSE	&	&	0.0462	&	0.0503	&	0.4075	&	11.5083	&	27.3271	\\
		&		&	&		&		&		&		&		\\
		\multicolumn{8}{c}{Panel B: Class-weighted SVM } 													\\ [1ex]
\multirow{2}{*}{1000 obs.}		&	Mean bias	&	&	0.0818	&	0.0956	&	0.1105	&	0.1384	&	0.3251	\\
		&	RMSE	&	&	0.1022	&	0.1215	&	0.1397	&	0.1764	&	0.5934	\\
		&		&	&		&		&		&		&		\\
\multirow{2}{*}{2000 obs.}		&	Mean bias	&	&	0.0588	&	0.0639	&	0.0761	&	0.0983	&	0.1955	\\
		&	RMSE	&	&	0.0732	&	0.0810	&	0.0959	&	0.1242	&	0.2590	\\
		&		&	&		&		&		&		&		\\
\multirow{2}{*}{5000 obs.}		&	Mean bias	&	&	0.0369	&	0.0417	&	0.0488	&	0.0621	&	0.1181	\\
		&	RMSE	&	&	0.0464	&	0.0521	&	0.0614	&	0.0773	&	0.1499	\\
        
        \hline \hline 
\end{tabular}
\end{center}

\footnotesize
Notes: The DGP is $Y = \sgn(\alpha+\beta_1X_1+\beta_2X_2 - U)$, where $\beta_1=\beta_2=1/\sqrt{2}$, and  $X_1, X_2, U$ are independent. We set $X_1, X_2, U=_d \mathbb{N}(0,1)'$, so $V =_d \mathbb N(\alpha,1)$.  We select $\alpha\in\{0,1,1.5,2,3\}$, and note that Assumption \ref{aspn-suff-cond-consistency-domclass} holds for $\alpha\in\{0,1\}$, whereas it fails when $\alpha\in\{1.5,2,3\}$. 
The Monte Carlo simulations have $Nsim=2000$ replications, and for each replication, we choose sample sizes $n\in\{1000, 2000, 5000\}$. 
This table presents the bias and root mean square error of the rescaled slope estimator $\tilde \beta :=\hat \beta_1/\hat \beta_2$ given by SVM (Panel A) and class-weighted SVM (Panel B). 
\end{table}

Panel A in Table~\ref{table-svm-imbalance-wtsvm} presents the bias and root mean square error of the rescaled slope estimator $\tilde \beta := \hat \beta_1/\hat \beta_2$ given by the SVM, along with $\mathbb P \{Y=1\}$ for each $\alpha$. Panel A in Table~\ref{table-svm-imbalance-wtsvm} shows that, when $\alpha\in\{1.5,2,3\}$, the SVM slope is inconsistent since Assumption \ref{aspn-suff-cond-consistency-domclass}  fails, whereas it is consistent when $\alpha\in\{0,1\}$ and Assumption \ref{aspn-suff-cond-consistency-domclass} holds.

\subsection{Addressing Inconsistency of SVM due to Imbalance Using Class-weighted SVM}\label{section-svm-class-wgt}

In practice, applying the SVM is often coupled with the use of class weights, especially when there are more observations with $Y$ taking one value than another. More specifically, one may assign higher weights to the observations in the minority class and lower weights to those in the majority class, so that the number of observations in the two classes appears balanced. 
In this section, we show that the class-weighted SVM is slope consistent under Assumptions \ref{id-dep} and \ref{ld-mean}, 
without requiring the non-severe imbalance condition in Assumption \ref{aspn-suff-cond-consistency-domclass}.

A class-weighted SVM corresponds to minimizing
\[
Q_n^{w}(\theta) = \frac{1}{n} \sum_{i=1}^{n} \left[ 1\{Y_i=1\} \big( 1-(\alpha+X_i'\beta) \big)_{+} +  \hat w 1\{Y_i=-1\} \big( 1+(\alpha+X_i'\beta) \big)_{+} \right] + \frac{\lambda_n}{n}\|\beta\|^2
\]
where $\hat w$ is a weight assigned to the observations in the class with $Y=-1$. Let the class-weighted SVM estimator $\hat \theta^{w} =(\hat \alpha^{w}, \hat \beta^{w})'$ be a minimum of $Q_n^{w}(\theta)$. Define
\[
Q^{w}(\theta) = \mathbb E \left[ 1\{Y=1\} \big( 1-(\alpha+X'\beta) \big)_{+} +  w 1\{Y=-1\} \big( 1+(\alpha+X'\beta) \big)_{+} \right].
\]
where $w = \text{plim } \hat w$. 

In what follows, we set $\hat w = \hat{\mathbb P}\{Y=1\} /\hat{\mathbb P}\{Y=-1\}$, the ratio of sample proportions of observations coupled with $Y=1$ and $Y=-1$, since this is the most commonly used choice in practice. Thus, 
\[
w = \frac{\mathbb{P}\{Y=1\} }{\mathbb{P}\{Y=-1\} }. 
\]

\begin{theorem}\label{theorem-consistency-class-wgt-svm}
Let $(Y,X)$ be generated by the BCM satisfying Assumptions \ref{aspn-med0-error}, \ref{aspn-pos-density-Xm}, \ref{aspn-svm-hessian}, \ref{id-dep}, \ref{ld-mean}. Then the class-weighted SVM slope estimator $\hat\beta_w$ is consistent for $\beta_0$ (up to multiplication by a positive scalar).
\end{theorem}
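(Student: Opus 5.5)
The plan follows the argument for Theorem~\ref{theorem-consistency-svm}, with $Q$ replaced by $Q^w$.

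\textbf{Step 1: reduction to the population problem.} Since $\hat w\to_p w$ and $Q_n^w$ is convex in $\theta$, the pointwise limit of $Q_n^w$ is $Q^w$. Provided $Q^w$ has a unique interior minimizer $\theta_\ast^w$, convexity gives $\hat\theta^w\to_p\theta^w_\ast$ as in Lemma~\ref{lemma-plim-theta}; the $\lambda_n/n$ penalty and the estimation of $w$ add only $o_p(1)$ terms. The gradient and Hessian of $Q^w$ are computed as in Lemma~\ref{lemma-gradient-hessian-svm}, with the $Y=-1$ parts multiplied by $w$. For $\beta_m\neq0$ this gives
\[
\dot Q^w(\theta)=-\mathbb E\Big(1\{Y=1,\,\alpha+X'\beta<1\}-w1\{Y=-1,\,\alpha+X'\beta>-1\}\Big)\begin{pmatrix}1\\X\end{pmatrix}.
\]
The Hessian stays continuous and positive definite, because $w>0$ only rescales the $Y=-1$ component.

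\textbf{Step 2: the restricted problem.} Restrict to the parametrization \eqref{res} and set
\[
Q^w_\bullet(c,r)=\mathbb E\big[\Pi(V)(1-cV-r)_++w(1-\Pi(V))(1+cV+r)_+\big].
\]
Exactly as in \eqref{dotQ-dotQbullet}, Assumptions~\ref{id-dep} and~\ref{ld-mean} show the following. If $(c_\ast,r_\ast)$ minimizes $Q^w_\bullet$ with $c_\ast>0$, then $\dot Q^w$ vanishes at $c_\ast\theta_0+(r_\ast,0')'$. Convexity and the positive definite Hessian then make this point the unique minimizer $\theta^w_\ast$, so $\beta^w_\ast=c_\ast\beta_0$. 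What remains is to show that a minimizer of $Q^w_\bullet$ exists and has $c_\ast>0$, with no need for Assumption~\ref{aspn-suff-cond-consistency-domclass}.

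\textbf{Step 3: $c_\ast>0$, the main obstacle.} The weights balance the classes: $\mathbb E\,\Pi(V)=\mathbb P\{Y=1\}=w\,\mathbb P\{Y=-1\}=\mathbb E\,w(1-\Pi(V))$. This is the weighted analogue of the balanced case in Lemma~\ref{lemma-v-bar}, where $\overline v=\infty$ and the analogue of Assumption~\ref{aspn-suff-cond-consistency-domclass} holds automatically.

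\emph{Case $c=0$.} First rule out $c_\ast=0$. Here $Q^w_\bullet(0,r)$ is piecewise linear in $r$ with slope $-\mathbb P\{Y=1\}+w\mathbb P\{Y=-1\}=0$ on $|r|<1$, so its minimum is $2\mathbb P\{Y=1\}$. Now compute the directional derivative in $c$ at $(0,r)$ with $|r|<1$:
\[
-\mathbb E\big[V\big(\Pi(V)-w(1-\Pi(V))\big)\big]=-\operatorname{Cov}\big(V,\;(1+w)\Pi(V)\big).
\]
The equality uses the balance identity $\mathbb E[\Pi(V)-w(1-\Pi(V))]=0$. Under Assumption~\ref{aspn-med0-error}, $\Pi(v)\ge1/2$ for $v\ge0$ and $\Pi(v)<1/2$ for $v<0$, and Assumption~\ref{aspn-pos-density-Xm} gives $V$ a nondegenerate continuous distribution. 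From this I would show the covariance is strictly positive. The cleanest route is to write $\mathbb E[(V-t)g(V)]$ with $g=\Pi-w(1-\Pi)$ and $\mathbb E g=0$, choosing $t$ at the crossing point of $g$. The difficulty is that $g$ need not be monotone, so this step needs care and is where I expect the real work. Once positivity holds, moving in the direction $c>0$ strictly decreases the objective, which excludes $c\le0$ as a minimizer.

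\emph{Case $c<0$.} This is excluded by a symmetric comparison: the value at $(c,r)$ exceeds the value at $(-c,r')$ for a suitable $r'$, via the same sign structure of $\Pi-1/2$.

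\emph{Existence.} As $|c|+|r|\to\infty$, the positivity in Assumption~\ref{aspn-pos-density-Xm}(b) makes $Q^w_\bullet$ grow linearly, so $Q^w_\bullet$ is coercive and a minimizer exists. Uniqueness then follows from Step~1.
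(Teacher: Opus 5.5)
\textbf{Gap in Step~3.} The missing step is the one you flag, and it cannot be proved under the stated assumptions. Your directional derivative $-(1+w)\operatorname{Cov}(V,\Pi(V))$ is correct. But Assumption~\ref{aspn-med0-error} only fixes where $\Pi$ crosses $1/2$, whereas $g=(1+w)\Pi-w$ vanishes where $\Pi=w/(1+w)=\mathbb P\{Y=1\}$. Index dependence allows any continuous $\Pi$ with $\Pi(v)\ge 1/2$ iff $v\ge 0$; one realizes it by choosing $\mathcal L(U\mid V=v)$ with median $0$ and $\mathbb P\{U\le v\mid V=v\}=\Pi(v)$. So $g$ can change sign several times, and the covariance can be negative.

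\emph{Counterexample.} Take $V=X_2$ with $X_1$ independent of $X_2$, so Assumption~\ref{ld-mean} holds. Let $V$ put mass about $0.1$ just above $0$, where $\Pi\approx 1$; mass about $0.8$ near a large $M$, where $\Pi$ is just above $1/2$; and the rest on $v<0$, where $\Pi$ is just below $1/2$. Smooth everything so that Assumptions~\ref{aspn-pos-density-Xm} and~\ref{aspn-svm-hessian} hold. Then $\mathbb E\,V\Pi(V)\approx 0.4M$, while $\mathbb E V\,\mathbb E\Pi(V)\approx 0.8M\cdot 0.55$, so $\operatorname{Cov}(V,\Pi(V))<0$. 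Intuitively, the $Y=1$ class gets an extra cluster near $0$ that the $Y=-1$ class lacks, so $\mathbb E(V\mid Y=1)<\mathbb E(V\mid Y=-1)$. By convexity every minimizer of $Q^w_\bullet$ then has $c_\ast<0$. Your Step~2 argument, which only needs $c_\ast\neq 0$, then gives $\hat\beta^w\to_p c_\ast\beta_0$ with the wrong sign.

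\textbf{The paper has the same issue.} Its proof does not supply the missing idea. It finds a root of the first-order conditions by the intermediate value theorem between $h_w(v_0)<0$ and $\lim_{v\to\infty}h_w(v)$. That limit is $\mathbb E\,V\Pi(V)-w\,\mathbb E\,V(1-\Pi(V))=(1+w)\operatorname{Cov}(V,\Pi(V))$, exactly minus your directional derivative. The paper simply asserts $\mathbb E(V\mid U\le V)>\mathbb E(V\mid U>V)$ without proof.

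\textbf{What fixes it.} Add a hypothesis such as $v\mapsto\Pi(v)$ nondecreasing, which is automatic when $U$ is independent of $X$. With $V'$ an independent copy of $V$, $\operatorname{Cov}(V,\Pi(V))=\tfrac12\mathbb E[(V-V')(\Pi(V)-\Pi(V'))]$. The integrand is then nonnegative everywhere and strictly positive on $\{V<0\le V'\}$, an event of positive probability, so the covariance is positive.

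\textbf{Rest of your route.} Given that sign, your approach works: coercivity for existence, then convexity. It is a legitimate alternative to the paper's explicit root construction. The separate case $c<0$ needs no ``symmetric comparison,'' which you have not justified anyway. Since $\partial_r Q^w_\bullet(0,r)=0$ for $|r|<1$, convexity gives, for every $c\le 0$,
$Q^w_\bullet(c,r')\ge Q^w_\bullet(0,r)+c\,\partial_c Q^w_\bullet(0,r)\ge Q^w_\bullet(0,r)>\min Q^w_\bullet$.
Hence every minimizer has $c_\ast>0$.
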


Theorem \ref{theorem-consistency-class-wgt-svm} shows that the class-weighted SVM effectively addresses the inconsistency of the SVM that arises purely due to severe imbalance, i.e., when $\mathcal L(U|X) = \mathcal L(U|V)$ and $\mathbb E(X|V) = aV+ b$ are satisfied, but the non-severe imbalance condition in Assumption \ref{aspn-suff-cond-consistency-domclass} fails.

Panel B in Table \ref{table-svm-imbalance-wtsvm} demonstrates the consistency of the class-weighted SVM. In particular, when $\alpha\in \{1.5,2,3\}$ and Assumption \ref{aspn-suff-cond-consistency-domclass} is violated, the class-weighted SVM is still consistent, whereas the SVM is inconsistent as shown in Panel A.

\subsection{Finite Sample Performances}\label{section-illustration}

In this section, we illustrate by Monte Carlo experiments that, when both the SVM and logistic regression (as a QMLE) are slope consistent, their finite-sample performance may differ across specifications, with one potentially outperforming the other.

\begin{table}[t!]
\caption{Comparison of $\hat\beta_{SVM}$ and $\hat \beta_{LR}$ by simulation}
\label{table-svm-lr} 
\begin{center}
\centering
\begin{tabular}{c c c c c c c c c }
        \hline \hline					
        &		&		&	&	$\mu=0$	&	$\mu=1$	&	$\mu=2$	&	$\mu=3$	&	$\mu=4$	\\ [0.5ex] \hline
	&		&		&	&		&		&		&		&		\\
\multirow{4}{*}{250 obs.}	&	\multirow{2}{*}{Mean bias}	&	$\tilde \beta_{SVM}$	&	&	0.0844	&	0.0833	&	0.0706	&	0.0501	&	0.0381	\\
	&		&	$\tilde \beta_{LR}$	&	&	0.0755	&	0.0749	&	0.0778	&	0.0798	&	0.0805	\\ [1ex]
	&	\multirow{2}{*}{RMSE}	&	$\tilde \beta_{SVM}$	&	&	0.1058	&	0.1047	&	0.0902	&	0.0653	&	0.0497	\\
	&		&	$\tilde \beta_{LR}$	&	&	0.0945	&	0.0945	&	0.0984	&	0.1010	&	0.1012	\\
	&		&		&	&		&		&		&		&		\\
\multirow{4}{*}{500 obs.}	&	\multirow{2}{*}{Mean bias}	&	$\tilde \beta_{SVM}$	&	&	0.0591	&	0.0586	&	0.0485	&	0.0337	&	0.0257	\\
	&		&	$\tilde \beta_{LR}$	&	&	0.0530	&	0.0535	&	0.0553	&	0.0552	&	0.0553	\\  [1ex]
	&	\multirow{2}{*}{RMSE}	&	$\tilde \beta_{SVM}$	&	&	0.0735	&	0.0739	&	0.0612	&	0.0426	&	0.0324	\\
	&		&	$\tilde \beta_{LR}$	&	&	0.0665	&	0.0672	&	0.0695	&	0.0694	&	0.0696	\\
	&		&		&	&		&		&		&		&		\\
\multirow{4}{*}{2000 obs.}	&	\multirow{2}{*}{Mean bias}	&	$\tilde \beta_{SVM}$	&	&	0.0300	&	0.0296	&	0.0239	&	0.0168	&	0.0129	\\
	&		&	$\tilde \beta_{LR}$	&	&	0.0271	&	0.0268	&	0.0278	&	0.0285	&	0.0287	\\  [1ex]
	&	\multirow{2}{*}{RMSE}	&	$\tilde \beta_{SVM}$	&	&	0.0375	&	0.0367	&	0.0298	&	0.0209	&	0.0161	\\
	&		&	$\tilde \beta_{LR}$	&	&	0.0338	&	0.0335	&	0.0349	&	0.0357	&	0.0361	\\
        \hline \hline 
\end{tabular}
\end{center}

\footnotesize
Notes: The DGP is $Y = \sgn(\alpha+\beta_1X_1+\beta_2X_2 - U)$, where $\alpha=1/2, \beta_1=0, \beta_2=1$, and  $X_1, X_2, U$ are independent. We set $X_1, U=_d \mathbb{N}(0,1)'$, and $X_2$ is specified as the normal mixture $\frac{1}{2}\mathbb N(-\mu,1) + \frac{1}{2} \mathbb N(\mu,1)$ standardized to have variance one, where we select $\mu\in\{0,1,2,3,4\}$. 
The Monte Carlo simulations have $Nsim=2000$ replications, and for each replication, we choose sample sizes $n\in\{250, 500, 2000\}$. 
This table compares the bias and root mean square error of the slope estimator $\hat \beta$ given by the SVM and logistic regression (LR). 
For a fair comparison, we consider the rescaled slope estimator $ \tilde \beta = \hat \beta_1/\hat\beta_{2}$.
\end{table}

For the data-generating processes, we assume that 
\[
Y = \mbox{sgn}( \alpha+ \beta_1 X_1 + \beta_2 X_2 - U)
\]
where $\beta_1 = 0, \beta_2 =1$, $\alpha=1/2$. 
We let $X_1, U =_d \mathbb{N}(0,1)$, and $X_1,X_2,U$ be independent of each other. 
The distribution of $X_2$ is specified based on the following normal mixture distribution standardized to have variance one, 
\begin{align*}
	X_2 =_d \mbox{std} \left( \frac{1}{2} \mathbb{N}(-\mu,1) +  \frac{1}{2} \mathbb{N}(\mu,1)  \right)
\end{align*}
for some $\mu$, where $\mbox{std}(\cdot)$ denotes standardization to mean zero and variance one. Note that Assumption \ref{ld-mean} is satisfied since $V=\alpha+X_2$ and $\mathbb E(X_1|V) = 0$. Assumption \ref{aspn-suff-cond-consistency-domclass} is verified numerically for $\mu\in\{0,1,2,3,4\}$, and thus, the SVM slope estimator is consistent.

For each $\mu\in\{0,1,2,3,4\}$, we generate $n\in\{250,500,2000\}$ observations with $Nsim=2000$ replications. In each replication, we obtain the estimates $(\hat \alpha, \hat \beta_1, \hat \beta_2)'$ of the parameter $(\alpha,\beta_1,\beta_2)$ given by the SVM and QMLE specifying the logistic distribution for the error term, referred to as logistic regression (LR). 
To compare the finite sample performance of estimators given by the SVM and LR, we rescale the estimators so that the last component of the rescaled estimator is set to 1. We divide each of the estimators by its last component, 
and, as before, focus on the rescaled slope estimator $\widetilde{\beta} = \hat \beta_1/\hat \beta_2$. 
Table \ref{table-svm-lr} presents the mean bias and RMSE of the rescaled slope estimator given by the SVM and LR. 
This table shows that, when $\mu\in\{0,1\}$, the LR slope estimator is more precise than that of the SVM, whereas the SVM performs better when $\mu\in\{2,3,4\}$.

\section{Conclusion}

In this paper, we show that the slope of the separating hyperplane given by the SVM consistently estimates the BCM slope parameter under the linear conditional mean condition for covariates given the systematic component used in the QMLE slope consistency literature. The SVM slope estimator is asymptotically equivalent to that of logistic regression in this sense. When binary outcomes are not severely imbalanced, this asymptotic equivalence holds as stated. Otherwise, the SVM should be used with class weights for this asymptotic equivalence to hold. We establish consistency of the SVM estimator for the slope parameter without assuming that a pseudo-true value exists. This makes our proof nonstandard and nontrivial. It is, however, essential because a pseudo-true value may not exist for the SVM estimator when there is severe imbalance in binary outcomes.

The asymptotic equivalence of the SVM estimator with the QMLE is quite surprising, since the SVM was originally developed as an algorithm that does not rely on any statistical or econometric model. Their asymptotics are indeed largely comparable, although the SVM estimator cannot be defined as a QMLE. The conditions required for slope consistency of the SVM estimator and QMLE are not overly restrictive, and can be made to hold by giving appropriate weights to the observations. For the BCM, once a consistent estimator for the slope parameter is available, we may readily obtain a consistent estimator for the intercept parameter as well. The asymptotic equivalence we established in the paper relates the SVM and logistic regression, the two most commonly used machine learning tools, more closely to each other. However, their finite-sample performances can be quite distinct depending on the distributions of covariates and errors. It seems that they will both survive, since neither dominates the other.

\clearpage 

\section*{Appendix}

\appendix

\section{Proof of Lemma~\ref{lemma-gradient-hessian-svm}}

We introduce some notations for the proof. Let $X_{-m}= (X_1,X_2,\dots,X_{m-1})'$, and $\tilde X = (1,X)$. We write $\tilde x_{j}$ the $j$-th element of $\tilde x$. 
Recall $\Pi(x) = \mathbb P\{Y=1|X=x\}$. 
When the arguments of $\Pi(x)$ need to be more explicit, let $\Pi(x_m,x_{-m} ) = \mathbb P\{ Y=1|X_m=x_m, X_{-m}=x_{-m} \}$. 
Let $\iota_j \in \mathbb R^{1+m}$ be the vector whose $j$-th element is 1 and all other elements are zero. 
Let $a \wedge b = \min\{a,b\}$ for any $a,b\in \mathbb R$. 

In what follows, we let $\theta =(\alpha,\beta')'$ be fixed arbitrarily such that $\beta_m\neq 0$. By Assumption \ref{aspn-pos-density-Xm} (a) and $\beta_m \neq 0$, $\theta'\tilde X$ has a Lebesgue density.  

\paragraph{Gradient:} We first show that \eqref{gradient-svm} is the gradient of $Q(\theta)$. 
Let $j=1,\dots,1+m$ be fixed arbitrarily. 
It suffices to show 
\begin{align}\label{pf-gradient-Q-main}
\frac{\partial}{\partial \theta_{j}} Q(\theta) 
= - \mathbb E   \Pi(X) 1\big\{  \theta' \tilde X < 1 \big\} \tilde X_j  
 + \mathbb E (1-\Pi(X)) 1\big\{ \theta' \tilde X > -1 \big\} \tilde X_j  
\end{align}
Write 
\begin{align*}
Q(\theta)  
& = \mathbb E 1\{Y=1\} ( 1- (\alpha+X'\beta))_{+}  + \mathbb E 1\{Y=-1\} ( 1+(\alpha+X'\beta))_{+} \\
& = \mathbb E \Pi(X) (1- \theta'\tilde X)_{+} + \mathbb E (1-\Pi(X) ) (1 + \theta'\tilde X)_{+} \\
& =: Q_{+}(\theta) + Q_{-}(\theta) . 
\end{align*} 
Note that 
\begin{align*}
&  Q_{+}(\theta+ t \iota_j) - Q_{+}(\theta)  
= \mathbb E \Pi(X) \big( (1- \theta'\tilde X - t \tilde X_j)_{+}  - (1- \theta'\tilde X )_{+} \big)  \notag \\
& = \mathbb E \Pi(X) 1\big\{ \theta' \tilde X \leq (1-t\tilde X_j)\wedge 1  \big\} (-t\tilde X_j)   \notag \\
& \quad + \mathbb E \Pi(X)  1\big\{t\tilde X_j >0, 1-t\tilde X_j < \theta' \tilde X \leq  1\big\}  \big(- (1-\theta'\tilde X) \big) \notag  \\
& \quad + \mathbb E \Pi(X)  1\big\{t\tilde X_j \leq 0, 1 < \theta' \tilde X \leq 1-t\tilde X_j \big\}  \big( 1-\theta'\tilde X -t\tilde X_j \big)  \label{pf-gradient-Qpos-aux1} 
\end{align*}
By the dominated convergence theorem and that $\theta'\tilde X$ has a Lebesgue density,  
\begin{equation}\label{pf-gradient-Qpos-aux1}
\lim_{t\to 0} \frac{1}{t} \mathbb E \Pi(X) 1\big\{  \theta' \tilde X \leq (1-t\tilde X_j)\wedge 1 \big\} (-t\tilde X_j)  
= - \mathbb E \Pi(X) 1\big\{ \theta' \tilde X \leq 1 \big\} \tilde X_j  .
\end{equation}
Moreover, 
\begin{align}
& \lim_{t\to 0}  \bigg| \frac{1}{t}   \mathbb E \Big( \Pi(X)  1\big\{ t\tilde X_j >0, 1-t\tilde X_j < \theta' \tilde X \leq 1\big\}  (1-\theta'\tilde X )   \Big) \bigg|  \notag \\
& \leq \lim_{t\to 0}   \frac{1}{|t|} \mathbb E \Big(  \Pi(X) 1\big\{ 1-t\tilde X_j < \theta' \tilde X \leq 1\big\}  \big|t\tilde X_j\big|  \Big) \notag \\ 
& = \lim_{t\to 0} \mathbb E \Big(  \Pi(X) 1\big\{ 1-t\tilde X_j <\theta' \tilde X \leq 1\big\} \big|\tilde X_j\big|  \Big)  
= 0 , \label{pf-gradient-Qpos-aux2}
\end{align} 
where the last equality holds since $\theta'\tilde X$ has a Lebesgue density. 
Similarly, 
\begin{equation}\label{pf-gradient-Qpos-aux3}
\lim_{t\to 0}  \bigg| \frac{1}{t}   \mathbb E \Big( \Pi(X)   1\big\{ t\tilde X_j \leq 0, 1 < \theta' \tilde X \leq 1-t\tilde X_j  \big\}  (1-\theta'\tilde X - t\tilde X_j)   \Big)  \bigg| 
= 0 , 
\end{equation} 
upon noticing that $ t\tilde x_j \leq 0$ and $1 < \theta' \tilde x \leq 1-t\tilde x_j $  imply $ 0\leq 1-\theta'\tilde x - t\tilde x_j < -t \tilde x_j$. 
Combining \eqref{pf-gradient-Qpos-aux1}, \eqref{pf-gradient-Qpos-aux2}, \eqref{pf-gradient-Qpos-aux3} yields that 
\begin{align}\label{pf-gradient-Qpos-final}
& \frac{\partial}{\partial \theta_{j}} Q_{+}(\theta) 
= \lim_{t\to 0} \frac{1}{t} \Big( Q_{+}(\theta+ t \iota_j) - Q_{+}(\theta)  \Big)  = - \mathbb E \Pi(X) 1\big\{ \theta' \tilde X \leq 1 \big\} \tilde X_j  .
\end{align}
By virtually the same arguments, 
\begin{align*}
\frac{\partial}{\partial \theta_{j}} Q_{-}(\theta) = \mathbb E (1-\Pi(X)) 1\big\{  \theta' \tilde X \geq -1 \big\} \tilde X_j  . 
\end{align*}
Together with \eqref{pf-gradient-Qpos-final} and $\theta'\tilde X$ having a Lebesgue density, \eqref{pf-gradient-Q-main} holds as was to be shown.

\paragraph{Hessian and Continuity: }
Now we show that \eqref{hessian-svm} is the Hessian matrix of $Q(\theta)$. 
Let $j, k=1,\dots, m+1$ be fixed arbitrarily. 
By \eqref{pf-gradient-Q-main},
\begin{align}\label{pf-hessian-alpha-aux}
& \frac{\partial}{\partial \theta_k} \frac{\partial}{\partial \theta_j} Q(\theta) 
= \lim_{t\to 0} \frac{1}{t} \bigg( \frac{\partial}{\partial \theta_j} Q(\theta+t \iota_k ) - \frac{\partial}{\partial \theta_j} Q(\theta) \bigg) \notag  \\
& = \lim_{t\to 0} \frac{1}{t}  \bigg( \mathbb E h_+(t; X_{-m})  +   \mathbb E h_-(t; X_{-m})  \bigg), 
\end{align}
where
\begin{align*}
h_+(t; x_{-m}) 
& := \mathbb E \Big(  \Pi(X) \big( 1\{ \theta'\tilde X <1 \} -  1\{  \theta'\tilde X + t\tilde X_k <1 \} \big) \tilde X_j \Big| X_{-m} = x_{-m} \Big) \\
h_-(t; x_{-m}) 
& := \mathbb E \Big(  (1-\Pi(X)) \big( 1\{ \theta'\tilde X + t\tilde X_k > -1  \} -  1\{  \theta'\tilde X  >-1\} \big) \tilde X_j \Big| X_{-m} = x_{-m} \Big) .
\end{align*} 
Define $x_{m,\ast}^+(x_{-m}) = \frac{1-\alpha-\beta_{-m}'x_{-m}}{\beta_m}$, $x_{m,\ast}^-(x_{-m}) = \frac{-1-\alpha-\beta_{-m}'x_{-m}}{\beta_m}$, and
\begin{align*}
&\gamma^+(x_m,x_{-m}) =  p(x_m|x_{-m}) \Pi(x_m,x_{-m} ), \quad  \gamma^+_\ast(x_{-m}) = \gamma^+( x_{m,\ast}^+(x_{-m}), x_{-m} )   \\  
&\gamma^-(x_m,x_{-m}) =  p(x_m|x_{-m}) \big( 1-\Pi(x_m,x_{-m} ) \big), \quad \gamma^-_\ast(x_{-m}) =  \gamma^-( x_{m,\ast}^-(x_{-m}), x_{-m} ) .  
\end{align*} 
By Assumption \ref{aspn-svm-hessian} (b)-(c),  $\gamma^+(x_m,x_{-m})$, $\gamma^-(x_m,x_{-m})$ are continuous in $x_m$ and bounded by $K$.  
In what follows, we focus on the limit in \eqref{pf-hessian-alpha-aux} when $t\to 0+$, since the proof for the limit when $t\to 0-$ is essentially the same. 

Consider the first case that $k\neq m+1$. For each $t,x_{-m}$, define the interval 
\[
I(t, x_{-m} ) =  
\Big\{ x_m \in \mathbb R:  x_m \text{ between } x_{m,\ast}^+(x_{-m}) - t\tilde x_k/\beta_m  \text{ and }  x_{m,\ast}^+(x_{-m})  
\Big\}, 
\]
which has length $\frac{ |t\tilde x_k| }{ |\beta_m|}$. 
Then 
\begin{align}\label{pf-hessian-pos-hpos}
h_+(t; x_{-m})  
& =  
1\{ t\tilde x_k >0\} \int  \gamma^+(x_m,x_{-m} )  1\big\{ x_m \in I(t, x_{-m}) \big\}  \tilde x_j dx_m \notag  \\
& \quad - 1\{ t\tilde x_k <0\} \int  \gamma^+(x_m,x_{-m} )  1\big\{ x_m \in I(t, x_{-m}) \big\}  \tilde x_j dx_m . 
\end{align}
Considering cases $\tilde x_k \gtreqless 0$ separately in \eqref{pf-hessian-pos-hpos} yields 
\begin{align}\label{pf-hessian-pos-aux-1}
\lim_{t\to 0+} \frac{h_+(t; x_{-m}) }{t}  
= \begin{cases}
\gamma^+_\ast(x_{-m}) \frac{  \tilde x_k }{ |\beta_m|} x_{m,\ast}^+(x_{-m}) &  \text{ if } j=m+1  \\
\gamma^+_\ast(x_{-m}) \frac{  \tilde x_k }{ |\beta_m|} \tilde x_j & \text{ if } j\neq m+1
\end{cases}, 
\end{align}
since $\gamma^+(x_m,x_{-m})$, $\gamma^-(x_m,x_{-m})$ are continuous in $x_m$. 
Thus, 
\begin{align}
&  
\lim_{t\to 0+} \mathbb E \frac{h_+(t; X_{-m})}{t}  
= \mathbb E  \lim_{t\to 0+} \frac{h_+(t; X_{-m})}{t}  \notag \\
& = 
\begin{cases}
\frac{1}{|\beta_m|} \mathbb E  \gamma^+_\ast(X_{-m})  \tilde X_k  x_{m,\ast}^+(X_{-m}) &  \text{ if } j=m+1  \\
\frac{1}{|\beta_m|} \mathbb E  \gamma^+_\ast(X_{-m})  \tilde X_k  \tilde X_j & \text{ if } j\neq m+1
\end{cases}  \label{pf-hessian-pos-aux-2}  \\
& =  \mathbb E \Pi(X) \delta\big( 1- (\alpha+X'\beta) \big) \tilde X_j \tilde X_k  \label{pf-hessian-pos-aux-3}
\end{align}
where the first equality is by the dominated convergence theorem and $\big|\gamma^+(x_m,x_{-m}) \big| \leq K$. 
The second equality in \eqref{pf-hessian-pos-aux-2} is by \eqref{pf-hessian-pos-aux-1}. 
To see the third equality in \eqref{pf-hessian-pos-aux-3}, notice 
\begin{align*}
&   \mathbb E \Big( \Pi(X) \delta\big( 1- (\alpha+X'\beta) \big) \tilde X_j \tilde X_k \Big| X_{-m} = x_{-m} \Big) \\
& = \int \gamma^+(x_m,x_{-m} ) \delta\big( 1- (\alpha+\beta_{-m}'x_{-m} + \beta_m x_m ) \big)  \tilde x_j \tilde x_k dx_{m} \\
& = 
\begin{cases}
\frac{\tilde x_k}{|\beta_m| } \gamma^+_\ast(x_{-m} )  x_{m,\ast}^+(x_{-m})  &  \text{ if } j=m+1 \\
\frac{\tilde x_k}{|\beta_m| } \gamma^+_\ast(x_{-m} ) \tilde x_j &  \text{ if } j\neq m+1 ,
\end{cases}
\end{align*}
where the last line follows from the property of the Dirac delta function that 
\begin{align}\label{dirac-delta-property}
\int \delta(c_1 t + c_2) f(t) dt = \frac{1}{|c_1|} f(-c_2/c_1), \quad \mbox{for any} \quad c_1\neq 0, c_1,c_2 \in \mathbb R . 
\end{align}
Using the same arguments, we can show
\begin{align*}
& \lim_{t\to 0+} \mathbb E \frac{h_{-}(t; X_{-m})}{t}   
= 
\begin{cases}
\frac{1}{|\beta_m|} \mathbb E  \gamma^-_\ast(X_{-m})  \tilde X_k x_{m,\ast}^-(X_{-m}) &  \text{ if } j=m+1  \\
\frac{1}{|\beta_m|} \mathbb E  \gamma^-_\ast(X_{-m})  \tilde X_k  \tilde X_j & \text{ if } j\neq m+1
\end{cases}  \\
& =  \mathbb E (1-\Pi(X)) \delta\big( 1+ (\alpha+X'\beta) \big) \tilde X_j \tilde X_k  .
\end{align*}
Together with \eqref{pf-hessian-pos-aux-2} and \eqref{pf-hessian-pos-aux-3}, \eqref{pf-hessian-alpha-aux} implies 
\begin{align}
& \frac{\partial}{\partial \theta_k} \frac{\partial}{\partial \theta_j} Q(\theta)  \notag \\
& = \begin{cases}
\frac{1}{|\beta_m|} \mathbb E \Big(  \gamma^+_\ast(X_{-m}) x_{m,\ast}^+(X_{-m}) +   \gamma^-_\ast(X_{-m}) x_{m,\ast}^-(X_{-m}) \Big) \tilde X_k    &  \text{ if } j=m+1  \\
\frac{1}{|\beta_m|} \mathbb E \Big(  \gamma^+_\ast(X_{-m}) +   \gamma^-_\ast(X_{-m}) \Big) \tilde X_k  \tilde X_j & \text{ if } j\neq m+1
\end{cases} \label{pf-hessian-case1-main}  \\
& =  \mathbb E \Big( \Pi(X)) \delta\big( 1- (\alpha+X'\beta) \big) +  (1-\Pi(X)) \delta\big( 1+ (\alpha+X'\beta) \big) \Big) \tilde X_j \tilde X_k   \notag \\
& = \mathbb E \delta\big( 1 - Y (\alpha+X'\beta) \big) \tilde X_j \tilde X_k , \label{pf-hessian-case1-main-2}
\end{align}
which shows \eqref{hessian-svm} for case $j, k =1,\dots, m+1$ such that $k\neq m+1$.

\medskip
For the other case that $k = m+1$, define $x_{m}^+(t,x_{-m}) = \frac{1- \alpha- \beta_{-m}'x_{-m}}{\beta_m + t} $ for each $x_{-m}$ and all $t>0$ small. 
Then define the interval
\[
J(t, x_{-m} ) =  
\big\{ x_m \in \mathbb R:  x_m \text{ between } x_{m}^+(t,x_{-m}) \text{ and }  x_{m,\ast}^+(x_{-m}) 
\big\}. 
\]
Note that
$J(t,x_{-m})$ has length $\frac{  |t| }{ | \beta_m+t|} |x_{m}^+(t,x_{-m})| $. 
Considering cases $\beta_m \gtrless 0$ separately for $h_+(t; x_{-m}) $ yields 
\begin{align*} 
h_+(t; x_{-m})  
& = \sgn(\beta_m) 1\big\{  1-\alpha-\beta_{-m}'x_{-m} >0 \big\}   \int_{J(t, x_{-m})}  \gamma^+(x_m,x_{-m} ) \tilde x_j  dx_m \notag  \\
& \quad -  \sgn(\beta_m)  1\big\{  1-\alpha-\beta_{-m}'x_{-m} <0 \big\}    \int_{J(t, x_{-m})}  \gamma^+(x_m,x_{-m} )  \tilde x_j dx_m, 
\end{align*}
and thus, 
\begin{align*}
\lim_{t\to 0+} \frac{h_+(t; x_{-m}) }{t}   
= 
\begin{cases} 
\frac{1}{|\beta_m|}  \gamma^+_\ast(x_{-m})  x_{m,\ast}^+ (x_{-m})  x_{m,\ast}^+ (x_{-m})   & \text{ if } j= m+1 \\
\frac{1}{|\beta_m|} \gamma^+_\ast(x_{-m}) x_{m,\ast}^+ (x_{-m}) \tilde x_j  & \text{ if } j\neq m+1 .
\end{cases}
\end{align*}
Using the same arguments as for the previous case $k\neq m+1$, 
\begin{align*}
&  \lim_{t\to 0+} \mathbb E \frac{h_+(t; X_{-m})}{t}  
= 
\begin{cases}
\frac{1}{|\beta_m|} \mathbb E  \gamma^+_\ast(X_{-m})  x_{m,\ast}^+ (X_{-m}) x_{m,\ast}^+ (X_{-m})&  \text{ if } j=m+1  \\
\frac{1}{|\beta_m|} \mathbb E  \gamma^+_\ast(X_{-m})  x_{m,\ast}^+ (X_{-m})  \tilde X_j & \text{ if } j\neq m+1
\end{cases}    \\
& =  \mathbb E \Pi(X) \delta\big( 1- (\alpha+X'\beta) \big) \tilde X_j \tilde X_k . 
\end{align*}
Similarly, 
\begin{align*}
&  \lim_{t\to 0+} \mathbb E \frac{h_-(t; X_{-m})}{t}  
= 
\begin{cases}
\frac{1}{|\beta_m|} \mathbb E  \gamma^-_\ast(X_{-m}) x_{m,\ast}^- (X_{-m})  x_{m,\ast}^- (X_{-m}) &  \text{ if } j=m+1  \\
\frac{1}{|\beta_m|} \mathbb E  \gamma^-_\ast(X_{-m}) x_{m,\ast}^- (X_{-m})  \tilde X_j & \text{ if } j\neq m+1
\end{cases}    \\
& =  \mathbb E ( 1- \Pi(X) ) \delta\big( 1+ (\alpha+X'\beta) \big) \tilde X_j \tilde X_k . 
\end{align*}
Therefore, for $k=1+m$, 
\begin{align}
& \frac{\partial}{\partial \theta_{k}} \frac{\partial}{\partial \theta_j} Q(\theta) \notag   \\
& = \begin{cases}
\frac{1}{|\beta_m|} \mathbb E \Big(  \gamma^+_\ast(X_{-m}) \big( x_{m,\ast}^+ (X_{-m})  \big)^2 +   \gamma^-_\ast(X_{-m}) \big( x_{m,\ast}^- (X_{-m})  \big)^2  \Big)    &  \text{if } j=m+1   \\
\frac{1}{|\beta_m|} \mathbb E \Big(  \gamma^+_\ast(X_{-m}) x_{m,\ast}^+ (X_{-m}) +   \gamma^-_\ast(X_{-m}) x_{m,\ast}^- (X_{-m})  \Big)   \tilde X_j & \text{if } j\neq m+1 
\end{cases}  \label{pf-hessian-case2-main} \\
& =  \mathbb E \Big( \Pi(X)) \delta\big( 1-(\alpha+X'\beta) \big) +  (1-\Pi(X)) \delta\big( 1+ (\alpha+X'\beta) \big) \Big) \tilde X_j \tilde X_k  \notag \\
& = \mathbb E \delta\big( 1 - Y (\alpha+X'\beta) \big) \tilde X_j \tilde X_{k} . \label{pf-hessian-case2-main-2}
\end{align}
Therefore, the proof for \eqref{hessian-svm} is complete by \eqref{pf-hessian-case1-main-2} for $j,k=1,\dots,m+1$ such that $k\neq m+1$, and \eqref{pf-hessian-case2-main-2} for case $k=m+1$. 

Furthermore, \eqref{pf-hessian-case1-main} and  \eqref{pf-hessian-case2-main} imply that $\frac{\partial}{\partial \theta_k} \frac{\partial}{\partial \theta_j} Q(\theta)$ is continuous in $\theta$, due to Assumption~\ref{aspn-svm-hessian} (b)-(c).

\paragraph{Positive Definiteness of Hessian}  
To show $\ddot{Q}(\theta)$ is positive definite, it suffices to show that for any $\theta_\bullet = (\alpha_\bullet, \beta_\bullet')' \neq 0$,  $\theta_\bullet'\ddot{Q}(\theta)\theta_\bullet > 0$. 
Notice 
\begin{align*}
\theta_\bullet'\ddot{Q}(\theta)\theta_\bullet = R_1 + R_2,
\end{align*}
where
\begin{align*}
R_1 & := f_{\alpha+X'\beta}(+1) \mathbb{P}\{ Y = +1 |\alpha + X'\beta = +1 \} \mathbb{E} \big( (\theta_\bullet' \tilde X)^2 \big| Y = +1, \alpha + X'\beta = +1   \big) \\
R_2 & :=  f_{\alpha+X'\beta}(-1) \mathbb{P}\{ Y = -1 |\alpha + X'\beta = -1 \} \mathbb{E} \big(  (\theta_\bullet' \tilde X)^2 \big| Y = -1, \alpha + X'\beta = -1   \big) .
\end{align*}
Since $\alpha+X'\beta$  has a positive density on $\mathbb{R}$, $f_{\alpha+X'\beta}(\pm 1)>0$.  
Moreover, 
\[
\mathbb{P}\{ Y = +1 |\alpha + X'\beta = +1 \}, \quad  \mathbb{P}\{ Y = -1 |\alpha + X'\beta = -1 \} >0.
\]
by Assumption \ref{aspn-pos-density-Xm} (b). 
Thus, to show $\theta_\bullet'\ddot{Q}(\theta)\theta_\bullet = R_1 + R_2>0$, it suffices to show that at least one of the following holds, i.e.,  
\begin{align*}
& \mathbb{E} \big( (\theta_\bullet' \tilde X)^2 \big| Y = +1, \alpha + X'\beta = +1   \big) >0 \\
& \mathbb{E} \big( (\theta_\bullet' \tilde X)^2 \big| Y = -1, \alpha + X'\beta = -1   \big) >0, 
\end{align*}
which is satisfied if at least one of the following holds, i.e.,  
\begin{align*}
S_1 &:= \mathbb P\big\{ \theta_\bullet' \tilde X = 0  \big| Y = +1, \alpha + X'\beta = +1  \big\} < 1 \\
S_2 &:= \mathbb P\big\{ \theta_\bullet' \tilde X = 0  \big| Y = -1, \alpha + X'\beta = -1  \big\} < 1 .
\end{align*}
We will verify it for both cases of $\theta_\bullet$:  (I) $\beta_\bullet \in \text{span}(\beta)$, (II) $\beta_\bullet \notin \text{span}(\beta)$.  

For case (I) that $\beta_\bullet \in \text{span}(\beta)$, $\beta_\bullet = c_1\beta$ for some $c_1\in \mathbb R$. 
Then $\theta_\bullet' \tilde X = c_1(\alpha+\beta'X) +c_2$ for $c_2:= \alpha_\bullet - c_1 \alpha \in \mathbb R$. 
Thus,  
\[
S_1 
= \mathbb P\{ c_1+c_2 = 0\big| Y = +1, \alpha + X'\beta = +1 \} ,
\]
which equals one if and only if $c_1=-c_2$, 
and 
\[
S_2 = \mathbb P\{ -c_1+c_2 = 0\big| Y = -1, \alpha + X'\beta = -1 \} ,
\]
which equals one if and only if $c_1=c_2$. 
Hence, $S_1= S_2 =1$ if and only if $c_1=c_2=0$, which holds if and only if $\theta_\bullet = 0$. This contradicts against $\theta_\bullet \neq 0$, and proves that at least one of $S_1, S_2$ is strictly less than 1.

For case (II) $\beta_\bullet\notin \text{span}(\beta)$, suppose that $S_1=1$. Then 
\begin{align*}
\mathbb P\big\{ \theta_\bullet' \tilde X \neq 0  \big| Y = 1, \alpha + X'\beta = 1  \big\} = 1-S_1 = 0, 
\end{align*}
which implies 
\begin{align}\label{pf-hessian-pd-aux-1}
& \mathbb P\big\{ \theta_\bullet' \tilde X \neq 0, Y = 1  \big| \alpha + X'\beta = 1  \big\} \notag \\
& = \mathbb P\big\{ \theta_\bullet' \tilde X \neq 0  \big| Y = 1, \alpha + X'\beta = 1  \big\}  \mathbb P \big\{Y = 1\big| \alpha + X'\beta = 1  \big\} =0. 
\end{align}
By Assumption~\ref{aspn-pos-density-Xm} (a), (c) and that $\beta_\bullet \notin \text{span}(\beta)$, 
\[
\mathbb P\big\{ \alpha_\bullet + \beta_\bullet'X \neq 0 \big| \alpha+X'\beta = 1\big\} >0. 
\]
It follows that
\begin{align*}
& \mathbb P\big\{ \theta_\bullet' \tilde X \neq 0, Y = 1  \big| \alpha + X'\beta = 1  \big\} \\
& = \mathbb P\big\{ Y = 1  \big| \theta_\bullet' \tilde X \neq 0, \alpha + X'\beta = 1  \big\} \mathbb P\big\{ \alpha_\bullet + \beta_\bullet'X \neq 0 \big| \alpha+X'\beta = 1\big\} 
>0
\end{align*}
since $\mathbb P\big\{ Y = 1  \big| \theta_\bullet' \tilde X \neq 0, \alpha + X'\beta = 1  \big\}>0$ by Assumption \ref{aspn-pos-density-Xm} (b). 
This contradicts against \eqref{pf-hessian-pd-aux-1}, and thus, at least one of $S_1$ or $S_2$ is strictly less than 1 under Case (II).
The proof is then complete.

\section{Proofs for Section \ref{section-consistency}}

\begin{proof}[Proof of Lemma \ref{lemma-cast-positive}]

For any $c\neq 0$ and any $r\in \mathbb R$, 
\begin{align*}
\dot Q_\bullet(c,r) = - \mathbb{E}\Big(  1\{U\leq V,\,c  V+r < 1\}  - 1\{U>V,\,c V+r > -1\} \Big)\begin{pmatrix} 1 \\ V \end{pmatrix}  
\end{align*}
and $\ddot Q_\bullet(c,r)$ is positive definite by essentially the same arguments as in the proof of Lemma \ref{lemma-gradient-hessian-svm}. 
Thus, $Q_\bullet(c,r)$ has a unique minimum at some $c_\ast>0,r_\ast \in \mathbb R$ over $c,r\in \mathbb R$ if and only if 
\begin{equation*} 
\dot Q_\bullet(c_\ast,r_\ast) =  - \mathbb{E}\Big(  1\{U\leq V,\,c_\ast V+r_\ast < 1\}  - 1\{U>V,\,c_\ast V+r_\ast > -1\} \Big)\begin{pmatrix} 1 \\ V \end{pmatrix}  = 0 
\end{equation*}  
for some $c_\ast>0,r_\ast \in \mathbb R$, 
which holds if and only if 
\begin{align}\label{pf-foc-c-r-soln-aux-1}
\mathbb{E}\left[ 1\{U \leq V,\, V< v_u \} \begin{pmatrix}
1 \\ V \end{pmatrix} \right]
= \mathbb{E}\left[  1\{ U>V,\, V>v_l\} \begin{pmatrix} 1 \\ V \end{pmatrix} \right] 
\end{align}
for some $v_u, v_l \in \mathbb R$ with $v_u > v_l$.

Define
\begin{align*}
 p(v) = \mathbb P\big\{U \leq V<v\big\}
\quad \mbox{and}\quad
 q(v) = \mathbb P\big\{U>V>v\big\}. 
\end{align*} 
By Assumption \ref{aspn-pos-density-Xm} (a)-(b),  
\begin{equation}\label{pf-foc-c-r-soln-aux-2}
\begin{split}
& p \  \text{is strictly increasing},  \quad \lim_{v \to -\infty } p(v) = 0, \  \lim_{v \to \infty } p(v)= \mathbb P\{Y = 1\} \ge 1/2 \\ 
& q \ \text{is strictly decreasing},  \quad \lim_{v \to -\infty } q(v) = \mathbb P\{Y=-1\} \le 1/2, \  \lim_{v \to \infty } q(v) = 0 
\end{split}
\end{equation}
Thus, $q^{-1}\big( p(v)\big)$ is well-defined for any $v<\bar v$. Moreover, 
\begin{equation}
\begin{split}
& \lim_{v\to -\infty} q^{-1}\big( p(v)\big)  = \infty, \quad \lim_{v\to \bar v -} q^{-1}\big( p(v)\big)  = -\infty \\
& \quad  q^{-1}\big( p(v)\big) \text{ is continuous and strictly decreasing in } v\in (-\infty, \bar v) .
\end{split}
\end{equation} 
Thus, the first equation in \eqref{pf-foc-c-r-soln-aux-1} holds if and only if 
\[
v_l = q^{-1}\big(  p(v_u ) \big)  \quad \mbox{and}\quad v_u \in  (-\infty, \bar v).
\]
Define 
\[
h(v) = \mathbb E 1\{U \leq V<v\} V - \mathbb E 1\{U>V>q^{-1}\big( p(v) \big) \} V
\] 
Then \eqref{pf-foc-c-r-soln-aux-1} is satisfied if and only if $h(\cdot)$ has a root in $(-\infty, \bar v)$. 
Thus, it suffices to show that $h(\cdot)$ has a root in $(-\infty, \bar v)$ if and only if Assumption \ref{aspn-suff-cond-consistency-domclass} holds.

Notice 
\begin{align}\label{pf-foc-c-r-soln-aux-h-property-1}
\lim_{v\to -\infty } h(v) = 0. 
\end{align}
Let $v_0$ be such that $p(v_0) = q(v_0)$, whose existence and uniqueness are ensured by \eqref{pf-foc-c-r-soln-aux-2}. By the definition of $v_0$, $q^{-1}\big( p(v_0)\big) = v_0$, i.e., $\mathbb E 1\{U \leq V<v_0\} = \mathbb E 1\{ U>V>v_0\}$. 
Thus, 
\begin{align}\label{pf-foc-c-r-soln-aux-h-property-2}
 &  h(v_0) = \mathbb E 1\{ U \leq V<v_0\} V - \mathbb E 1\{U>V>v_0\} V  \notag \\
& = \mathbb E 1\{ U \leq V<v_0\} V - \mathbb E 1\{U>V>v_0\} V - v_0 \Big( \mathbb E 1\{U \leq V<v_0\} - \mathbb E 1\{ U>V>v_0\} \Big) \notag \\
& = \mathbb E 1\{ U \leq V<v_0\} (V -v_0) + \mathbb E 1\{U>V>v_0\} (v_0-V) \notag \\
& < 0 , 
\end{align}
where the last inequality is because $V$ has positive Lebesgue density on $\mathbb R$ and $\Pi(V)\in (0,1)$ almost everywhere by Assumption~\ref{aspn-pos-density-Xm}(b).  
Now we consider the monotonicity of $h$. 
Let $v_2, v_1$ be fixed arbitrarily with $v_2>v_1$.  Then $q^{-1}(p(v_2)) < q^{-1}(p(v_1))$ by \eqref{pf-foc-c-r-soln-aux-2}, and it follows from 
\begin{align*}
\mathbb E 1\{U \leq V<v_1\} & = \mathbb E 1\{U>V>q^{-1}(p(v_1)) \} \\
\mathbb E 1\{U \leq V<v_2\} & = \mathbb E 1\{U>V>q^{-1}(p(v_2)) \} 
\end{align*}
that 
\begin{align*}
\mathbb E 1\{U \leq V\} 1\{ v_1 \leq V<v_2\} = \mathbb E 1\{U>V\} 1\big\{ q^{-1}(p(v_2)) < V\leq q^{-1}(p(v_1) \big\}. 
\end{align*}
Thus, 
\begin{align}\label{pf-foc-c-r-soln-aux-h-property-3}
& h(v_2) - h(v_1) \notag  \\
& = \mathbb E 1\{U \leq V\} 1\{v_1 \leq V<v_2\} V -  \mathbb E 1\{U>V\} 1\{ q^{-1}\big(p(v_2)\big) <V \leq q^{-1}\big(p(v_1)\big) \} V  \notag \\
& = \mathbb P\{U \leq V, v_1 \leq V<v_2\} \notag  \\ 
& \qquad \times \Big( \mathbb E\big(V|  U \leq V, v_1 \leq V<v_2 \big)  -  \mathbb E\big(V|  U>V, q^{-1}\big(p(v_2)\big) <V \leq q^{-1}\big(p(v_1) \big) \Big)  \notag  \\
& \begin{cases}
    >0 & \text{ if } v_2>v_1>v_0 \\
    <0 & \text{ if } v_1<v_2 <v_0
\end{cases}
\end{align}
where the last inequality is because $q^{-1}(p(v_1)) < v_0 <v_1$ when $v_0<v_1<v_2$, and $v_2 <v_0 < q^{-1}(p(v_2))< q^{-1}(p(v_1))$ when $v_1<v_2 < v_0$. 
\eqref{pf-foc-c-r-soln-aux-h-property-1}, \eqref{pf-foc-c-r-soln-aux-h-property-2}, \eqref{pf-foc-c-r-soln-aux-h-property-3} show that
\begin{align*}
\lim_{v\to -\infty } h(v) & = 0, \quad h(\cdot) \text{ decreasing on } (-\infty, v_0) \\  
h(v_0) & <0 , \quad  h(\cdot) \text{ increasing on } (v_0, \bar v)  
\end{align*}
Therefore, $h(\cdot)$ has a root in $(-\infty, \bar v)$ if and only if $h(\bar v) >0$, which was to be shown. 
\end{proof}

\begin{proof}[Proof of Lemma \ref{lemma-v-bar}] 
Note that Assumption~\ref{aspn-med0-error} implies that $\Pi(v) \geq 0.5$ if $v\geq 0$ and $\Pi(v)<0.5$ if $v<0$, which also implies $\Pi(v) < 1-\Pi(v)$ for all $v<0$. 
Moreover, $\mathcal L(V)$ has positive Lebesgue density on $\mathbb R$ by Assumption \ref{aspn-pos-density-Xm}(a). 
Recall in Assumption~\ref{aspn-suff-cond-consistency-domclass}, $\overline v = (p^{-1}\circ q)(-\infty)$ is defined as $\mathbb P \{U<V<\overline v\} = \mathbb P\{U>V\}$, which reduces to $h(\overline v)=0$, where
\begin{align}\label{pf-v-bar-aux}
h(v) =  \mathbb E\,\Pi(V)1\{V<v\}  - \mathbb E\big[1-\Pi(V)\big]
\end{align} 
by Assumption~\ref{id-dep}. Note that $h(v)$ is strictly increasing in $v$, since $\mathcal L(V)$ has positive Lebesgue density on $\mathbb R$ and $\Pi(V) \in (0,1)$ almost everywhere by Assumption~\ref{aspn-pos-density-Xm}(b). 

For Part (a), $\Pi(v) < 1-\Pi(v)$ for all $v<0$ and $\mathcal L(V)$ having positive Lebesgue density on $\mathbb R$ imply that $\mathbb E\,\Pi(V)1\{V<0\} < \mathbb E\big[1-\Pi(V)\big]1\{V<0\}$, where $\mathbb E\big[1-\Pi(V)\big]1\{V<0\} \leq \mathbb E\big[1-\Pi(V)\big]$ because $1-\Pi(v) \geq 0$ for $v>0$. 
Thus, $h(0)<0$, which implies that $\overline v>0$, since $h(v)$ is strictly increasing. 

For Part (b), denote $\mathcal L(V)$ by $P$. As $P$ decreases for $\Pi$ down to 1/2, i.e., $\int \Pi(v) P(dv)$ decreases to 1/2, 
\[
 \int \Pi(v)P(dv) - \int \big[1-\Pi(v)\big]P(dv)
\]
decreases down to 0, and therefore, $\overline v(P)$ defined as
\[
 \int \Pi(v)1\big\{v<\overline v(P)\big\}P(dv) - \int \big[1-\Pi(v)\big]P(dv) = 0
\]
increases. If $P$ decreases for $\Pi$ down to the level 
\[
 \int \Pi(v)P(dv) - \int \big[1-\Pi(v)\big]P(dv) = 0,
\]
then $\overline v(P) = \infty$, since $\mathcal L(V)$ has positive Lebesgue density on $\mathbb R$.  
This was to be shown.

Finally, for Part (c), note that
\[
 v\Pi(v) \geq v\big[1-\Pi(v)\big]
\]
for all $v\in\mathbb R$, with strict inequality for all $v<0$. It follows immediately that $\tau(\infty) = \mathbb E\,V\Pi(V) > \mathbb E\,V\big[1-\Pi(V)\big] = \sigma(-\infty)$, since $\mathcal L(V)$ has positive Lebesgue density. The proof is therefore complete.
\end{proof}

\begin{proof}[Proof of Proposition \ref{proposition-alpha-ms-consistency}]
Denote by $\beta_\ast$ the probability limit of the SVM slope estimator $\hat \beta$. Under the conditions in Theorem \ref{theorem-consistency-svm}, $\beta_\ast = c_\ast \beta_0$ 
for some $c_\ast >0$. 

We first show that
\begin{align}\label{pf-alpha-ms-obj-unif-cgce}
\sup_{\alpha\in \mathcal A } \big| \hat Q_{ms}(\alpha,\hat \beta) - \hat Q_{ms}(\alpha,\beta_\ast) \big| \to_p 0.
\end{align}
Notice that 
\begin{align*}
\big| & \hat Q_{ms}(\alpha,\hat \beta) - \hat Q_{ms}(\alpha,\beta_\ast) \big| \leq  \frac{1}{n} \sum_{i=1}^n 1\left\{ (\alpha+ \hat\beta'X_i)(\alpha+\beta_\ast'X_i) \leq 0   \right\} \\
& =  \frac{1}{n} \sum_{i=1}^n 1\left\{ (\alpha+ \beta_\ast'X_i)^2 \leq - (\alpha+\beta_\ast'X_i) X_i(\hat\beta - \beta_\ast)  \right\} \\ 
& \leq \frac{1}{n} \sum_{i=1}^n 1\left\{ |\alpha+ \beta_\ast'X_i|\leq \|X_i\| \|\hat\beta - \beta_\ast\| \right\}
\end{align*}
Then for any $\epsilon, \eta>0$, 
\begin{align}\label{pf-alpha-ms-aux-1}
\mathbb P &  \left\{ \sup_{\alpha\in \mathcal A } \big| \hat Q_{ms}(\alpha,\hat \beta) - \hat Q_{ms}(\alpha,\beta_\ast) \big| > \epsilon \right\} \notag \\
& \leq \mathbb P  \left\{ \|\hat\beta - \beta_\ast\|>\eta \right\} + \mathbb P  \left\{ \sup_{\alpha\in \mathcal A } \frac{1}{n} \sum_{i=1}^n 1\left\{ |\alpha+ \beta_\ast'X_i| \leq \|X_i\| \eta  \right\} > \epsilon \right\}. 
\end{align}
where the first term is arbitrarily small by taking $n$ large enough, since $\hat\beta\to_p \beta_\ast$. For the second term, notice 
we can choose $\eta$ small such that
\[
\sup_{\alpha\in \mathcal A } \mathbb E 1\left\{ |\alpha+ \beta_\ast'X_i|  \leq \|X_i\| \eta  \right\}  < \epsilon/2, 
\]
since $\mathcal A$ is compact and $\beta_\ast'X_i$ has a Lebesgue density. 
Then
\begin{align}\label{pf-alpha-ms-aux-2}
\mathbb P & \left\{ \sup_{\alpha\in \mathcal A } \frac{1}{n} \sum_{i=1}^n 1\left\{ |\alpha+ \beta_\ast'X_i| \leq \|X_i\| \eta  \right\} > \epsilon \right\}  \notag \\
& \leq \mathbb P  \left\{ \sup_{\alpha\in \mathcal A } \left( \frac{1}{n} \sum_{i=1}^n 1\left\{ |\alpha+ \beta_\ast'X_i|  \leq \|X_i\| \eta  \right\} - \mathbb E 1\left\{ |\alpha+ \beta_\ast'X_i| \leq\|X_i\| \eta  \right\} \right)  > \epsilon/2 \right\} \notag \\
& \to 0 \quad \mbox{as} \quad n\to \infty,
\end{align} 
where the last line follows from the Glivenko-Cantelli theorem (see, e.g., Theorem~2.4.3 and 2.6.4 in \citealp{vandervaart-wellner-96}), once provided with
\begin{align}\label{pf-alpha-ms-vc-class}
   \{x \mapsto 1\left\{ \|\alpha+\beta_\ast'x| \leq \|x\|\eta \right\} | \alpha\in \mathcal A\} \quad  \mbox{is a VC-class} .
\end{align}
Consequently, the second term in \eqref{pf-alpha-ms-aux-1} can also be made arbitrarily small by \eqref{pf-alpha-ms-aux-2}, and thus, \eqref{pf-alpha-ms-obj-unif-cgce} holds as was to be shown.  
It remains to verify \eqref{pf-alpha-ms-vc-class} to complete the proof for \eqref{pf-alpha-ms-obj-unif-cgce}. Note that $ \{ x\mapsto \alpha+\beta_\ast'x | \alpha\in \mathbb R\} $ and $\{ x\mapsto -(\alpha+\beta_\ast'x) | \alpha\in \mathbb R\} $ are VC-classes by Lemma 2.6.15 in \cite{vandervaart-wellner-96}, and thus, $ \{ x\mapsto |\alpha+\beta_\ast'x| = \max\{ \alpha+\beta_\ast'x, -(\alpha+\beta_\ast'x)\} | \alpha\in \mathbb R\} $ is VC by Lemma~2.6.18 (ii) in \cite{vandervaart-wellner-96}. Consequently, $ \{ x\mapsto |\alpha+\beta_\ast'x| - \|x\|\eta  | \alpha\in \mathbb R\} $ is VC, since $x\mapsto \|x\| \eta$ is a fixed function and by Lemma~2.6.18 (v) in \cite{vandervaart-wellner-96}, and thus, $\{x \mapsto 1\left\{ \|\alpha+\beta_\ast'x| \leq \|x\|\eta \right\} | \alpha\in  \mathbb R\}$ is VC by Lemma~2.6.18 (iii) and (iv) in \cite{vandervaart-wellner-96}.

Now we show $\hat \alpha_{ms} \to_p c_\ast \alpha_0$. Note that
\[
\sup_{\alpha\in \mathcal A }  \big| \hat Q_{ms}(\alpha,\beta_\ast) - Q_{ms}(\alpha,\beta_\ast) \big| \to_p 0 
\]
by Lemma 4 in \cite{manski-85}. Together with \eqref{pf-alpha-ms-obj-unif-cgce}, we have
\begin{align*} 
& \sup_{\alpha\in \mathcal A } \big| \hat Q_{ms}(\alpha,\hat \beta) - Q_{ms}(\alpha,\beta_\ast) \big| \\
& \leq \sup_{\alpha\in \mathcal A } \big| \hat Q_{ms}(\alpha,\hat \beta) - \hat Q_{ms}(\alpha,\beta_\ast) \big| + \sup_{\alpha\in \mathcal A }  \big| \hat Q_{ms}(\alpha,\beta_\ast) - Q_{ms}(\alpha,\beta_\ast) \big|   \to_p 0.
\end{align*}
Moreover, by Assumptions \ref{aspn-med0-error}, \ref{aspn-pos-density-Xm} and Lemma 3 in \cite{manski-85}, $\alpha_\ast := c_\ast \alpha_0$ is the unique minimum of $Q_{ms}(\alpha,\beta_\ast)$. 
Since $\mathcal A$ is compact and $Q_{ms}(\alpha,\beta_\ast)$ is continuous in $\alpha$ (see, e.g., Lemma 5 in \citealp{manski-85}), $\hat \alpha_{ms} \to_p c_\ast \alpha_0$ by Theorem 2.1 in \cite{newey-mcfadden-94}.   
\end{proof}

\begin{proof}[Proof of Theorem \ref{theorem-alpha-ms-equiv-limit-distribution}]

\cite{chen-lee-sung-14} study maximum score estimation when some regressors are conditional expectations, which are not observable and need to be estimated nonparametrically in the first stage. They show that the asymptotic distribution of the second stage MS estimator is the same as if the conditional expectations were used, provided that the conditional expectations are estimated with a convergence rate faster than $n^{1/3}$ in the first stage, along with some other technical conditions.

We will apply the results in \cite{chen-lee-sung-14} to prove Theorem \ref{theorem-alpha-ms-equiv-limit-distribution}. 
For this purpose, we need some re-parameterizations to put our second-step intercept estimator in their framework. 
Let $X_i = (X_{-m,i}', X_{m,i})'$ and $\beta = (\beta_{-m}',\beta_{m})'$ to separate the first $m-1$ components and the $m$-th component. Recall that we assumed $\beta_{m,0}>0$ without loss of generality. 
Define $\tilde{\mathcal A} = \{ \alpha/\beta_{m,\ast} | \alpha \in \mathcal A\}$
\[
\tilde \alpha_{\ast} = \arg \max_{\alpha\in \tilde{\mathcal A}} \frac{1}{n} \sum_{i=1}^n Y_i 1\bigg\{ \alpha+X_{-m,i}'\frac{\beta_{-m,0}}{\beta_{m,0}} + X_{m,i} \geq 0 \bigg\}  
\]
and note that $\tilde \alpha_{\ast} = \frac{\hat \alpha_{\ast}}{\beta_{m,\ast}}$. 
We further define 
\[
\tilde \alpha_{ms} = \arg \max_{\alpha\in \tilde{\mathcal A}} \frac{1}{n} \sum_{i=1}^n Y_i 1\bigg\{ \alpha+ X_{-m,i}'\frac{\hat \beta_{-m}}{\hat\beta_{m}} + X_{m,i}\geq 0 \bigg\}  
\]
and note that $\tilde \alpha_{ms} = \frac{\hat \alpha_{ms}}{\hat\beta_{m}}$ with probability approaching one, since $\hat\beta_{m} \to_p c_\ast \beta_{m,0} >0$. 
Thus, we can view $\hat G(x_{-m}) :=  x_{-m}'\frac{\hat \beta_{-m}}{\hat\beta_{m}}$ as an estimator for $G(x_{-m}):= x_{-m}' \frac{\beta_{-m,0}}{\beta_{m,0}}$ in the framework of \cite{chen-lee-sung-14}. 
Let $\mathcal X_{-m}$ denote the support of $X_{-m}$, which is assumed to be bounded in Assumption \ref{aspn-alpha-ms-rate}.

Applying Theorem 4.3 in \cite{chen-lee-sung-14} shows that 
\begin{equation}\label{pf-alpha-limit-distn-1}
n^{1/3}\left(\tilde \alpha_{ms} - \frac{\alpha_0}{\beta_{m,0}} \right) = n^{1/3}\left(\tilde \alpha_{\ast} - \frac{\alpha_0}{\beta_{m,0}} \right) + o_p(1)
\end{equation}
provided with Assumptions \ref{aspn-pos-density-Xm}, \ref{aspn-alpha-ms-rate}, and the following conditions which will be verified at the end of the proof. 
\begin{enumerate}[(i)] 
\itemsep -0.1mm
\item $G(x_{-m})$ has derivatives at least up to order $m-1$, and all of the derivatives are uniformly bounded by a constant; 
\item $\|\hat G-G\|_\infty := \sup_{x_{-m}\in {\mathcal X}_{-m}} |\hat G(x_{-m})-G(x_{-m})| =o_p(n^{-1/3})$. In addition, all derivatives of $\hat G$, up to order $m-1$, converge to the corresponding derivatives of $G$ under the uniform norm with rate faster than $n^{-1/3}$;
\item The distribution function $F_{\alpha + X'\beta_0/\beta_{m,0}}(\cdot)$ of $\alpha+X_{-m}'\frac{\beta_{-m,0}}{\beta_{m,0}} + X_{m} = \alpha + X'\frac{\beta_0}{\beta_{m,0}}$ is uniformly Lipschitz over $\alpha \in \tilde{\mathcal A}$. 
That is, there exists a constant $L>0$, which does not depend on $\alpha$, such that $|F_{\alpha + X'\beta_0/\beta_{m,0}}(u) - F_{\alpha + X'\beta_0/\beta_{m,0}}(v)| \leq L|u-v|$ for any $u,v\in \mathbb R$ and $\alpha\in \tilde{\mathcal A}$. 

\end{enumerate} 
Given \eqref{pf-alpha-limit-distn-1}, multiplying by $c_\ast \beta_{m,0} = \beta_{m,\ast}$ on both sides yields
\begin{equation*}
n^{1/3}\left( \hat \alpha_{ms} \frac{\beta_{m,\ast}}{\hat\beta_{m} }  - \alpha_\ast \right) = n^{1/3}\left( \hat\alpha_{\ast} - \alpha_\ast \right) + o_p(1),
\end{equation*}
and thus, 
\begin{align*}
n^{1/3}\left( \hat\alpha_{\ast} - \alpha_\ast \right) & = n^{1/3}\left( \hat \alpha_{ms} \frac{\beta_{m,\ast}}{\hat\beta_{m} }  - \alpha_\ast \right) + o_p(1) \\
& = n^{1/3}\left( \hat \alpha_{ms} - \alpha_\ast \right) +  \hat \alpha_{ms} n^{1/3}\left( \frac{\beta_{m,\ast}}{\hat\beta_{m} }  - 1  \right) + o_p(1) \\
& = n^{1/3}\left( \hat \alpha_{ms} - \alpha_\ast \right) + o_p(1) 
\end{align*}
where the last line follows from $\hat\beta \to_p \beta_{m,\ast}$, $\beta_{m,\ast}\neq 0$, and that  $\hat \alpha_{ms} \in \mathcal A$ a compact set in $\mathbb R$. This proves Theorem \ref{theorem-alpha-ms-equiv-limit-distribution}. 

It remains to verify conditions (i)-(iii).  
Condition (i) holds trivially due to the form $G(x_{-m}) = x_{-m}' \frac{\beta_{-m,0}}{\beta_{m,0}}$. 
Notice $\|\hat \beta - \beta_\ast\| = O_p(n^{-1/2})$ and $\beta_{m,\ast} = c_\ast \beta_{m,0} \neq 0$ imply
\begin{align*} 
\left\|\frac{\hat \beta}{\hat\beta_{m}} - \frac{\beta_{0}}{ \beta_{m,0} } \right\| = O_p(n^{-1/2}). 
\end{align*}
and thus, Condition (ii) is satisfied upon noticing that
\begin{align*} 
\sup_{ x_{-m}\in \mathcal X_{-m} } |\hat G(x) - G(x)| & \leq  \sup_{x_{-m}\in \mathcal X_{-m}} \|x_{-m}\|  \left\|\frac{\hat \beta_{-m}}{\hat\beta_{m}} - \frac{\beta_{-m,0}}{\beta_{m,0} } \right\|     \notag \\
& = O_p(n^{-1/2})
\end{align*}
since $\mathcal X_{-m}$ is assumed to be bounded. 
Moreover, Condition (iii) is satisfied trivially, since $\mathcal A$ is compact and the Lebesgue density $p(x_m|x_{-m})$ of $\mathcal{L}(X_m|X_{-m}=x_{-m})$ satisfies $p(x_m|x_{-m})< K$ for all $(x_m,x_{-m}) \in \mathcal X$. 
This completes the proof. 
\end{proof}

\section{Proofs for Section \ref{section-limit-distn}}

We introduce some notations for the convenience of exposition. 
Denote by $P$ the common distribution of $(Y_i,X_i')'$, and $P_n$ the empirical distribution. We write $Pf$ for the expectation $\mathbb{E}f(Y,X) = \int f dP$, and abbreviate the average $n^{-1}\sum_{i=1}^n f(Y_i,X_i)$ to $P_n f$. We also abbreviate the centered sums $n^{-1/2} \sum_{i=1}^n \big( f(Y_i,X_i) - \mathbb{E}f(Y_i,X_i) \big)$ to $G_n f$. 
Let 
\[
\ell_\theta(y,x) = \Big( 1- y(\alpha+x'\beta) \Big)_+ ,
\]
and define
\[
\dot \ell_{\theta}(y,x) = -1\{ 1-y(\alpha + x'\beta) >0 \} y \begin{pmatrix}
1 \\ x
\end{pmatrix}.
\] 
Note that $\mathbb{E} \dot \ell_{\theta_\ast}(Y,X) = \dot{Q}(\theta_\ast) = 0$ since $\beta_{m,\ast}\neq 0$ and Lemma~\ref{lemma-gradient-hessian-svm} imply that $Q$ is differentiable at $\theta_\ast$ with gradient $\dot Q(\theta_\ast)$.  
Recall 
\begin{align*}
J  
& =  \mathbb{E} 1\{1-  Y (\alpha_\ast + X'\beta_\ast) > 0\} \begin{pmatrix}
1 & X' \\ X & XX'
\end{pmatrix},
\end{align*}
where $J < \infty$ since $\mathbb{E} \|X\|^2 <\infty$.  
Thus,
\begin{align*}
G_n \dot \ell_{\theta_\ast} & = \frac{1}{\sqrt{n}}\sum_{i=1}^n \Big( \dot \ell_{\theta_\ast} (Y_i,X_i) - \mathbb{E} \dot \ell_{\theta_\ast} (Y_i,X_i)   \Big) \\
& = \frac{1}{\sqrt{n}}\sum_{i=1}^n  \dot \ell_{\theta_\ast} (Y_i,X_i)  \to_d N(0, J ) 
\end{align*}
as $n \to \infty$. 
Note that $Q(\theta) = P\ell_{\theta}$, $Q_n(\theta) = P_n \ell_{\theta} + \frac{\lambda_n}{n} \|\beta\|^2$, and 
\[
\sqrt{n}\Big( (Q_n - Q )(\theta) - (Q_n - Q )(\theta_\ast) \Big) =  G_n(\ell_{\theta} -\ell_{\theta_\ast}) + \frac{\lambda_n}{\sqrt{n}} \Big( \|\beta\|^2 - \|\beta_{\ast}\|^2 \Big).
\]

\begin{proof}[Proof of Lemma \ref{lemma-svm-cgce-rate-aux}]
By Lemma \ref{lemma-gradient-hessian-svm} and $\beta_{m,\ast}\neq 0$, $P \ell_{\theta}$ admits a second-order Taylor expansion at the point of minimum $\theta_\ast$ a with nonsingular Hessian matrix $\ddot{Q}(\theta_\ast)$. 
For any $\theta_1 = (\alpha_1,\beta_1')'$ and $\theta_2= (\alpha_2,\beta_2')'$, we have 
\begin{align}\label{m_theta-lipschitz}
|\ell_{\theta_1}(y,x) -\ell_{\theta_2}(y,x) | & = \Big| \Big( 1- y(\alpha_1+x'\beta_1) \Big)_+ - \Big( 1- y(\alpha_2+x'\beta_2) \Big)_+ \Big| \notag \\
&\leq \Big| y(\alpha_1+x'\beta_1) - y(\alpha_2+x'\beta_2) \Big| = \Big|\alpha_1 - \alpha_2 + x'(\beta_1 -\beta_2)\Big|  \notag  \\
&\leq \sqrt{ 1+ \|x\|^2} \|\theta_1-\theta_2\|,  
\end{align}
where the first inequality is due to the fact that $|a_+-b_+| \leq |a-b|$ for any real numbers $a,b$. Thus, $\theta \mapsto \ell_\theta$ is Lipschitz.
Together with $\mathbb{E}(1+\|X\|^2) <\infty$, it implies that,  
for every $n$ and every sufficiently small $\delta>0$, 
\begin{align}\label{eq-modulus-aux-1}
\mathbb{E} \sup_{ \|\theta-\theta_{\ast}\| <\delta } \sqrt{n} \big| (P_n - P )(\ell_\theta) - (P_n - P )(\ell_{\theta_{\ast}}) \big|  \lesssim \delta  
\end{align}
by the proof in Corollary 5.53 of \cite{vandervaart-00}, where the notation $\lesssim$ means ``is bounded above up to a universal constant''.

Moreover, since
\begin{align*}
\Big| \|\beta\|^2 -\|\beta_{\ast}\|^2 \Big| & =  \Big( \|\beta\|  +\|\beta_{\ast}\| \Big) \big| \|\beta\|  -\|\beta_{\ast}\| \big|   \leq \Big( 2 \|\beta_{\ast}\| + \|\beta - \beta_{\ast}\|  \Big) \|\beta - \beta_{\ast}\| \\
& \leq \Big( 2 \|\beta_{\ast}\| + \|\theta - \theta_{\ast}\|  \Big) \|\theta - \theta_{\ast}\|,
\end{align*}
we have
\begin{align}\label{eq-modulus-aux-2}
\frac{\lambda_n}{\sqrt{n}} \sup_{ \|\theta-\theta_{\ast}\| <\delta } \Big| \|\beta\|^2 -\|\beta_{\ast}\|^2 \Big|
\leq \frac{\lambda_n}{\sqrt{n}}   \big( 2 \|\beta_{\ast}\| + \delta \big) \delta \lesssim \delta,
\end{align}
for every $n$ and every $\delta>0$ sufficiently small, since $\lambda_n/\sqrt{n} =o(1)$ and $\beta_{\ast} \neq 0$. 
Combining \eqref{eq-modulus-aux-1} and \eqref{eq-modulus-aux-2} yields
\begin{align*}
\mathbb{E}  & \sup_{ \|\theta-\theta_{\ast}\| <\delta } \sqrt{n} \big| (Q_n - Q )(\theta) - (Q_n - Q )(\theta_{\ast}) \big|  \\
&\leq \mathbb{E} \sup_{ \|\theta-\theta_{\ast}\| <\delta } \sqrt{n} \big| (P_n - P )(\ell_\theta) - (P_n - P )(\ell_{\theta_{\ast}}) \big| + \frac{\lambda_n}{\sqrt{n}} \sup_{ \|\theta-\theta_{\ast}\| <\delta } \Big| \|\beta\|^2 -\|\beta_{\ast}\|^2 \Big| \\
& \lesssim \delta.
\end{align*}
for every $n$ and every sufficiently small $\delta>0$. Lemma \ref{lemma-svm-cgce-rate-aux} follows immediately.
\end{proof}

\begin{proof}[Proof of Lemma \ref{lemma-svm-cgce-rate}]

By Lemma \ref{lemma-gradient-hessian-svm} and that $\beta_{m,\ast}\neq 0$, $Q(\theta)$ is twice continuously differentiable at the point of minimum $\theta_{\ast}$ with positive definite Hessian matrix $H = \ddot{Q}(\theta_{\ast})$. 
Thus, for every $\theta$ in a neighborhood of $\theta_{\ast}$, we have
\[
Q(\theta) -Q(\theta_{\ast}) \gtrsim \|\theta-\theta_{\ast}\|^2.
\]
By Lemma \ref{lemma-svm-cgce-rate-aux},
\begin{align*}
\mathbb{E} \sup_{ \|\theta-\theta_{\ast}\| <\delta } \sqrt{n} \big| (Q_n - Q )(\theta) - (Q_n - Q )(\theta_\ast) \big|  \lesssim \delta.
\end{align*} 
Since $\hat\theta \to_p \theta_\ast$ as established in Lemma \ref{lemma-plim-theta},  
applying Theorem 3.2.5 in \cite{vandervaart-wellner-96} gives immediately that $\sqrt{n}(\hat\theta -\theta_\ast) = O_p(1)$. 
\end{proof}

\begin{proof}[Proof of Theorem \ref{theorem-svm-limit-distn}]

For notational simplicity, we denote by $H := \ddot Q(\theta_\ast)$.

We first show that
\begin{align}\label{svm-limit-distn-aux}
n\Big( Q_n(\theta_\ast + \tilde h_n /\sqrt{n} ) - Q_n(\theta_\ast) \Big) = \frac{1}{2}  \tilde h_n' H  \tilde h_n +  \tilde h_n' G_n \dot \ell_{\theta_\ast} +o_p(1),
\end{align}
for any random sequence $\tilde h_n = O_p(1)$. 

Let $\tilde h_n =O_p(1)$ be an arbitrary sequence of $(m+1)$-dimensional random vector. Denote by $\tilde h_{n,-1}$ the random vector of the first $m$-elements of $\tilde h_{n}$. Notice
\begin{align}\label{limit-distn-aux-decomposition}
n \Big( Q_n(\theta_\ast + \tilde h_n /\sqrt{n} ) - Q_n(\theta_\ast) \Big) & = A_n +B_n +C_n + \frac{1}{2}\tilde h_n' H \tilde h_n + \tilde h_n' G_n \dot \ell_{\theta_\ast} 
\end{align}
where
\begin{align*}
A_n & := G_n\Big( \sqrt{n}\big(  \ell_{\theta_\ast+\tilde h_n/\sqrt{n}} - \ell_{\theta_\ast} \big) \Big) -\tilde h_n' G_n \dot \ell_{\theta_\ast} \\
B_n & := nP\big( \ell_{\theta_\ast+\tilde h_n/\sqrt{n} } -\ell_{\theta_\ast} \big)  -\frac{1}{2}\tilde h_n' H \tilde h_n \\
C_n &:= \lambda_n\Big( \|\beta_\ast + \tilde h_{n,-1}/\sqrt{n}\|^2 - \|\beta_\ast \|^2\Big).
\end{align*}  
Since $\theta \mapsto \ell_{\theta}$ is Lipschitz in the sense of \eqref{m_theta-lipschitz}, applying Lemma 19.31 in \cite{vandervaart-00} gives that 
\begin{align}\label{limit-distn-aux-An}
A_n  = G_n\Big( \sqrt{n}\big(  \ell_{\theta_\ast+\tilde h_n/\sqrt{n}} - \ell_{\theta_\ast} \big) \Big) -\tilde h_n' G_n \dot \ell_{\theta_\ast} 
& = o_p (1)
\end{align}
for any random sequence $\tilde h_n =O_p(1)$. 
For any $K>0$,
\begin{align}\label{limit-distn-aux-Cn-aux}
\sup_{\|g\| <K } & \Big|  \lambda_n\Big( \|\beta_\ast + g/\sqrt{n}\|^2 - \|\beta_\ast \|^2\Big) \Big|  \leq \sup_{\|g\| <k } \lambda_n \Big( 2\|\beta_\ast\|^2 + \frac{\|g\|^2}{\sqrt{n}} \Big) \frac{\|g\|}{\sqrt{n}} \notag \\
& \leq \frac{\lambda_n}{\sqrt{n}} \Big( 2\|\beta_\ast\|^2 + \frac{K^2}{\sqrt{n}} \Big) K = o(1),
\end{align}
due to $\lambda_n = o(\sqrt{n})$, and
\begin{align}\label{limit-distn-aux-Bn-aux}
\sup_{\|h\|<K} \Big| nP\big( \ell_{\theta_\ast+h/\sqrt{n} } -\ell_{\theta} \big)  -\frac{1}{2}h' H h \Big|
= \sup_{\|h\|<K} o( \|h\| ) = o(1),
\end{align}
which follows from that $\theta \mapsto P\ell_\theta =Q(\theta)$ is twice continuously differentiable at $\theta_\ast$ with gradient $\dot{Q}(\theta_\ast) =0$ and positive definite Hessian matrix $H = \ddot{Q}(\theta_\ast)$, as established in Lemma \ref{lemma-gradient-hessian-svm}. 
Therefore, for any $\tilde h_n =O_p(1)$, it follows from  \eqref{limit-distn-aux-Bn-aux} and the stochastic boundedness $\tilde h_n =O_p(1)$ that
\begin{align}\label{limit-distn-aux-Bn}
B_n = nP\big( \ell_{\theta_\ast+\tilde h_n/\sqrt{n} } -\ell_{\theta_\ast} \big)  -\frac{1}{2}\tilde h_n' H \tilde h_n  
& =o_p(1)
\end{align}
and from \eqref{limit-distn-aux-Cn-aux} that
\begin{align}\label{limit-distn-aux-Cn}
C_n  = \lambda_n\Big( \|\beta_\ast + \tilde h_{n,-1}/\sqrt{n}\|^2 - \|\beta_\ast \|^2\Big)  
& =o_p(1).
\end{align}
Combining \eqref{limit-distn-aux-An}, \eqref{limit-distn-aux-Bn}, \eqref{limit-distn-aux-Cn} with \eqref{limit-distn-aux-decomposition} yield
\begin{align*}
n \Big( Q_n(\theta_\ast + \tilde h_n /\sqrt{n} ) - Q_n(\theta_\ast) \Big) 
& = \frac{1}{2}\tilde h_n' H \tilde h_n + \tilde h_n' G_n \dot \ell_{\theta_\ast} +o_p(1).
\end{align*}
This proves \eqref{svm-limit-distn-aux}.

Now we prove Theorem \ref{theorem-svm-limit-distn}. For this, we let $\hat h_n = \sqrt{n}(\hat\theta -\theta_\ast)$. Notice $\hat h_n =O_p(1)$ by Lemma \ref{lemma-svm-cgce-rate}, and thus, 
\begin{align}\label{limit-distn-aux-1}
n\Big( Q_n(\theta_\ast + \hat h_n /\sqrt{n} ) - Q_n(\theta_\ast) \Big) = \frac{1}{2}  \hat h_n' H \hat h_n +  \hat h_n' G_n \dot \ell_{\theta_\ast} +o_p(1)
\end{align}
by \eqref{svm-limit-distn-aux}. 
Note also that $H:=\ddot{Q}(\theta_\ast)$ is positive definite by Lemma \ref{lemma-gradient-hessian-svm}, and $G_n \dot \ell_{\theta_\ast} \to_d N(0,J)$ where $J <\infty$. 
Thus, $-H^{-1}G_n \dot \ell_{\theta_\ast} = O_p(1)$. 
Applying \eqref{svm-limit-distn-aux} to $\tilde h_n = -H^{-1}G_n \dot \ell_{\theta_\ast}$ and rearranging terms yield
\begin{align}\label{limit-distn-aux-2}
n\Big( Q_n\big(\theta_\ast -H^{-1}G_n \dot \ell_{\theta_\ast} /\sqrt{n} \big) - Q_n(\theta_\ast) \Big) = - \frac{1}{2} \big(G_n \dot \ell_{\theta_\ast}\big)' H^{-1} G_n \dot \ell_{\theta_\ast} +o_p(1).
\end{align}
Since $\hat h_n$ minimize $Q_n(\theta_\ast + h/\sqrt{n})$, we have
\begin{align*}
n\Big( Q_n(\theta_\ast + \hat h_n /\sqrt{n} ) - Q_n(\theta_\ast) \Big) \leq n\Big( Q_n(\theta_\ast -H^{-1}G_n \dot \ell_{\theta_\ast} /\sqrt{n} ) - Q(\theta_\ast) \Big).
\end{align*}
Together with \eqref{limit-distn-aux-1} and \eqref{limit-distn-aux-2}, we have
\begin{align*}
\frac{1}{2}  \hat h_n' H \hat h_n +  \hat h_n' G_n \dot \ell_{\theta_\ast} +o_p(1) \leq - \frac{1}{2} \big(G_n \dot \ell_{\theta_\ast}\big)' H^{-1} G_n \dot \ell_{\theta_\ast} +o_p(1),
\end{align*}
from which it follows that
\begin{align*}
\frac{1}{2}  \Big( \hat h_n + H^{-1} G_n \dot \ell_{\theta_\ast} \Big)' H \Big( \hat h_n + H^{-1} G_n \dot \ell_{\theta_\ast} \Big) \leq o_p(1).
\end{align*}
Since $H := \ddot{Q}(\theta_\ast)$ is positive definite as established in Lemma \ref{lemma-gradient-hessian-svm}, the quadratic form is non-negative. Hence,
\[
\hat h_n + H^{-1} G_n \dot \ell_{\theta_\ast} \to_p 0.
\]
Therefore,
\begin{align*}
\sqrt{n}(\hat\theta -\theta_\ast)  = \hat h_n  = -H^{-1} G_n \dot \ell_{\theta_\ast}+o_p(1) \to_d N(0, H^{-1}J H^{-1})
\end{align*}
as was to be shown. 
\end{proof}

\section{Proof for Theorem \ref{theorem-consistency-class-wgt-svm}}

\begin{proof}[Proof for Theorem \ref{theorem-consistency-class-wgt-svm}] 
Using the same arguments when proving the consistency of SVM, it suffices to show that, there exists $c>0, r\in \mathbb R$ such that
\begin{align*}
\mathbb E 1\{U\leq V, cV+r <1\} \begin{pmatrix}
    1\\ V
\end{pmatrix} 
= w \mathbb E 1\{U>V, cV+r >-1\} \begin{pmatrix}
    1\\ V
\end{pmatrix} 
\end{align*}
which holds if and only if 
\begin{align}\label{pf-consistency-wgt-svm-aux-1}
\mathbb{E}\left[ 1\{U \leq V,\, V< v_u \} \begin{pmatrix}
1 \\ V \end{pmatrix} \right]
= w\mathbb{E}\left[  1\{ U>V,\, V>v_l\} \begin{pmatrix} 1 \\ V \end{pmatrix} \right] 
\end{align}
for some $v_u, v_l \in \mathbb R$ with $v_u > v_l$.

Following the proof of Lemma \ref{lemma-cast-positive}, define 
\begin{align*}
 p(v) = \mathbb P\big\{U \leq V<v\big\}
\quad \mbox{and}\quad
 q_w(v) = w \mathbb P\big\{U>V>v\big\}. 
\end{align*} 
Note that $\lim_{v \to -\infty } q_w(v) =w\mathbb P\{Y=-1\} =\mathbb P\{Y = 1\}$ by the definition of $w$, and 
\begin{equation}\label{pf-consistency-wgt-svm-aux-2}
\begin{split}
& p \  \text{is strictly increasing},  \quad \lim_{v \to -\infty } p(v) = 0, \  \lim_{v \to \infty } p(v)= \mathbb P\{Y = 1\} \ge 1/2 \\ 
& q_w \ \text{is strictly decreasing},  \quad \lim_{v \to -\infty } q_w(v) =\mathbb P\{Y = 1\}, \  \lim_{v \to \infty } q_w(v) = 0  .
\end{split}
\end{equation}
Thus, $q_w^{-1}\big( p(v)\big)$ is well-defined for any $v \in \mathbb R$, and 
\begin{equation}\label{pf-consistency-wgt-svm-aux-3}
\begin{split}
& \lim_{v\to -\infty} q_w^{-1}\big( p(v)\big)  = \infty, \quad \lim_{v\to \infty} q_w^{-1}\big( p(v)\big)  = -\infty \\
& \quad  q_w^{-1}\big( p(v)\big) \text{ is continuous and strictly decreasing in } v\in \mathbb R .
\end{split}
\end{equation} 
Therefore, the first equation in \eqref{pf-consistency-wgt-svm-aux-1} holds if and only if 
\[
v_l = q_w^{-1}\big(  p(v_u ) \big)  \quad \mbox{and}\quad v_u \in \mathbb R. 
\]
Define 
\[
h_w(v) = \mathbb E 1\{U \leq V<v\} V - w \mathbb E 1\{U>V>q_w^{-1}\big( p(v) \big) \} V .
\] 
Then \eqref{pf-consistency-wgt-svm-aux-1} is satisfied if and only if $h_w(\cdot)$ has a root in $\mathbb R$.

It remains to show that $h_w(\cdot)$ has a root in $\mathbb R$. 
Let $v_0$ be such that $p(v_0) = q_w(v_0)$, whose existence and uniqueness are ensured by \eqref{pf-consistency-wgt-svm-aux-2}. By the definition of $v_0$, $q_w^{-1}\big( p(v_0)\big) = v_0$, i.e.,  $\mathbb E 1\{U \leq V<v_0\} = w\mathbb E 1\{ U>V>v_0\}$. 
Thus, 
\begin{align}\label{pf-foc-c-r-soln-aux-h-wgt-property-2}
h_w(v_0) & = \mathbb E 1\{ U \leq V<v_0\} V - w\mathbb E 1\{U>V>v_0\} V  \notag \\ 
& = \mathbb E 1\{ U \leq V<v_0\} (V -v_0) + w\mathbb E 1\{U>V>v_0\} (v_0-V) \notag \\
& < 0 
\end{align}
where the last inequality is because $V$ has positive Lebesgue density on $\mathbb R$. 
Moreover, by \eqref{pf-consistency-wgt-svm-aux-3}, 
\begin{align*}
& \lim_{v\to \infty} h_w(v) = \mathbb E 1\{U \leq V\} V - w \mathbb E 1\{U>V  \} V \\
& = \mathbb P\{U\leq V\} \mathbb E(V|U\leq V) - w \mathbb P\{U> V\} \mathbb E(V|U> V) \\
& = \mathbb P\{U\leq V\} \Big( \mathbb E(V|U\leq V) - \mathbb E(V|U> V) \Big) >0
\end{align*}
where the third equality is by $w \mathbb P\{U> V\} = w\mathbb P\{Y=-1\} = \mathbb P\{Y=1\} = \mathbb P\{U\leq V\}$. 
Together with \eqref{pf-foc-c-r-soln-aux-h-wgt-property-2} and the continuity of $h_w$, $h_w(\cdot)$ has a root in $\mathbb R$. 
This was to be shown. 
\end{proof}

\clearpage 
\bibliographystyle{econometrica}
\bibliography{bcm-svm}

\newpage

\end{document}